\documentclass{amsart}[11pt]
\usepackage{amssymb,amsmath,amsthm}
\usepackage[left=3.7cm,right=3.7cm,top=1.5cm,bottom=2.5cm]{geometry}
\usepackage[T1]{fontenc}   
\usepackage{graphicx}
\usepackage{float}
\usepackage{booktabs}
\usepackage{array}
\usepackage{hyperref}
\usepackage{mathtools}
\usepackage{xcolor}
\usepackage{enumerate}
\usepackage{bbm}
\usepackage{todonotes}

\usepackage{array}
\newcolumntype{H}{>{\setbox0=\hbox\bgroup}c<{\egroup}@{}}

\numberwithin{equation}{section}

\newtheorem{theorem}{Theorem}%[section]
\numberwithin{theorem}{subsection}
\newtheorem{proposition}{Proposition}
\numberwithin{proposition}{subsection}
\newtheorem{definition}{Definition}
\numberwithin{definition}{subsection}
\newtheorem{lemma}{Lemma}
\numberwithin{lemma}{subsection}
\newtheorem*{remark}{Remark}

\newtheorem*{example}{Example}
\newtheorem{assumption}{Assumption}
\numberwithin{assumption}{subsection}

\newcommand{\Xx}{\mathcal{X}}
\newcommand{\Yy}{\mathcal{Y}}
\newcommand{\Aa}{\mathcal{A}}
\newcommand{\Cc}{\mathcal{C}}
\newcommand{\Dd}{\mathcal{D}}
\newcommand{\Ff}{\mathcal{F}}
\newcommand{\Gg}{\mathcal{G}}
\newcommand{\Ll}{\mathcal{L}}

\newcommand{\HH}{\mathbb{H}}
\newcommand{\PP}{\mathbb{P}}
\newcommand{\QQ}{\mathbb{Q}}
\newcommand{\RR}{\mathbb{R}}
\newcommand{\NN}{\mathbb{N}}
\newcommand{\Lf}{\mathfrak{L}}

\newcommand{\R}{\RR}
\newcommand{\EE}{\mathbb{E}}

\newcommand{\VV}{\mathbb{V}}
\newcommand{\ct}{\circ_{t}}
\newcommand{\cT}{\circ_{T}}
\newcommand{\Wf}{\boldsymbol{\mathrm{W}}}

\newcommand{\Yf}{\boldsymbol{\mathrm{Y}}}
\newcommand{\vrho}{\boldsymbol{\mathrm{\varrho}}}
\newcommand{\rrho}{\overline{\rho}}
\newcommand{\Wp}{W^{\perp}}
\newcommand{\brho}{\overline{\rho}}
\newcommand{\D}{\mathrm{d}}
\newcommand{\E}{\mathrm{e}}
\newcommand{\II}{\mathrm{I}}
\newcommand{\IIV}{\mathrm{I}^{V}}
\newcommand{\IIW}{\mathrm{I}^{W}}
\newcommand{\IIX}{\mathrm{I}^{X}}
\newcommand{\IIXV}{\mathrm{I}^{X,V}}

\newcommand{\LDP}{\mathrm{LDP}}
\newcommand{\eps}{\varepsilon}
\newcommand{\seps}{\sqrt{\eps}}

\newcommand{\ww}{\boldsymbol{\mathrm{w}}}
\newcommand{\yy}{\boldsymbol{\mathrm{y}}}
\newcommand{\hh}{\boldsymbol{\mathrm{h}}}
\newcommand{\xx}{\boldsymbol{\mathrm{x}}}
\newcommand{\hm}{\mathrm{h}}
\newcommand{\xm}{\mathrm{x}}
\newcommand{\ind}{1\hspace{-2.1mm}{1}} %Indicator Function

\newcommand{\half}{\frac{1}{2}}

\DeclareMathOperator*{\argmax}{arg\,max}

% Comments

\begin{document}

\title{Large and moderate deviations for importance sampling in the Heston model}
\author{Marc Geha}
\address{ORFE Department, Princeton University}
\email{mgeha@princeton.edu}
\author{Antoine Jacquier}
\address{Department of Mathematics, Imperial College London, and Alan Turing Institute}
\email{a.jacquier@imperial.ac.uk}
\author{\v{Z}an \v{Z}uri\v{c}}
\address{Department of Mathematics, Imperial College London}
\email{z.zuric19@imperial.ac.uk}
\thanks{AJ acknowledges financial support from the EPSRC/T032146 grant. \v{Z}\v{Z} is supported by the EPSRC/S023925 CDT in Mathematics of Random Systems: Analysis, Modelling and Simulation.}
\keywords{Heston, volatility, importance sampling, large deviations, moderate deviations}
\subjclass[2010]{60F10, 65C05, 91G20}

\date{\today}
\maketitle

\begin{abstract}
We provide a detailed importance sampling analysis for variance reduction in stochastic volatility models.
The optimal change of measure is obtained using a variety of results from large and moderate deviations: small-time, large-time, small-noise.
Specialising the results to the Heston model, we derive many closed-form solutions, 
making the whole approach easy to implement.
We support our theoretical results with  a detailed numerical analysis of the variance reduction gains.
\end{abstract}

\tableofcontents

%%%%%%%%%%%%%%%%%%%%%%%%%%%%%%%%%%%%%%%%%%%%%%%%%
\section{Introduction and general overview}

%%%%%%%%%%%%%%%%%%%%%%%%%%%%%%%%%%%%%%%%%%
\subsection{Introduction}
Monte Carlo simulation is the  standard (if not the only) technique for most numerical problems in stochastic modelling.
It has a long history and has been successfully applied in many fields, such as biology~\cite{manly2018randomization}, statistical Physics~\cite{binder2012monte}, Finance~\cite{glasserman2004monte} among others.
The default order of magnitude for the variance of the estimator is $\mathcal{O}(N^{-1/2})$ with~$N$ the number of sample paths.
It has long been recognised though that several tricks achieve lower variance with equivalent (hopefully zero) bias; among those antithetic variables and importance sampling have become ubiquitous.
We focus on the latter, for which large and moderate deviations (LDP and MDP) provide closed-form formulae, 
making their applications pain-free and without additional computer costs.

The first attempt to reduce the variance of a Monte Carlo estimator based on asymptotics probably originated, rather heuristically, in~\cite{Siegmund1976}. 
This was made rigorous later by Glasserman and Wang~\cite{Glasserman1997} who also highlighted pitfalls of the method and later by Dupuis and Wang~\cite{dupuis2004importance}, who provided clear explanations on the trade-off between asymptotic approximations and the restrictions they entail on the induced change of measure.
Guasoni and Robertson~\cite{Guasoni2007} 
put this into practice for out-the-money path-dependent options in the Black-Scholes models, 
and Robertson~\cite{Robertson2010} developed a thorough analysis for the Heston model using sample path large deviations.
This is our starting point,
and the goal of our current enterprise is to analyse different asymptotic regimes 
(small-time, large-time, small-noise),
both in the large deviations and in the moderate deviations regimes, in the Heston model 
and to show how these yield closed-form formulae for an optimal change of measure for importance sampling purposes.

We propose, in particular, a specific form of adaptive drift, allowing for fast computation
and increase in variance reduction.
For  geometric Asian Call options in the Heston model, MDP-based estimators with deterministic changes of drift turn out to be no better than those computed with deterministic volatility approximation in the LDP approach. 
However, MDP-based estimators with adaptive changes of drift perform much better than their LDP- counterparts with deterministic volatility approximation, 
and in fact show a performance very close to the LDP-based estimators in Heston. 
These adaptive MDP-based estimators therefore provide an efficient alternative in models where 
LDP is difficult to compute.

\textbf{Setting and notations} Throughout this paper we work on a filtered probability space $(\Omega, \mathcal{F}, \mathbb{P}, \mathbb{F})$ with a finite time horizon $T>0$, where $\Omega = \Cc([0,T]\to \RR^2)$ is the space of all continuous functions, $\mathcal F$ is the Borel-$\sigma$-algebra on $\Omega$ and $\mathbb{F}\coloneqq(\mathcal{F}_t)_{t\in[0,T]}$ is the natural filtration of a given two-dimensional standard Brownian motion~$\Wf \coloneqq (W, W^{\perp})$. 
For a pair of (possibly deterministic) process~$(X,Y)$, where $X$ is predictable and $Y$ a semi-martingale, 
we write the stochastic integral $X\circ Y := \int_{0}^{\cdot} X_s \D Y_s$ and $X\ct W := (X\circ W)_t$ for any $t\in[0,T]$. We denote any $d$-dimensional path by $\hh\coloneqq (h_1,\dots,h_d)$ for $d\in\mathbb{N}$, 
and for such a path, 
we write $\|\hh\|_T^2 := \int_{0}^{T}\left(|h_1(t)|^2+\dots + |h_d(t)|^2\right)\D t$. 
We denote $\HH^0_T$ the Cameron-Martin space of Brownian motion, isomorphic to the space of absolutely continuous functions $\mathcal{AC}([0,T])$. We define a similar space $\smash{\HH^{x}_T \coloneqq \{\varphi \in \Cc([0,T]\to \RR^d): \varphi_{t}=x+\int_{0}^{t} \psi_{s} \D s, \; \psi \in L^{2}\left([0, T] ; \RR^{d}\right)\} }$ for processes starting at $x\in\RR^d$ and a subspace $\HH_T^{x,+}\subset\HH_T^x$ where functions map to $\R^{+,d}$ instead of $\R^d$. 
Whenever a variable has an obvious time-dependence, we drop the explicit reference in the notation.
We shall also write $\Cc_T:=\Cc([0,T]\to\RR)$ to simplify statements.
We write $\{X^\eps\}\sim\LDP(\IIX, \Cc_T)$ to mean that the sequence~$\{X^\eps\}$ satisfies a large deviations principle as~$\eps$ tends to zero on~$\Cc_T$ with good rate function~$\IIX$.
For a given function~$f$, we denote by 
$\Dd(f)$ its effective domain.
We shall finally denote
$\RR^+:=(0,\infty)$.

%%%%%%%%%%%%%%%%%%%%%%%%%%%%%%%%%%%%%%%%%%
\subsection{Overview of the importance sampling methodology}
We consider a given risk-neutral probability measure~$\PP$, so that
the fundamental theorem of asset pricing
implies that the price of an option with payoff $G\in L^2([0,T];\RR)$ is equal to $\EE^{\PP}[G]$. 
While, strictly speaking, 
we do not need $L^2([0,T];\RR)$ for pricing purposes, we require it to estimate the variance of payoff estimators.
Monte-Carlo estimators rely on the (strong) law of large number, whereby for iid samples $\{G_{i}\}_{1\leq i \leq n}$ from $\PP \circ G^{-1}$, 
the empirical mean $\widehat{G}_n :=\frac{1}{n}\sum_{i=1}^{n}G_i$
converges to the true expectation $\PP$-almost surely:
$$
\lim_{n\uparrow \infty}\widehat{G}_n = \EE^{\PP}[G],
\qquad\PP\text{-a.s.}\,.
$$
Importance sampling is a method to reduce the variance of the estimator~$\widehat{G}_n$, yielding a new law~$\QQ$ such that $\EE^{\QQ}[G] = \EE^{\PP}[G]$
and $\VV^{\QQ}[G] < \VV^{\PP}[G]$
(and of course both the equality and inequality remain true with~$G$ replaced by~$\widehat{G}_n$).
Let for example $Z \coloneqq \frac{\D\QQ}{\D\PP}$ denote the Radon-Nikodym derivative of the change of measure, so that
$\EE^{\PP}[G] = \EE^{\QQ}[G Z^{-1}]$.								 
The variance of the Monte-Carlo estimator based on iid samples of~$\widehat{G}_n Z^{-1}$ under~$\QQ$ is then
$$
\VV^{\QQ}\left[\widehat{G}_n Z^{-1}\right] 
 = \EE^{\QQ}\left[\widehat{G}_n^2 Z^{-2}\right]  - \EE^{\QQ}\left[\widehat{G}_n Z^{-1}\right]^2
 = \EE^{\PP}\left[\widehat{G}_n^2 Z^{-1}\right]  - \EE^{\PP}\left[\widehat{G}_n Z^{-1}\right]^2.
$$
If~$Z$ is chosen such that $\EE^{\PP}[\widehat{G}_n^2 Z^{-1}] < \EE^{\PP}[\widehat{G}_n^2]$, 
the variance is thus reduced.
Finding such~$Z$ however is usually hard, and  
we shall instead consider the approximation
\begin{equation}\label{eq:proxy}
\EE^{\PP}\left[\widehat{G}_n^2 Z^{-1}\right]
\approx \eps\log \EE^{\PP}\left[\exp\left\{\frac{1}{\eps}\log(G_{\eps}^2 \ Z_{\eps}^{-1})\right\}\right] 
\end{equation}
for small~$\eps>0$, for two random variables~$G_{\eps}$ and~$Z_{\eps}$ whose choices will be discussed later.
The computation of this expression is then further simplified by the use of the Varadhan's lemma (Theorem~\ref{thm:varadhan}), which casts the problem into a deterministic optimisation over the appropriate Cameron-Martin space.

%%%%%%%%%%%%%%%%%%%%%%%%%%%%%%%%%%%%%%%%%%%
\subsection{Choosing an approximated random variable~$G_{\eps}$}\label{sec:Proxies}
Consider a payoff of the form $G = G(X)$, where~$X$ is
the unique strong solution to the stochastic differential equation
\begin{equation}\label{eq:SDE}
\D X_t = b(X_t) \D t + \sigma(X_t) \D W_t, \quad X_0 = x_0,
\end{equation}
where $b,\sigma:\RR\rightarrow \RR$ are sufficiently well-behaved
deterministic functions and~$W$ is a standard Brownian motion. 
The approximation of~$G$ is then defined as  $G_{\eps}:=G(X^{\eps})$, 
where the following are possible approximations of~$X$:
\begin{definition}
Let~$X$ be the unique strong solution to~\eqref{eq:SDE}.
The process~$X^\eps$ is called
\begin{enumerate}[i)]
\item Small-noise approximation if 
\begin{equation}\label{eq:small_noise}
\D X_t^\eps = b(X_t^\eps) \D t + \sqrt{\eps} \sigma(X_t^\eps)\D W_t, \quad X_0^\eps = x_0.
\end{equation}

\item Small-time approximation if 
\begin{equation}\label{eq:small_time}
\D X_t^\eps = \eps b(X_t^\eps) \D t + \sqrt{\eps} \sigma(X_t^\eps) \D W_t, \quad X_0^\eps = x_0
\end{equation}
\item Large-time approximation if 
\begin{equation}\label{eq:large_time}
\D X_t^\eps = \frac{1}{\eps} b(X_t^\eps) \D t + \frac{1}{\sqrt{\eps}} \sigma(X_t^\eps) \D W_t, \quad X_0^\eps = x_0.
\end{equation}
\end{enumerate}
\end{definition}
The terminology here is straightforward as~\eqref{eq:small_time} follows from~\eqref{eq:SDE} via the mapping $t \mapsto \eps t$
and~\eqref{eq:large_time} follows from~\eqref{eq:SDE} via the mapping
$X_t \mapsto X_t^\eps:=X_{t/\eps}$.
The small-noise~\eqref{eq:small_noise}
comes directly from the early works 
on random perturbations of deterministic systems by Varadhan~\cite{Varadhan1967} and Freidlin-Wentzell~\cite{Freidlin2012}.

%\begin{remark}
%In this paper, we consider approximations for both price and log-price of small-noise and small-time type. As we will see, the results are often similar, but not always. 
%As for the large-time approximation, it is straightforward to see the approach only makes sense at the log-price level.
%\red{the last sentence is not clear.} \blue{see the remark after Definition 2.3.1. Perhaps we just comment out this remark}
%\end{remark}

%%%%%%%%%%%%%%%%%%%%%%%%%%%%%%%%%%%%%%%%%%%
\subsection{General approach}
%In all this paper, we deal with variance reduction for option pricing. 
We consider an asset price $S\coloneqq \{S_t\}_{t\in[0,T]}$ and the corresponding log-price process $X:=\log(S) \coloneqq \{X_t\}_{t\in[0,T]}$, with dynamics
\begin{equation}\label{eq:log_price_dyn}
\begin{array}{rll}
\def\arraystretch{1.5}
\D X_t & = \displaystyle -\frac{1}{2}V_t \D t + \sqrt{V_t} \D B_t, & X_0 = 0, \\
\D V_t & = f(V_t)\D t + g(V_t)\D W_t, & V_0 = v_0>0,
\end{array}
\end{equation}
where $\Wf = (W,\Wp)$ is a standard two-dimensional Brownian motion and $B\coloneqq \rho W + \rrho\Wp$ with correlation coefficient $\rho \in (-1,1)$ and $\rrho\coloneqq \sqrt{1-\rho^2}$. 
The drift and diffusion coefficients of the volatility process satisfy  $f:\RR^+\rightarrow\RR$ and $g:\RR^+\rightarrow\RR^+$ and Assumption~\ref{ass:SDE_coeffs} if not stated otherwise (e.g. in the case of large-time approximation in Section~\ref{sec:largetimeMDP} additional assumptions are required for ergodicity purposes). 
%Naturally, the price process follows directly from It\^o's lemma
%\begin{equation}\label{eq:price_dyn}
%\def\arraystretch{1.5}
%\begin{array}{rll}
%\displaystyle \frac{\D S_t}{S_t} & = \displaystyle \sqrt{V_t} \D B_t, & S_0 = 1 \\
%\D V_t & = f(V_t)\D t + g(V_t)\D W_t, & V_0 = v_0>0,
%\end{array}
%\end{equation}
\begin{assumption}\label{ass:SDE_coeffs}\
\begin{enumerate}[i)]
\item The function $f:\RR^+\rightarrow\RR$ is globally Lipschitz continuous;
\item The function $g:\RR^+\rightarrow (0,\infty)$ is strictly increasing 
and satisfies $p$-polynomial growth condition for $p\geq\frac{1}{2}$, that is 
%$g$ grows at most polynomially with exponent $p\in\NN$ if and only if for all $x\in\RR^+$
$|g(x)|\leq 1 + |x|^p$,
for all $x\in\RR^+$.
\end{enumerate}
\end{assumption}
Under this assumption,
Yamada-Watanabe's theorem~\cite[Theorem 1]{Yamada1971} ensures the existence of 
a unique strong solution to~\eqref{eq:log_price_dyn}.
Consider now the continuous map
$G\in\Cc_T$, 
yielding the option price~$\EE[G(X)]$.
Finding the optimal Radon-Nikodym derivative encoding the change of measure from~$\PP$ to~$\QQ$ is hard in general and we instead consider the particular class of change of measure
\begin{equation}\label{eq:ZRadon}
Z_T\coloneqq \left.\frac{\D \QQ}{\D \PP}\right\vert_{\Ff_T} = \exp\left\{-\frac{1}{2}\|\dot{\hh}\|_{T}^2 + \dot{\hh}\ct \Wf^\top \right\},
\end{equation}
which is well defined and satisfies $\EE^{\PP}[Z_T]=1$
whenever $\hh \in \HH^0_T$.
Now let $F := \log|G|$ and $H := \log(Z)$, so that
$\EE^{\PP}[Z_T^{-1} G(X)^2] = \EE^{\PP}[\E^{2F(X) - H_T}]$; 
the approximation~\eqref{eq:proxy} then yields the estimate
\begin{equation}\label{eq:MinProblem}
L(\hh):=\limsup_{\eps\downarrow 0}\eps\log\EE^{\PP}\left[\exp\left\{\frac{2F(X^\eps) - H_T^\eps(\hh)}{\eps}\right\}\right],
\end{equation}
to compute, for some proxies~$X^\eps$ and~$H_T^\eps$.
In light of~\eqref{eq:ZRadon}, the approximation~$H^\eps(\hh)$  reads
\begin{equation}\label{eq:ProxyHEps}
H^{\eps}_T(\hh) = -\frac{1}{2}\|\dot{\hh}\|^2 + \dot{\hh}\circ_T \Wf^{\eps\top},
\end{equation}
with $\Wf^\eps:=\sqrt{\eps}\,\Wf$.
\begin{definition}\label{def:asyoptdrift}
A path $\hh\in \HH_T^0$ 
minimising~$L(\hh)$
is called an asymptotically optimal change of drift.
\end{definition}

From this point onward, several approaches exist in the literature. In~\cite{Dupuis2012, Hartmann2015, Dupuis2017} fully adaptive schemes are consider, where $\hh$ is function of $(t, X_t, V_t)$. These schemes effectively reduce variance, but are expensive to compute. 
For that reason, we look at the case where~$\hh$ is an absolutely continuous function with derivatives in $L^2([0,T]; \RR)$.
%, although more conditions may be needed in order to apply Varadhan's Lemma~\ref{thm:varadhan}. 
The main advantage is the fast computation in comparison to the fully adaptive schemes. 
Conserving this advantage, we also look at path~$\hh$ of the form $\int_0^.\dot{\hh}(t)\sqrt{V_t}\D t$ (yielding a stochastic change of measure) for which computations are usually as fast and variance reduction higher. 

In the case where~$\hh$ is a deterministic function 
(the approach is similar in other settings), 
the main methodology we shall develop below then goes through the following steps:

\begin{enumerate}[i)]
    \item Choose appropriate approximations~$X^{\eps}$ and~$H^{\eps}$ as in Section~\ref{sec:Proxies};
    \item Prove an LDP (MDP) with good rate function~$\II^{X,H}$ for $\{X^\eps,H^{\eps}\}_{\eps>0}$;
    \item Show that Varadhan's Lemma applies, 
    so that the function~$L(\hh)$ in~\eqref{eq:MinProblem} reads
    \begin{equation}\label{eq:Primal_L}
    L(\hh) = \sup_{(X,H)\in \Dd(\II^{X,H})} \Lf(X,H;\hh),
    \end{equation}
    where
\begin{equation}\label{eq:TargetFunc}
\Lf(X,H;\hh) := 2F(X) - H_T(\hh) - \II^{X,H}(X,H).
    \end{equation}
    \item We consider the dual problem of \eqref{eq:Primal_L} in the sense of Definition~\ref{def:Dual_L}. For more details see the remark below.
\end{enumerate}

\begin{definition}\label{def:Dual_L}
The primal problem is defined as
\begin{equation}\label{eq:Primal}
\inf_{\hh\in\HH^T_0} \sup_{(X,H)\in \Dd(\II^{X,H})}\Lf(X,H;\hh),
\end{equation}
while the dual consists in
\begin{equation}\label{eq:Dual}
\sup_{(X,H)\in \Dd(\II^{X,H})} \inf_{\hh\in\HH^T_0} \Lf(X,H;\hh).
\end{equation}
\end{definition}

\begin{remark}
In many cases this optimisation problem may be difficult to solve analytically, so we deal with the dual problem which turns out to be much simpler.
With further assumptions, it may be possible to prove strong duality, however this is outside the scope of this paper.
\end{remark}

%The question still remains: how to choose the appropriate $X^{\eps}$ and $H^{\eps}$ ? We will look when possible at different representations: the small-noise approximation (for $X$ or $\E^{X}$ since the latter is used in~\cite{Robertson2010}), the small-time approximation and the large-time approximation (and their counterparts in moderate deviations afterwards).
\begin{remark}
Small-time approximations may induce important loss of information. 
The reason is that the drift term in~$H^\eps$ (the quadratic part in~$\dot{h}$) can be negligible and can thus lead to a trivial dual problem. 
In the Black-Scholes setting (Appendix~\ref{sec:BS}),
a small-time approximation for~$H$ leads to the following problem: 
let $F:\RR^+\rightarrow\RR^+$ be a smooth enough function so that Varadhan's Lemma (Theorem~\ref{thm:varadhan}) holds, then the small-noise approximation problem reads
$$
\Lf(\xm,H;\hm) = 2 F(x)  - \int_0^T\dot{\hm}_t \dot{\xm}_t\D t
- \half\int_0^T\dot{\xm}_t^2\D t
+ \half\int_0^T\dot{\hm}_t^2\D t.
$$
However, in the small-time case we actually have
$$
\Lf(\xm,H;\hm) = 
2F(\xm) - \int_0^T\dot{\hm}_t \dot{\xm}_t\D t
- \half\int_0^T\dot{\xm}_t^2\D t.
$$
In this small-time setting, the dual problem~\eqref{eq:Dual} then reads
\begin{align*}
\sup_{(\xm,H)\in\Dd(\II^{X,H})}\inf_{\hm\in\HH^T_0} \Lf(X,H;\hm)
 & = \sup_{(\xm,H)\in \Dd(\II^{X,H})} \inf_{\hm\in\HH^T_0} \Big\{2F(\xm) - \int_0^T\dot{\hm}_t \dot{\xm}_t\D t - \half\int_0^T\dot{\xm}_t^2\D t\Big\}\\
 & = \sup_{(\xm,H)\in \Dd(\II^{X,H})} \Big\{2F(\xm)
 - \half\int_0^T\dot{\xm}_t^2\D t
 - \sup_{\hm\in\HH^T_0}\int_0^T\dot{\hm}_t \dot{\xm}_t\D t\Big\}.
\end{align*}
Clearly the path~$\hm$ can be multiplied by any arbitrarily large positive constant to increase the inner supremum, 
and therefore the optimisation does not admit a maximiser.
In these cases, we thus do not consider the small-time approximation for~$H^{\eps}_T$.
\end{remark}

%\begin{remark}
%$$S[h] := \int L(t,h(t), \dot{h}(t))\D t.$$
%Stationary point is solution to the EUler-Lagrange equation
%$$\frac{\partial L}{\partial h} - \frac{\D }{\D t}\frac{\partial L}{\partial \dot{h}} = 0.$$
%Here $L(t, h(t), \dot{h}(t)) = \dot{h}(t)\dot{x}(t)$.
%$$\max_{\phi} \int \phi(t) \dot{x}(t)\D t.$$
%\end{remark}

%To conclude, whenever possible we consider~$H$ as a continuous function of $\Wf\coloneqq(W,\Wp)$.  Regarding the path $\hh$ two cases are studied:
%\begin{enumerate}[i)]
%\item $\hh$ is either merely a function of time $\hh(t)$, we call this a \textit{deterministic change of measure}
%\item $\dot{\hh}$ is a path dependent function of volatility $\int_{0}^{\cdot}\dot{\hh}(t)\sqrt{V_t}\D t$, which we call \textit{stochastic change of measure}
%\end{enumerate}
% Furthermore, with this representation of $H$, $\EE[\E^{\frac{F(X^\eps)}{\eps}}] = \EE_{\QQ_{\eps}}[\E^{\frac{F(X^\eps)}{\eps}} \E^{-\frac{H^{\eps}}{\eps}}]$ where $\frac{d\QQ_\eps}{dP} = \E^{\frac{H^{\eps}}{\eps}}$. Minimising $\EE[\E^{\frac{2F(X^\eps)}{\eps}} \E^{-\frac{H^{\eps}}{\eps}}]$ can therefore be seen as finding a good change of measure to compute $\EE[\E^{\frac{F(X^\eps)}{\eps}}]$.

The paper's structure is as follows: 
in Section~\ref{sec:LDP_IS} we look at stochastic volatility models satisfying Assumption~\ref{ass:SDE_coeffs} and derive explicit solutions for large deviations approximations for \textit{path-dependent} payoffs of the form $F\big(\smallint _0^T\alpha_t\D X_t\big)$ for general deterministic  paths $\alpha\in\Cc_T$. This includes \textit{state-dependent} payoffs of European type, i.e., functions of $X_T$ (for the choice of $\alpha = 1$) and of Asian type $\frac{1}{T}\smallint_0^TX_t\D t$ (for $\alpha_t = \frac{T-t}{T}$). Later in Section~\ref{sec:MDP_IS} we study moderate deviations, where we derive small-noise, small-time and large-time MDPs, whose advantages, compared to LDP, are simpler forms of rate functions. Finally, in Section~\ref{sec:Num_results} we present results for the Heston model and compare variance reduction results for different approximation types. Some of the technical proofs are relegated to the Appendix~\ref{apx:proofs}.

%%%%%%%%%%%%%%%%%%%%%%%%%%%%%%%%%%%%%%%%%%%%%%%%%
\section{Importance sampling via large deviations}\label{sec:LDP_IS}

%We consider the log-price dynamics $X = \log(S)$ under Assumption~\ref{ass:SDE_coeffs}:
%\begin{equation}\tag{\ref{eq:log_price_dyn}'}
%\begin{array}{rll}
%\D X_t & = \displaystyle -\frac{1}{2}V_t \D t + \sqrt{V_t} \D B_t, & X_0 = 0, \\
%\D V_t & = f(V_t)\D t + g(V_t)\D W_t, & V_0 = v_0,
%\end{array}
%\end{equation}
%where $\Wf = (W, \Wp)$ is a two-dimensional standard Brownian motion and $B:= \rho W + \brho\Wp$, with $\rho \in (-1,1)$ and $\brho:=\sqrt{1-\rho^2}$. 

%%%%%%%%%%%%%%%%%%%%%%%%%%%%%%%%%%%%%%%%%%%%
\subsection{Small-noise LDP}
We start with the small-noise approximation of~\eqref{eq:log_price_dyn}:
\begin{equation}\label{eq:log_price_dynNoise}
\left\{
\begin{array}{rll}
\D X_t^\eps & = \displaystyle  -\frac{1}{2}V_t^\eps \D t + \seps\sqrt{V_t^\eps} \D B_t, & X_0^\eps = 0,\\
\D V_t^\eps & = \displaystyle   f(V_t)\D t + \seps g(V_t)\D W_t, & V_0^\eps = v_0.
\end{array}
\right.
\end{equation}
%or equivalently, for the stock price,
%For the price small-noise setting Robertson~\cite{Robertson2010} showed that $\eps \int_0^.V_t^\eps \D t$ is exponentially equivalent to $0$, which means the dynamics are then equivalent to
%\begin{equation*}
%\begin{array}{rll}
%\D S_t^\eps & = \displaystyle  %\seps\sqrt{V_t^\eps} \D B_t, & S_0^\eps = %1,\\
%\D V_t^\eps & = \displaystyle   f(V_t)\D t + %\seps g(V_t)\D W_t, & V_0^\eps = v_0.
%\end{array}
%\end{equation*}

%As usual in the option pricing literature, we write all the dynamics in terms of the log-price process for convenience, and will only revert to the original stock price in the Numerical results section.

%%%%%%%%%%%%%%%%%%%%%%%%%%%%%%%%%%%%%%%%
\subsubsection{Large deviations}
%We present here the results that will be needed to derive our best change of measure in the next section. 
In the spirit of~\cite{Freidlin2012}, 
we provide in this section an LDP in ${\Cc([0,T]\to\RR^2)}$ for $\{X^\eps, V^\eps\}$;
usual assumptions involve non-degenerate and locally-Lipschitz diffusion though, 
which clearly fails for square-root type stochastic volatility models. 
We follow~\cite{Robertson2010} who lifted this constraint and instead showed an LDP under the following assumption:
\begin{assumption}\label{ass:LDP_robertson}
$\{V^{\eps}\}_{\eps}\sim\LDP(\IIV, \Cc_T)$
for some good rate function~$\IIV$ 
with $\Dd(\IIV)\subset \HH_{T}^{v_0}$.
\end{assumption}
This assumption allows an extension of the Freidlin-Wentzell in the following theorem, which is a log-price analogue of \cite[Theorem 2.2]{Robertson2010} and also gives two variational conditions sufficient for the tuple to satisfy an LDP.
\begin{theorem}[Theorem 2.2 and Corollary 2.3 in~\cite{Robertson2010}]\label{thm:robertsonLDP}
Under Assumption~\ref{ass:LDP_robertson}, 
if there exists $\beta>0$ such that
\begin{equation}\label{eq:robertsonLDPcond}
\gamma \coloneqq \sup_{\phi \in \Dd(\IIV)}\left\{\beta \int_{0}^{T} \frac{\dot{\phi}_{t}^{2}}{4\phi_{t}}  \mathrm{~d} t - \IIV(\phi)\right\}
\end{equation}
is finite, then 
$\{X^{\eps}, V^{\eps}\}\sim\LDP\left(\IIXV,\Cc\left([0,T]\to\RR^2\right)\right)$, where
$$
\IIXV(\phi, \psi)=\inf \left\{\IIV(\phi) + \IIW(w): \psi :=-\frac12\int_{0}^{\cdot}\phi_t\D t+\int_{0}^{\cdot} \sqrt{\phi_{t}} \dot{w}_{t} \D t\right\},
$$
with
$\IIW(w) = \frac{1}{2} \int_{0}^{T} \dot{w}_{t}^{2} \D t$
if $w \in \HH_{T}^{0}$ and infinite otherwise.
Furthermore, $\{X^{\eps}\}\sim\LDP(\IIX, \Cc_T)$ with
$\IIX(\psi):=\inf \left\{\IIXV(\phi, \psi): \phi \in \Cc_T\right\}$.
\end{theorem}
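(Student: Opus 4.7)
The plan is to derive the joint LDP for $(V^\eps, X^\eps)$ by combining the hypothesised LDP for $\{V^\eps\}$ with Schilder's theorem for the driving Brownian motion $\Wf^\eps := \seps\Wf$, and then transfer it through the representation
$$X^\eps_t = -\frac{1}{2}\int_0^t V^\eps_s\D s + \seps\int_0^t \sqrt{V^\eps_s}\D B_s.$$
The marginal LDP for $\{X^\eps\}$ then follows from the contraction principle applied to the projection $(\phi,\psi)\mapsto\psi$, since goodness of $\IIV$, and hence of $\IIXV$, ensures the infimum defining $\IIX$ is attained and that sub-level sets remain compact.

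First I would establish a joint LDP for $(V^\eps,\Wf^\eps)$ in $\Cc([0,T]\to\R^3)$, combining Assumption~\ref{ass:LDP_robertson} with Schilder's theorem: independence of $W^\perp$ from $(V,W)$ permits a tensorisation argument for the $W^\perp$-coordinate, while $V^\eps$ being a functional of $W$ alone allows one to couple their LDPs via exponential equivalence. One then studies the map
$$\Phi(\phi,w,w^\perp) := -\frac{1}{2}\int_0^\cdot \phi_t\D t + \int_0^\cdot \sqrt{\phi_t}\bigl(\rho\dot{w}_t + \brho\dot{w}^\perp_t\bigr)\D t,$$
which sends the joint effective domain to the candidate limit paths for $X^\eps$. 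On paths of finite rate this map is continuous in the uniform topology, and the extended contraction principle produces $\IIXV$ in the stated infimum form, with a single effective driver $w$ absorbing the correlated combination $\rho w + \brho w^\perp$ via the identity $\rho^2 + \brho^2 = 1$.

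The technical crux is that $\seps\int_0^\cdot \sqrt{V^\eps_s}\D B_s$ is not a genuine continuous functional of $(V^\eps,\Wf^\eps)$: under a square-root-type volatility $V$ may vanish, and Assumption~\ref{ass:SDE_coeffs} only bounds $g$ polynomially, so $\sqrt{V^\eps}$ can be unbounded on the support of $\IIV$. Condition~\eqref{eq:robertsonLDPcond} is the exponential-integrability estimate needed to bypass this: via Novikov and Girsanov, $\exp\bigl\{\beta\seps\int_0^T\sqrt{V^\eps_s}\D W_s - \tfrac{\beta^2\eps}{2}\int_0^T V^\eps_s\D s\bigr\}$ is a true martingale, and Varadhan's lemma applied to the quadratic compensator --- combined with the Cameron--Martin identity relating $\int V^\eps_s\D s$ to $\int \dot{\phi}_t^{2}/(4\phi_t)\D t$ on the effective domain --- yields the finite constant $\gamma$, providing the uniform exponential tightness required. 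With this in hand one truncates $\sqrt{V^\eps}$ at a level $K$, applies the contraction principle to the resulting continuous functional, and closes the proof by letting $K\uparrow\infty$ with an error that is exponentially negligible by $\gamma$. The main obstacle is precisely this decoupling step; once it is handled, the identification of $\IIXV$ and the contraction to $\IIX$ are routine consequences of the contraction principle for good rate functions.
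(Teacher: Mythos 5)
The overall blueprint — establish a joint LDP for $(V^\eps,\Wf^\eps)$, push it through the solution map by contraction, and isolate the genuine technical obstruction that the Itô integral $\seps\int_0^\cdot\sqrt{V^\eps_s}\,\D B_s$ is not a continuous functional of the driving paths — is correct, and you have identified the right role for condition~\eqref{eq:robertsonLDPcond}. The gap is the proposed fix. Truncating $\sqrt{V^\eps}$ at a level $K$ does \emph{not} turn $\seps\int_0^\cdot\bigl(\sqrt{V^\eps_s}\wedge K\bigr)\D B_s$ into a continuous map of $(V^\eps,\Wf^\eps)$ in the uniform topology: Itô integration of a bounded adapted integrand against Brownian motion is still not sup-norm continuous in the pair (integrand, integrator) — this is precisely the obstruction the rough-path/L\'evy-area construction is designed to repair, and boundedness of the integrand does nothing to remove it. So after truncating you have made no progress on the decoupling step; the contraction principle still does not apply, and the argument stalls exactly at the point you flagged as the main obstacle.

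Note that the present paper cites this as Theorem 2.2 and Corollary 2.3 of Robertson (2010) and does not re-prove it, so the comparison is with Robertson's argument. There the device is discretisation in \emph{time}, not truncation in level: one replaces $\sqrt{V^\eps_s}$ by its value at the left endpoint of a partition of mesh $2^{-m}$, so the stochastic integral collapses to a finite sum $\sum_k \sqrt{V^\eps_{t_k}}\bigl(B^\eps_{t_{k+1}\wedge\cdot}-B^\eps_{t_k\wedge\cdot}\bigr)$, which \emph{is} a continuous map of $(V^\eps,\Wf^\eps)$. One then invokes the extended contraction principle for exponentially good approximations (Dembo--Zeitouni, Theorem 4.2.23), which requires showing that the discretised process $X^{\eps,m}$ approximates $X^\eps$ exponentially well as $m\to\infty$; condition~\eqref{eq:robertsonLDPcond} enters exactly here, via Girsanov and the correspondence between $\int\dot\phi_t^2/(4\phi_t)\,\D t$ and the energy of the driving path on $\Dd(\IIV)$, to control the exponential moments of the discretisation error uniformly in $\eps$. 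If you replace your level-truncation step by this mesh-refinement argument and appeal to the exponentially-good-approximation form of the contraction principle, the rest of your outline — identification of $\IIXV$ and the final projection to $\IIX$ — does go through as you describe.
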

Here, $\IIW$ is nothing else than the usual energy function for the Brownian motion from Schilder's theorem~\cite{Schilder1966}.
To apply Theorem~\ref{thm:robertsonLDP}, we first need to show that Assumption~\ref{ass:LDP_robertson} holds in our setting and check whether any further assumption on the coefficients are necessary. 
Many processes arising from volatility models, where the classical Freidlin-Wentzell does not apply, have been studied in the literature. 
For example Donati-Martin, Rouault, Yor and Zani~\cite{Donati-Martin2004} show that~$V^{\eps}$ satisfies an LDP in the case of the Heston model, then Chiarini and Fischer~\cite{Chiarini2014} show existence of an LDP for class of models with uniformly continuous coefficients on compacts, Conforti et al.~\cite{conforti2015small} show an LDP for non-Lipschitz diffusion coefficient of CEV type. 
Most notably, Baldi and Caramellino~\cite{Baldi2011} cover the case of Assumption~\ref{ass:SDE_coeffs} with $f(0)>0$ and sub-linear growth when $f,g\rightarrow \infty$. We now state their main result.
\begin{theorem}[Theorem 1.2 in~\cite{Baldi2011}]
Let $V^{\eps}$ be the solution of~\eqref{eq:log_price_dynNoise} on $[0, T]$ with $v_0>0$, $f(0)>0$ and sublinear growth of $f,g$ at infinity. 
Then under Assumption~\ref{ass:SDE_coeffs} the process~$V^\eps$ satisfies an LDP with the good rate function
\[
\IIV(\phi)= \begin{cases}
\displaystyle \frac{1}{2} \int_{0}^{T} \dot{w}_t^2 \D t, 
& \text{if } w\in\HH^0_T, \dot{\phi}_t = f(\phi_t) + g(\phi_t)\dot{w}_t, \phi_0 = v_0, \\ 
+\infty, & \text{otherwise}.
\end{cases}
\]
\end{theorem}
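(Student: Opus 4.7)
The plan is to follow the standard three-step LDP recipe — exponential tightness, matching upper and lower bounds — while handling the non-Lipschitz behaviour of $g$ near zero through a truncation argument that crucially leverages $f(0) > 0$.

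First I would check well-posedness of the candidate rate function. For each $w \in \HH^0_T$ the controlled ODE $\dot\phi_t = f(\phi_t) + g(\phi_t)\dot{w}_t$ with $\phi_0 = v_0 > 0$ should admit a unique, strictly positive, non-exploding solution: positivity follows from $f(0) > 0$ together with Lipschitz $f$, since near zero the drift dominates any diffusion perturbation of finite energy, while non-explosion follows from sublinear growth at infinity via Grönwall. Lower semicontinuity and goodness of $\IIV$ then come from weak-$L^2$ compactness of sublevel sets of $\|\dot{w}\|_T^2$ combined with continuity of the map $w \mapsto \phi$ on such sublevel sets. Exponential tightness of $\{V^\eps\}$ in $\Cc_T$ can be obtained from Itô's formula applied to a suitable exponential moment, polynomial growth of $g$, and Burkholder--Davis--Gundy, yielding modulus-of-continuity control at the scale $\E^{-L/\eps}$ uniformly in $\eps$.

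The heart of the argument is a truncation step. For $\delta > 0$, I would introduce a Lipschitz modification $g_\delta : \RR^+ \to (0, \infty)$ agreeing with $g$ on $[\delta, \infty)$; the truncated SDE then falls within the classical Freidlin--Wentzell framework and produces an LDP for $V^{\eps,\delta}$ with the analogous rate function $\IIV_\delta$ obtained by replacing $g$ with $g_\delta$. The crucial estimate, proved via a Girsanov comparison against a simpler reference dynamics and exploiting $f(0) > 0$, is the superexponential bound
\begin{equation*}
\limsup_{\delta \downarrow 0}\ \limsup_{\eps \downarrow 0}\ \eps \log \PP\Bigl(\inf_{t \in [0,T]} V^\eps_t \leq \delta\Bigr) = -\infty.
\end{equation*}
The same reasoning applied to the skeleton shows that any $\phi$ with $\IIV(\phi) < \infty$ satisfies $\inf_t \phi_t > 0$, hence $\IIV_\delta(\phi) = \IIV(\phi)$ as soon as $\delta$ is small enough. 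Strong uniqueness for the Lipschitz truncated SDE guarantees that $V^\eps$ and $V^{\eps,\delta}$ coincide on the event $\{\inf_t V^\eps_t > \delta\}$, whose complement is superexponentially negligible by the display above; exponential equivalence in the sense of Dembo--Zeitouni then transfers the LDP from $V^{\eps,\delta}$ to $V^\eps$, and sending $\delta \downarrow 0$ produces the rate function $\IIV$ as stated.

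The main obstacle is precisely this superexponential no-hitting estimate for $V^\eps$ near zero: it is what rules out a naive application of the contraction principle to Schilder's theorem (the Itô map is continuous only where $g$ is Lipschitz), and it is the unique step in which the hypothesis $f(0) > 0$ is genuinely indispensable. A secondary technical nuisance is checking that the Lipschitz truncation $g_\delta$ can be chosen so that $\IIV_\delta \nearrow \IIV$ pointwise on $\Dd(\IIV)$ and $\IIV$ inherits the good rate function property uniformly in $\delta$; this is routine but requires matching lower-bound constructions along nearly optimal paths that stay strictly above~$\delta$ for $\delta$ small.
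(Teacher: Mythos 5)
The paper does not prove this theorem: it is stated verbatim as Theorem~1.2 of Baldi and Caramellino~\cite{Baldi2011} and used as a black box, so there is no internal proof to compare against. What you have sketched is a plausible reconstruction of the argument in the cited reference, and it correctly identifies the two load-bearing ideas — positivity of the skeleton via $f(0)>0$ (which the paper itself invokes later as \cite[Proposition~3.11]{Baldi2011}) and a superexponential estimate preventing $V^\eps$ from approaching the origin, which is indeed where the hypotheses $f(0)>0$ and $g(0^+)=0$ are genuinely used and where a naive contraction from Schilder's theorem breaks down.

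One point in your write-up should be tightened. You state the superexponential bound as a double limit,
\[
\limsup_{\delta \downarrow 0}\ \limsup_{\eps \downarrow 0}\ \eps \log \PP\Bigl(\inf_{t \in [0,T]} V^\eps_t \leq \delta\Bigr) = -\infty,
\]
but then invoke \emph{exponential equivalence} to transfer the LDP from $V^{\eps,\delta}$ to $V^\eps$. Exponential equivalence (Dembo--Zeitouni, Definition~4.2.10) requires the \emph{single}-limit version for each fixed $\delta$, i.e.\ $\limsup_{\eps\downarrow 0}\eps\log\PP(\inf_t V^\eps_t\leq\delta)=-\infty$, and this is generally false: for a fixed level $\delta\in(0,v_0)$ the hitting probability decays like $\E^{-C(\delta)/\eps}$ with $C(\delta)<\infty$ (it is precisely what the LDP lower bound would predict). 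What your double-limit bound supports is the weaker notion of \emph{exponentially good approximation} (Dembo--Zeitouni Theorem~4.2.16 and Theorem~4.2.23), under which the rate function for $V^\eps$ is recovered from the family $\IIV_\delta$ via an inner regularization as $\delta\downarrow 0$; one must then separately check that this regularized infimum coincides with $\IIV$, which is exactly the lower-bound construction along optimal paths bounded away from zero that you flag as a secondary nuisance. Apart from this mislabeling of the approximation machinery, and the fact that \cite{Baldi2011} actually organize the proof around an Azencott-type localization (local Freidlin--Wentzell estimates plus a covering argument) rather than a single global truncation plus exponentially good approximation, your plan captures the essential mechanism of the cited result.
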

Since $\seps \Wp$ is independent of $V^{\eps}, \seps W$ and satisfies a LDP with good rate function by Schilder's theorem, 
then~\cite[Exercise 4.2.7]{Dembo2010} in turn implies that the triple $(V^{\eps}, \Wf^\eps)$ satisfies an LDP with the good rate function
$$
\II^{V,\Wf}(\phi,\ww) =  
\frac{1}{2}\|\dot{\ww}\|_{T}^2,
\quad\text{ if }\quad
\ww \coloneqq \begin{pmatrix}
           w \\
           \widetilde{w}
         \end{pmatrix} \in \HH_T^0\times \HH_T^0,\; \dot{\phi}_t = f(\phi_t) + g(\phi_t)\dot{w}_t\; \text{ and } \; \phi_0 = v_0,
$$
and is infinite otherwise. 
Then, by applying Theorem~\ref{thm:robertsonLDP}, we finally obtain the following:
\begin{proposition} 
Let $\Yf^{\eps} \coloneqq \sqrt{V^\eps}\circ \Wf^\eps$.
The triple $\{V^{\eps}, \Yf^{\eps}\}$ satisfies an LDP with good rate function
\begin{equation*}
\II^{V,\Wf,\Yf}(\phi,\ww,\yy) = \frac{1}{2} \|\dot{\ww}\|_T^2,
\quad \text{if} \quad 
\left\{
 \begin{array}{l}
\displaystyle \ww \in \HH_T^0\times \HH_T^0,\\
\displaystyle \dot{\phi}_t = f(\phi_t) + g(\phi_{t})\dot{w}_t \; \text{ and } \;  \phi_0 = v_0, \\
\displaystyle \dot{\yy} = \sqrt{\phi}\dot{\ww},
 \end{array}
\right.
\end{equation*}
and is infinite otherwise.
\end{proposition}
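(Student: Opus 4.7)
The plan is to lift the LDP already established for the pair $(V^\eps, \Wf^\eps)$ to the triple $(V^\eps, \Wf^\eps, \Yf^\eps)$ by adjoining the stochastic integral $\Yf^\eps = \sqrt{V^\eps}\circ\Wf^\eps$ as a third coordinate. On the Cameron-Martin domain of $\II^{V,\Wf}$, the natural candidate for the induced coordinate is the map $(\phi,\ww) \mapsto \yy$ with $\yy_t = \int_0^t \sqrt{\phi_s}\dot{\ww}_s\D s$, so that $\dot\yy = \sqrt\phi\,\dot\ww$; this is precisely the constraint set on which the proposed rate function is finite. Since this map is injective on the effective domain of $\II^{V,\Wf}$, the contracted rate function equals $\frac{1}{2}\|\dot\ww\|_T^2$ on its range and $+\infty$ elsewhere, which yields the desired formula.

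The main obstacle is that the stochastic integral map $(v,w)\mapsto\sqrt{v}\circ w$ is not continuous on the full path space $\Cc_T\times\Cc_T^2$ equipped with the uniform topology, so a direct contraction principle cannot be invoked. My strategy is the approximation lemma of Dembo-Zeitouni~\cite{Dembo2010}. For each $n\in\NN$, replace $\sqrt{V^\eps_t}$ by its piecewise-constant interpolation $\sqrt{V^\eps_{t_k}}$ for $t\in[t_k,t_{k+1})$ on the partition $t_k = kT/n$, and define $\Yf^{\eps,n}:=\sqrt{V^{\eps,n}}\circ \Wf^\eps$. The integrand being a step function, the map $(V^\eps,\Wf^\eps)\mapsto(V^\eps,\Wf^\eps,\Yf^{\eps,n})$ is continuous in the uniform topology, so the contraction principle applied on top of the LDP for $(V^\eps,\Wf^\eps)$ delivers an LDP for each approximating triple $(V^\eps,\Wf^\eps,\Yf^{\eps,n})$ with the expected step-function rate function.

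The technical heart is then establishing exponential equivalence between $\Yf^{\eps,n}$ and $\Yf^\eps$ as $n\to\infty$: for every $\delta>0$,
$$\lim_{n\to\infty}\limsup_{\eps\downarrow 0}\eps\log\PP\left[\sup_{t\in[0,T]}\big|\Yf^{\eps,n}_t - \Yf^\eps_t\big|>\delta\right] = -\infty.$$
By the Burkholder-Davis-Gundy inequality, this reduces to exponential control of $\int_0^T|\sqrt{V^\eps_t}-\sqrt{V^{\eps,n}_t}|^2\D t$; the polynomial growth of~$g$ from Assumption~\ref{ass:SDE_coeffs}, combined with the finiteness of $\gamma$ in~\eqref{eq:robertsonLDPcond} that underlies Theorem~\ref{thm:robertsonLDP}, furnishes the uniform-in-$\eps$ moment estimates needed to control the modulus of continuity of $V^\eps$. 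This step is structurally identical to the argument in Robertson's proof of Theorem~\ref{thm:robertsonLDP}, transposed to our two-dimensional driver~$\Wf^\eps$, and it is the main technical obstacle since it requires handling the square root near the boundary where~$V^\eps$ may become small.

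Once exponential equivalence is in place, the approximation lemma combines with the LDPs for the $\Yf^{\eps,n}$ to yield the claimed LDP for $(V^\eps,\Wf^\eps,\Yf^\eps)$, with rate function obtained as the $\Gamma$-limit of the step-function rate functions. Injectivity of the contracting map on $\Dd(\II^{V,\Wf})$ identifies this limit with $\frac{1}{2}\|\dot\ww\|_T^2$ on the constraint set specified in the statement, and $+\infty$ otherwise, concluding the proof.
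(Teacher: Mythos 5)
Your strategy of piecewise-constant approximation of $\sqrt{V^\eps}$ plus exponential equivalence via the Dembo-Zeitouni approximation lemma is essentially the machinery underlying Robertson's Theorem~\ref{thm:robertsonLDP} itself; the paper avoids this by simply \emph{citing} that theorem and the contraction principle, so what remains in the paper is only to verify its hypothesis, namely that condition~\eqref{eq:robertsonLDPcond} holds. This is where your argument has a genuine gap. You invoke ``the finiteness of $\gamma$ in~\eqref{eq:robertsonLDPcond}'' as an input to the moment estimates needed for exponential equivalence, but you never establish it. In the paper this verification \emph{is} the proof: after rewriting $\beta\int_0^T \dot{\phi}_t^2/(4\phi_t)\,\D t - \IIV(\phi)$ in terms of $f$ and $g$, one must exploit the strict positivity of solutions $\phi$ to $\dot\phi = f(\phi) + g(\phi)\dot w$ (Proposition 3.11 of Baldi--Caramellino) and the monotonicity and polynomial growth of $g$ from Assumption~\ref{ass:SDE_coeffs} to show that a suitable $\beta>0$ makes the supremum finite. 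Without that computation, your exponential-equivalence estimates are circular — they lean on exactly the integrability condition they are meant to help establish — and Theorem~\ref{thm:robertsonLDP} cannot be invoked either, since its conditional hypothesis is left unchecked. To complete the proof you should supply the verification of~\eqref{eq:robertsonLDPcond} (which is short once one knows the solution paths of the controlled ODE are bounded away from zero and infinity on $[0,T]$), at which point the heavy approximation argument you sketch is unnecessary and the result follows directly from Theorem~\ref{thm:robertsonLDP} together with the contraction principle on the effective domain, as in the paper.
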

\begin{proof}
The proposition is a direct application of Theorem~\ref{thm:robertsonLDP}:
since $\Yf^{\eps} = \sqrt{V^{\eps}}\circ\Wf^\eps$ is a continuous function of $V^\eps$, the result follows from the Contraction principle~\cite[Theorem~4.2.1]{Dembo2010}. 
It remains to verify that condition~\eqref{eq:robertsonLDPcond} holds:
\begin{align*}
\beta\int_0^T\frac{\dot{\phi}_t}{4\phi_t}\D t - \IIV(\phi) &= \beta\int_0^T\frac{\dot{\phi}_t}{4\phi_t}\D t - \frac{1}{2}\int_0^T\left|\frac{\dot{\phi}_t-f(\phi_t)}{g(\phi)} \right|^2 \D t \\ &= \frac{1}{2}\int_0^T \left( \frac{\beta}{2\phi_t} - \frac{1}{g^2(\phi_t)} \right ) \dot{\phi}_t^2 \D t + \int_0^T \frac{f(\phi_t)}{g^2(\phi_t)}\dot{\phi}_t\D t - \frac{1}{2}\int_0^T \left | \frac{f(\phi_t)}{g(\phi_t)} \right |^2 \D t.
\end{align*}
By~\cite[Proposition 3.11]{Baldi2011} the unique solution to the ODE $\dot{\phi_t}=f(\phi_t)+g(\phi_t)\dot{w}_t$ with $\phi_0 = v_0 > 0$ is strictly positive under Assumption~\ref{ass:SDE_coeffs} and since $g$ maps to $\R^+$ and is strictly increasing, 
then $\sup_{t\in[0,T]} \frac{2\phi_t}{g^2(\phi_t)}$ is finite.
Again, since~$\phi$ is strictly positive, then $\inf_{t\in[0,T]}\frac{2\phi_t}{g^2(\phi_t)}>0$ 
and therefore for all $0<\beta\leq\inf_{t\in[0,T]}\frac{2\phi_t}{g^2(\phi_t)}$,
$$
\beta\int_0^T\frac{\dot{\phi}_t}{4\phi_t}\D t - \IIV(\phi) 
\leq \int_0^T \frac{f(\phi_t)}{g^2(\phi_t)}\dot{\phi}_t \D t 
\leq \sup_{t\in[0,T]}\left |  \frac{f(\phi_t)}{g^2(\phi_t)} \right | \int_0^T \dot{\phi}_t \D t
%= \sup_{t\in[0,T]} \left |\frac{f(\phi_t)}{g^2(\phi_t)} \right |(\phi_T - v_0)
< \infty.
$$
\end{proof}

%%%%%%%%%%%%%%%%%%%%%%%%%%%%%%%%%%%%%%%%%%%%%%%%
\subsubsection{LDP-based importance sampling}
We consider two changes of measure, with a deterministic and a stochastic change of drift,
and start with the former.
\paragraph{\textit{Deterministic change of drift}}
The drift is of the form
$$ 
\left.\frac{\D\QQ}{\D\PP}\right|_{\Ff_T} := \exp\left\{\dot{\hh}\cT \Wf^\top - \frac{1}{2}\|\dot{\hh}\|_{T}^2\right\},
$$
with $\hh\in \HH_T^0$.
%We let $G\in\Cc_T$ a continuous payoff function and $F=\log|G|$. 
The limit~\eqref{eq:MinProblem}, together with~\eqref{eq:ProxyHEps}, then reads
$$
L(\hh) = \limsup_{\eps\downarrow 0}
\eps\log\EE\left[
\exp\left\{\frac{1}{\eps}\left(2F(X^{\eps}) - \dot{\hh}\cT \Wf^{\eps\top} + \frac{1}{2}\|\dot{\hh}\|_{T}^2\right)\right\}
\right] .
$$
We now follow the same approach as in the case of deterministic volatility in Appendix~\ref{sec:BS}. 
Let $F\in\Cc(\RR^+\rightarrow\RR^+)$ be bounded from above and~$\dot{\hh}$ be of finite variation, then the tail condition of Varadhan's Lemma is easy to verify and 
the functional~$L$ in~\eqref{eq:Primal_L} reads
$$
L(\hh) = \sup\limits_{\xx\in \HH_T^0}
\Lf(\xx,\hh),
\qquad\text{with}\qquad
\Lf(\xx,\hh) := 2 F(\varphi_T(\xx)) - \dot{\hh}\cT \xx^\top
 + \half\left(\|\dot{\hh}\|_{T}^2 - \|\dot{\xx}\|_{T}^2\right),
 $$
where $\varphi(\xx)$ is the unique solution on $[0,T]$ to
$$
\dot{\varphi}_t(\xx) = -\frac{1}{2}\psi_t + \sqrt{\psi_t}\vrho\dot{\xx}(t)^\top,
\qquad\text{with}\qquad
\dot{\psi}_t = \kappa (\theta - \psi_t) + \xi \sqrt{\psi_t} \dot{x}_1(t),
$$
with initial conditions $\varphi_0(\xx) = 0$ and $\psi_0 = v_0$, and $\vrho:=(\rho, \brho)$. 
To solve the dual problem~\eqref{eq:Dual}, the inner optimisation reads
\begin{align*}
\inf_{\hh\in\HH_T^0}\Lf(\xx,\hh) &= \inf_{\hh\in\HH_T^0} \left\{2F(\varphi_T(\xx)) - \dot{\hh}\cT \xx^\top
 + \half\left(\|\dot{\hh}\|_{T}^2 - \|\dot{\xx}\|_{T}^2\right)\right\} \\
&= 2F(\varphi_T(\xx)) + \inf_{\hh\in\HH_T^0}\left\{\frac{\|\dot{\hh}\|_{T}^2}{2} -\dot{\hh}\cT \xx^\top - \frac{\|\dot{\xx}\|_{T}^2}{2}\right\} \\
&= 2F(\varphi_T(\xx)) - \|\dot{\xx}\|_{T}^2.
\end{align*}
As shown in~\cite{Robertson2010}, the dual problem~\eqref{eq:Dual}
can then be solved uniquely as 
\begin{equation}\label{eq:detoptdrift_dual}
\hh^* = \argmax\limits_{\xx\in \HH_T^0} 
\left\{F(\varphi_T(\xx)) - \frac{\|\dot{\xx}\|_{T}^2}{2}\right\},
\end{equation}
which is an asymptotically optimal change of drift in the sense of Definition~\ref{def:asyoptdrift}.
\paragraph{\textit{Stochastic change of drift}}
We now consider the stochastic change of measure 
$$
\left.\frac{\D\QQ}{\D\PP}\right|_{\Ff_T} := \exp\left\{\left(\dot{\hh}\sqrt{V}\right)\cT \Wf^\top - \frac{1}{2}\left\|\dot{\hh}\sqrt{V}\right\|_{T}^2\right\},
$$
with~$\dot{\hh}$ a deterministic function of finite variation such that $\EE[\frac{\D\QQ}{\D \PP}]=1$;
this holds for example under the Novikov condition $\EE\left [\exp(\frac{1}{2}\int_0^T \|\dot{\hh}(t)\|^2 V_t \D t)\right ]<\infty$. 
We again consider $G\in\Cc_T$ and let $F:=\log|G|$. 
The minimisation problem~\eqref{eq:MinProblem} now reads
\begin{align*}
L(\hh) &=\limsup\limits_{\eps\downarrow 0}\eps
\log\EE\left[\exp\left\{\frac{1}{\eps}\left(2F(X^{\eps}) - (\dot{\hh}\sqrt{V^{\eps}}) \cT \Wf^{\eps\top} + \frac{1}{2}\|\dot{\hh}\sqrt{V^\eps}\|_{T}^2\right)\right\}\right]\\
&= \limsup\limits_{\eps\downarrow 0}\eps\log\EE\left[\exp\left\{\frac{1}{\eps}\left(2F(X^{\eps}) - \dot{\hh} \cT \Yf^{\eps\top} + \frac{1}{2}\|\dot{\hh}\sqrt{V^\eps}\|_{T}^2\right)\right\}\right].
\end{align*}
Since $F$ is continuous, the term inside the exponential is a continuous function of $\{V^{\eps},\Yf^{\eps}\}$. 
Varadhan's Lemma then yields
$L(\hh) = \sup_{\xx\in \HH_T^0}\Lf(\xx,\hh)$,
with 
$$
\Lf(\xx,\hh) = 2F(\varphi_T(\xx)) - \left(\dot{\hh}\sqrt{\psi}\right)\cT \xx^\top + \frac{\|\dot{\hh}\sqrt{\psi}\|_{T}^2}{2} - \frac{\|\dot{\hh}\|_{T}^2}{2},
$$
where $\xx = (x_1 \; x_2)^\top$ and $\{\varphi(\xx),\psi\}$ are unique solutions on $[0,T]$ to
$$
\dot{\varphi}_t(\xx) = -\frac{1}{2}\psi_t + \sqrt{\psi_t} \vrho\dot{\xx}(t)^\top,
\qquad\text{with}\qquad
\dot{\psi}_t(\xx) = f(\psi_t) + g(\psi_t) \dot{x}_1(t),
$$
with initial conditions $(\varphi_0(\xx),\psi_0) = (0,v_0)$. 
For  the dual problem, we search for a change of measure with $\hh$ such that:
\begin{equation}\label{eq:detoptstochdrift_dual}
\left\{
\begin{array}{rll}
\xx^* & = \displaystyle \argmax\limits_{\xx\in \HH_T^0} 
\left\{F(\varphi_T(\xx)) - \frac{\|\dot{\xx}\|_{T}^2}{2}\right\},\\
\dot{\varphi}_t & = \displaystyle -\frac{1}{2}\psi_t + \sqrt{\psi_t} \vrho \dot{\xx}(t)^\top, & \varphi_0 = 0,\\
\dot{\psi}_t & = \displaystyle f(\psi_t) + g(\psi_t) \dot{x}_1(t), & \psi_0 = v_0,\\
\hh^* & = \displaystyle \int_{0}^{\cdot}\frac{\dot{\xx}^*(t)}{\sqrt{\psi_t}}\D t.
\end{array}
\right.
\end{equation}
The maximisation problem is very similar to the one with deterministic change of drift~\eqref{eq:detoptdrift_dual}. 
However, as we will see in Section~\ref{sec:Num_results}, the stochastic version usually gives better results.

%%%%%%%%%%%%%%%%%%%%%%%%%%%%%%%%%%%%%%%%%%%
\subsubsection{Example: options with path-dependent payoff}
\label{sec:PayoffPathDepG}
 Consider a payoff $G(\alpha\cT X)$ with ${G:\RR^{+} \rightarrow \RR^{+}}$ differentiable, 
$\alpha$ a positive function of class~$\Cc^1$ and $F := \log|G|$. 
We only look at the deterministic case, 
namely the optimisation problem~\eqref{eq:detoptdrift_dual} since the solutions to~\eqref{eq:detoptstochdrift_dual} can be easily deduced from it. It reads
\begin{equation*}
\left\{
\begin{array}{rll}
\hh^* & = \displaystyle \argmax\limits_{\xx\in \HH_T^0} 
\left\{F\left(\int_0^T\alpha_t \dot{\varphi}_t \D t\right) - \frac{\|\dot{\xx}\|_{T}^2}{2}\right\},\\
\dot{\varphi}_t & = \displaystyle -\frac{1}{2}\psi_t + \sqrt{\psi_t}\vrho \dot{\xx}(t)^\top, & \varphi_0 = 0,\\
\dot{\psi}_t & = \displaystyle f(\psi_t) + g(\psi_t) \dot{x}_1(t), & \psi_0 = v_0.
\end{array}
\right.
\end{equation*}
The following lemma helps transforming the above optimisation problem.
\begin{lemma}\label{lem:Simplify}
Let $\xx\in \HH_T^0 \times \HH_T^0$. The function $K:\HH_T^0 \times \HH_T^0 \rightarrow \HH_T^0\times \HH_T^{v_0,+}$ such that $K(\xx)=(\varphi,\psi)$ is solution to
$$
\dot{\varphi}_t  = -\frac{1}{2}\psi_t + \sqrt{\psi_t}\vrho \dot{\xx}(t)^\top,
\qquad\text{with}\qquad
\dot{\psi}_t = f(\psi_t) + g(\psi_t) \dot{x}_1(t),
$$
with initial conditions $\varphi_0 = 0$ and $\psi_0 = v_0$, is well defined and is a bijection.
\end{lemma}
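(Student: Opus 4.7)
The strategy is to decouple the system and solve it in sequence: the $\psi$-equation does not involve $\varphi$, so we address it first and then obtain $\varphi$ by direct integration. The bijection then follows by exhibiting an explicit inverse, since both defining equations are affine in~$\dot{\xx}$.

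The first step is well-definedness. For fixed $x_1\in\HH_T^0$, the scalar ODE $\dot{\psi}_t = f(\psi_t) + g(\psi_t)\dot{x}_1(t)$ with $\psi_0 = v_0 > 0$ admits a unique absolutely continuous solution on $[0,T]$ by standard Carathéodory theory, using Assumption~\ref{ass:SDE_coeffs} ($f$ globally Lipschitz, $g$ of polynomial growth) and $\dot{x}_1\in L^2\subset L^1$. Strict positivity $\psi_t>0$ on $[0,T]$ is the content of~\cite[Proposition 3.11]{Baldi2011}, already invoked in the paper. Continuity on the compact $[0,T]$ then supplies constants $0<m\leq\psi_t\leq M<\infty$, so that $|\dot{\psi}_t|\leq \sup_{[m,M]}|f| + \sup_{[m,M]}|g|\cdot|\dot{x}_1(t)|\in L^2$, giving $\psi\in\HH_T^{v_0,+}$. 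Similarly, $-\half\psi_t + \sqrt{\psi_t}\,\vrho\dot{\xx}(t)^\top$ lies in $L^2$, so $\varphi_t:=\int_0^t(-\half\psi_s + \sqrt{\psi_s}\,\vrho\dot{\xx}(s)^\top)\D s$ defines an element of $\HH_T^0$.

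For the bijection, solve the two equations algebraically for $\dot{\xx}$. Given $(\varphi,\psi)\in\HH_T^0\times\HH_T^{v_0,+}$, set
\begin{equation*}
\dot{y}_1(t) := \frac{\dot{\psi}_t - f(\psi_t)}{g(\psi_t)},
\qquad
\dot{y}_2(t) := \frac{1}{\brho}\left(\frac{\dot{\varphi}_t + \half\psi_t}{\sqrt{\psi_t}} - \rho\,\dot{y}_1(t)\right),
\end{equation*}
and $y_i(t):=\int_0^t\dot{y}_i(s)\D s$. These quantities are well-defined: continuity and strict positivity of $\psi$ yield $\inf_{[0,T]}\psi>0$, so $\sqrt{\psi_t}$ is bounded below; $g(\psi_t)>0$ by Assumption~\ref{ass:SDE_coeffs} is bounded below on the compact range of $\psi$; and $\brho=\sqrt{1-\rho^2}>0$ since $|\rho|<1$. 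Combined with $\dot{\varphi},\dot{\psi}\in L^2$ and boundedness of $f(\psi)$, this yields $\yy\in\HH_T^0\times\HH_T^0$. Direct substitution then shows $K(\yy)=(\varphi,\psi)$, giving surjectivity; injectivity is immediate since any pre-image must satisfy the same algebraic identities almost everywhere.

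The main technical point is securing the uniform lower bound on $\psi$ along a trajectory, which is precisely what makes the inverse formula $L^2$-integrable through the factors $1/\sqrt{\psi}$ and $1/g(\psi)$. This is where the strict positivity result from~\cite{Baldi2011} is essential; the remainder of the argument is algebraic book-keeping.
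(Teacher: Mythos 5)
Your proof is correct and follows essentially the same approach as the paper: solve the $\psi$-ODE first, invoke \cite[Proposition 3.11]{Baldi2011} for strict positivity, integrate to obtain $\varphi$, and then invert the affine system in $\dot{\xx}$ explicitly. You supply more detail than the paper on why the inverse map lands in $\HH_T^0\times\HH_T^0$ (the uniform lower bound on $\psi$ keeps $1/\sqrt{\psi}$ and $1/g(\psi)$ bounded), and your formula for $\dot{y}_1$ also fixes a small typo in the paper's displayed $K^{-1}$, whose first component should involve $\dot{z}_2$ rather than $\dot{z}_1$.
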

\begin{proof}
Clearly $\dot{\psi}\in L^{2}\left([0, T]; \RR\right)$ and the unique solution $\psi$ to the ODE $\dot{\psi_t}=f(\phi_t)+g(\psi_t)\dot{w}_t$ with $\psi_0 = v_0 > 0$ is strictly positive under Assumption~\ref{ass:SDE_coeffs} by~\cite[Proposition 3.11]{Baldi2011}. Therefore $\psi\in\HH_T^{v_0, +}$ and $K(\xx)=(\varphi, \psi)$ is well defined.
%First, we first show the function $K(\xx)=(\varphi, \psi)$ is well defined, i.e., prove that indeed $\psi \in \HH_T^{v_0,+}$. Since clearly $\dot{\psi}\in L^{2}\left([0, T]; \RR\right)$, it is left to show that for all $t \in (0,T], \psi_t > 0$
%\begin{align*}
%\frac{\xi^2}{2}\int_0^T \dot{x}_1(s)^2\D s
%\geq \frac{\xi^2}{2}\int_0^t \dot{x}_1(s)^2\D s
% = \frac{\xi^2}{2}\int_0^t \frac{\{\dot{\psi}_s - \kappa (\theta - \psi_s)\}^2}{\psi_s}\D s
%\\ \geq -\kappa \int_0^t\frac{\dot{\psi}_s(\theta - \psi_s)}{\psi_s}\D s
%\geq - \kappa \{\theta \log(\psi_t) -\theta \log(v_0) + v_0\}.
%\end{align*}
%Therefore:
%$$
%\inf\limits_{t\in[0,T]}\psi_t \geq  v_0\exp\left\{-\frac{1}{\theta}\left(\frac{\kappa}{2}\xi^2\int_0^T \dot{x}_1(s)^2\D s - v_0\right)\right\} >0.
%$$
Finally, $K$ is clearly a bijection and its inverse can be computed explicitly as
%\[
%K^{-1}(\xx) = \left(\frac{1}{\brho}\int_0^t \left( \frac{\dot{x}_2(t) + \frac{1}{2}\psi_t}{\sqrt{\psi_t}} - \rho\frac{\dot{\psi}_t - f(\psi_t)}{g(\psi_t)} \right)\D t,\; \int_0^t \frac{\dot{x}_1(t)-f(\psi_t)}{g(\psi_t)} \right).
%\]
\[
K^{-1}(\mathbf z) = \left(\int_0^t \frac{\dot{z}_1(t)-f(z_2(t))}{g(z_2(t))}\D t,\; \frac1\brho\int_0^t\left\{ \frac{\dot{z}_1(t) + \frac{1}{2}z_2(t)}{\sqrt{z_2(t)}} - \rho\frac{\dot{z}_2(t)-f(z_2(t))}{g(z_2(t))} \right\} \D t\right).
\]
\end{proof}
Using Lemma~\ref{lem:Simplify}, we can substantially simplify the optimisation problem by writing it in terms of $K(\xx)$. To be more precise, we will make use of the following transformation:
\begin{equation}\label{eq:transformation_UZ}
U(\psi) = \frac{\dot{\psi} - f(\psi)}{g(\psi)}
\qquad\text{and}\qquad
Z(\varphi,\psi) = \frac{\dot{\varphi} + \frac{1}{2}\psi}{\sqrt{\psi}},
\end{equation}
which stems from the two components of $K(\xx)$. The optimisation problem becomes
\begin{equation*}
\left\{
\begin{array}{rl}
\varphi^{*},\psi^{*} & = \displaystyle\argmax\limits_{\{\varphi,\psi\} \in \HH_T^0\times \HH_T^{v_0,+}} \left\{F\left(\int_0^T\alpha_t \dot{\varphi}_t \D t\right) - \frac{1}{2}\int_0^T \left\{U(\psi_t)^2 + \frac{1}{\brho^2}(Z(\varphi_t,\psi_t) - \rho U(\psi_t))^2\right\}\D t\right\},\\
\dot{h}_1^{*} & = U(\psi^{*}), \\
\dot{h}_2^{*} & = \displaystyle \frac{1}{\brho}(Z(\varphi^{*},\psi^{*}) - \rho U(\psi^{*})).
\end{array}
\right.
\end{equation*}
This allows us to apply Euler-Lagrange to the problem seen as an optimisation over $\left\{\int_{0}^{\cdot} \alpha_t \dot{\varphi}_t \D t,\,\psi\right\}$:
\begin{align*}
\frac{\D}{\D t}\left\{ \frac{2}{\brho^2} \frac{(Z - \rho U)}{\alpha \sqrt{\psi}} \right\} &= 0, \\
\frac{\D}{\D t}\left\{ \frac{2U}{g(\psi)} - \frac{2\rho}{\brho^2}\frac{(Z - \rho U)}{g(\psi)} \right\} &= - \frac{2U}{g^2(\psi)}\zeta + \frac{2}{\brho^2}(Z-\rho U)\left\{ \frac{1}{2\sqrt{\psi}} - \frac{Z}{2\psi} + \frac{\rho}{g^2(\psi)}\zeta \right\}, 
\end{align*}
where $\zeta \coloneqq f'(\psi)g(\psi) + (\dot{\psi}-f(\psi))g'(\psi)$. This system of equations is still hard to solve for general $f$ and $g$, so we instead consider the case of the Heston model with $f(\psi)=\kappa(\theta - \psi)$ and $g(\psi) = \xi \sqrt{\psi}$.
Then the system becomes
\begin{align*}
\frac{\D}{\D t}\left\{\frac{2}{\brho^2} \frac{(Z - \rho U)}{\alpha\sqrt{\psi}}\right\} &= 0,\\
\frac{\D}{\D t}\left\{2\frac{U}{\xi \sqrt{\psi}} -2\frac{\rho}{\brho^2}\frac{(Z - \rho U)}{\xi\sqrt{\psi}} \right\} &= 2U\left(-\frac{U}{2\psi} + \frac{\kappa}{\xi \sqrt{\psi}}\right)
+ \frac{2}{{\Bar\rho}^2}(Z - \rho U)\left(-\frac{Z-\rho U}{2\psi} + \frac{\frac{1}{2} - \rho \frac{\kappa}{\xi}}{\sqrt{\psi}}\right).
\end{align*}
Solving the first ODE and plugging it into the second one gives for all $t\in[0,T]$
\begin{align*}
\frac{1}{\brho^2} (Z_t - \rho U_t) &= \frac{1}{2} \beta \alpha_t \psi_t,\\
\dot{A}_t &= -\frac{1}{2}\xi A_t^2 + \kappa A_t + \frac{1}{2}\xi \beta \alpha_t \left(\frac{1}{2} - \frac{1}{4}\brho^2\beta\alpha_t - \rho\frac{\kappa}{\xi} \right) + \frac{1}{2}\rho\beta\dot{\alpha}_t,
\end{align*}
where $\beta>0$ is an arbitrary constant and $A_t = \frac{U_t}{\sqrt{\psi_t}}$ is the solution to the Ricatti equation. The option payoff at the terminal time determines the boundary conditions through the Euler-Lagrange equations so that
\[
\beta = - \frac{2F_T'}{\sqrt{\psi_T}} 
\qquad \text{and} \qquad 
A_T = -\frac{\rho}{\xi}F_T'\alpha_T,
\]
with $F_T=F(\smallint_0^T\alpha_t\dot{\varphi}_t \D t)$. 
Since both conditions include the same optimising variable, 
the resulting problem in fact becomes an optimisation in~$\RR^2$ over~$\beta$ and~$A_T$ (or equivalently~$A_0$) and is therefore much simpler than the original optimisation problem. 
The procedure is the following: after solving for $A_t$ for all $t\in[0,T]$, we can now solve for the couple $(\varphi,\psi)$ by writing $U$ and $Z$ in terms of the couple using \eqref{eq:transformation_UZ}. 
%, since $A_t=\tfrac{U_t}{\sqrt{\psi_t}}=\tfrac{\dot\psi_t-\kappa(\theta-\psi_t)}{\psi_t}$. Similarly, we can solve for $\varphi$ by writing $Z$ in terms of $(\varphi,\psi)$.
We note that in the small-noise setting the results are of the similar form.
\begin{example}
In the case $\alpha_t=\alpha>0$, when the Riccati equation can be reduced to a separable differential equation, 
let $C(\beta)\coloneqq \frac12\xi\beta\alpha\left(\frac12-\frac14\brho^2\beta\alpha-\rho\frac\kappa\xi\right)$ so that the Riccati equation reads
\[
\frac{\D A_t}{\D t} = -\frac\xi 2 A^2_t + \kappa A_t + C(\beta),
\]
%The solution is described by the following integral
%\[
%\int_0^t \frac{\D A_s}{-\frac \xi 2 A^2_s + \kappa A_s + %C(\beta)} = \int_0^t \D s,
%\]
%which can be easily solved and the solution expression becomes
the solution to which is\[
A_t = \frac1\xi\left\{\sqrt{2\xi C(\beta)-\kappa^2}\tan\left(-\frac12\frac{t-D(A_0, \beta)}{\sqrt{2\xi C(\beta)-\kappa^2}}\right)+\kappa\right\},
\]
where $D\in\RR$ can be determined from the initial condition on $A$. Now, since $A_t=\tfrac{U_t}{\sqrt{\psi_t}}=\tfrac{\dot\psi_t-\kappa(\theta-\psi_t)}{\psi_t}$, we are to solve
$\dot\psi_t + (\kappa - A_t)\psi_t = \kappa\theta$,
for $t \in[0,T]$,
which is just a non-homogeneous linear ODE with the solution
$$
\psi_t = \left(\kappa\theta\int_0^t 
\exp\left\{\int_0^s (\kappa-A_u)\D u\right\}\D s + v_0\right)
\exp\left\{-\int_0^t (\kappa-A_s)\D s\right\}.
$$
Then the optimisation problem for $\hh$ reduces to
\begin{equation*}
\left\{
\begin{array}{rcll}
h_1(t)
 & = & \displaystyle U(\psi^{*})
 & = \displaystyle \min_{A_0, \beta>0} \frac{\dot \psi_t(A_0, \beta)-\kappa(\theta-\psi_t(A_0, \beta))}{\psi_t(A_0, \beta)} \\
h_2(t)
& = & \displaystyle \frac{1}{\brho} \Big(Z(\varphi^{*}, \psi^{*})-\rho U(\psi^{*})\Big)
& = \displaystyle \frac{\alpha}{2}\min_{A_0, \beta>0} \beta\psi_t(A_0, \beta),
\end{array}
\right.
\end{equation*}
which is an optimisation over $(A_0,\beta)\in \RR^2$.
\end{example}
%%%%%%%%%%%%%%%%%%%%%%%%%%%%%%%%%%%%%%%%%%%%
\subsection{Small-time LDP}
Applying the mapping $t\mapsto \eps t$ to~\eqref{eq:log_price_dyn} yields
\begin{equation}\label{eq:log_price_dynSmallTime}
\begin{array}{rll}
\D X_t^\eps & = \displaystyle  -\frac{1}{2}\eps V_t^\eps \D t + \seps\sqrt{V_t^\eps} \D B_t,
 & X_0^\eps = 0,\\
\D V_t^\eps & = \displaystyle \eps f(V_t^\eps)\D t + g(V_t^\eps) \D W^{\eps}_t, & V_0 = v_0.
\end{array}
\end{equation}
%with $\Wf^{\eps} = \sqrt{\eps}\Wf$, and $B^{\eps} = \vrho\cdot\Wf^\eps^\top$. 
Robertson~\cite{Robertson2010} showed that $\eps\int_{0}^{\cdot} V_t^\eps \D t$ is in fact exponentially equivalent to zero,
so that the drift of~$V^\eps$ can be ignored at the large deviations level. 
In the case of a general drift~$f$, the following lemma provides a similar statement:
\begin{lemma}
The process $\eps\int_{0}^{\cdot} f(V_t^\eps) \D t$ is exponentially equivalent to zero.
\end{lemma}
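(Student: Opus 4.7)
The plan is to reduce the statement to Robertson's exponential equivalence of $\eps\int_0^\cdot V_t^\eps\D t$ to zero (stated in the paragraph preceding the lemma), by exploiting the Lipschitz continuity of $f$. Recall that a family $\{Y^\eps\}$ of continuous processes is exponentially equivalent to zero on $\Cc_T$ precisely when, for every $\delta>0$,
$$
\limsup_{\eps\downarrow 0}\eps\log\PP\left(\sup_{t\in[0,T]}|Y^\eps_t|>\delta\right)=-\infty.
$$

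The first step is to turn the global Lipschitz continuity of $f$ from Assumption~\ref{ass:SDE_coeffs}(i) into the linear bound $|f(v)|\leq C(1+v)$ for some $C>0$ and all $v\in\RR^+$. Since $V^\eps$ remains strictly positive (the same argument used in the small-noise LDP via~\cite[Proposition~3.11]{Baldi2011} applies, because the small-time SDE~\eqref{eq:log_price_dynSmallTime} is a deterministic time-rescaling of~\eqref{eq:log_price_dyn}), the absolute value can be dropped inside the integral, yielding the pathwise estimate
$$
\sup_{t\in[0,T]}\left|\eps\int_0^t f(V^\eps_s)\D s\right|\leq C\eps T + C\eps\int_0^T V^\eps_s\D s.
$$

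The second step is the probabilistic comparison. Fixing $\delta>0$ and choosing $\eps_0>0$ small enough that $C\eps T<\delta/2$ for all $\eps<\eps_0$, we obtain for such $\eps$
$$
\PP\left(\sup_{t\in[0,T]}\left|\eps\int_0^t f(V^\eps_s)\D s\right|>\delta\right)\leq \PP\left(\eps\int_0^T V^\eps_s\D s>\frac{\delta}{2C}\right),
$$
and Robertson's exponential equivalence directly gives $\eps\log$ of the right-hand side $\to -\infty$ as $\eps\downarrow 0$, completing the argument.

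The main obstacle in this scheme is not the Lipschitz reduction itself, which is elementary, but the underlying exponential equivalence of $\eps\int_0^\cdot V^\eps_s\D s$ to zero. If one needed to reprove this in the present generality rather than invoking~\cite{Robertson2010}, the task would be to establish uniform exponential-moment control of $V^\eps$ under the small-time scaling, leveraging the $p$-polynomial growth of $g$ from Assumption~\ref{ass:SDE_coeffs}(ii); this is feasible via a standard Gronwall-plus-BDG argument but is where the real work sits.
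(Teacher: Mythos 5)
Your reduction scheme — linearize $f$ via Lipschitz continuity, drop the sign of $V^\eps$, and push the event $\{\sup_t |\eps\int_0^t f(V^\eps_s)\D s|>\delta\}$ inside the event $\{\eps\int_0^T V^\eps_s\D s>\delta/(2C)\}$ for small $\eps$ — is correct and cleanly modular, and it is a genuinely different route from the paper's. The paper does \emph{not} pass through the intermediate claim that $\eps\int_0^\cdot V^\eps_s\D s$ is exponentially equivalent to zero; instead it attacks the exponential moment head-on: Markov's inequality reduces the task to showing $\limsup_\eps \eps\log\EE[\exp\{\frac{1}{\eps}\int_0^T f(V^\eps_s)\D s\}]<\infty$, after which Jensen's inequality, the linear-growth bound $|f(v)|\le 1+|v|$, and a Gronwall estimate on $\EE[\exp\{\frac{T}{\eps}|V^\eps_s|\}]$ close the argument. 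Both proofs ultimately do the same work — control exponential moments of $V^\eps$ under the small-time scaling — but the paper inlines this and thereby avoids any dependence on an externally cited intermediate fact.

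The weak point in your proposal is exactly the one you flag at the end, and it is more than cosmetic. Robertson's exponential equivalence of $\eps\int_0^\cdot V^\eps_t\D t$ to zero is proved in \cite{Robertson2010} for the CIR (Heston) variance process only; the lemma here is stated for general $(f,g)$ under Assumption~\ref{ass:SDE_coeffs}. So after your reduction you are left to prove, for the general $g$ with $p$-polynomial growth, that $\limsup_\eps \eps\log\PP(\eps\int_0^T V^\eps_s\D s>\delta)=-\infty$. You correctly observe this requires a Gronwall-plus-BDG exponential-moment bound; but that bound is essentially the technical content of the paper's proof. So, as written, your proposal shifts the problem rather than solving it. Two smaller remarks: citing \cite[Proposition~3.11]{Baldi2011} for positivity of $V^\eps$ is not quite right (that proposition concerns the deterministic controlled ODE, not the SDE), though positivity is not in fact needed — keeping the absolute value $|V^\eps_s|$ in the linear bound works just as well, which is what the paper does.
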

\begin{proof}
Markov's inequality implies that, for any $\delta>0$,
\[
\PP\left[ \int_0^T f(V_s^\eps)\D s > \frac{2\delta}{\eps} \right] \leq \exp\left\{ -\frac{2\delta}{\eps^2}\right\}\EE\left[ \exp\left\{ \frac 1\eps \int_0^T f(V_s^\eps)\D s \right\} \right],
\]
and we are therefore left to show that 
$\limsup_{\eps\downarrow 0}\eps\log\EE\left[ \exp\left\{ \frac 1\eps \int_0^T f(V_s^\eps)\D s \right\} \right]$
is finite.
To that end we apply the integral Jensen's inequality
\[
\EE\left[ \exp\left\{ \frac 1\eps \int_0^T f(V_s^\eps)\D s \right\} \right] \leq \frac{1}{T} \int_0^T \EE\left[ \exp\left\{ \frac{T}\eps f(V_s^\eps) \right\} \right] \D s
\]
and the linear growth condition from the global Lipschitz condition in Assumption~\ref{ass:SDE_coeffs}:
\begin{align*}
\frac{1}{T} \int_0^T \EE\left[ \exp\left\{ \frac{T}\eps f(V_s^\eps) \right\} \right] \D s &\leq \frac{1}{T} \int_0^T \EE\left[ \exp\left\{ \frac{T}\eps\left(1 + |V_s^\eps| \right) \right\} \right] \D s 
= \frac{1}{T}\E^{\frac{T}{\eps}} \int_0^T \EE\left[ \exp\left\{ \frac{T}\eps |V_s^\eps| \right\} \right] \D s.
\end{align*}
Next, by the properties of the logarithm and supremum
\begin{align*}
& \limsup_{\eps\downarrow 0}\eps\log\EE\left[ \exp\left\{ \frac 1\eps \int_0^T f(V_s^\eps)\D s \right\} \right] \\ & \qquad \qquad \leq \limsup_{\eps\downarrow 0}\left\{\eps\log\frac1T\right\} + T + \limsup_{\eps\downarrow 0}\eps\log\left\{\int_0^T  \EE\left[ \exp\left\{ \frac{T}\eps |V_s^\eps| \right\} \right] \D s\right\}.
\end{align*}
We can now apply Gronwall's Lemma to the last term, which yields for some $C>0$,
\[
\limsup_{\eps\downarrow 0}\eps\log\EE\left[ \exp\left\{ \frac 1\eps \int_0^T f(V_s^\eps)\D s \right\} \right] \leq T + \limsup_{\eps\downarrow 0}\eps\log\left\{ \exp\left\{ CT + \frac{T}{\eps} v_0 \right\} \right\} = T(1+v_0),
\]
which is finite.
\end{proof}
Following this lemma, the results from the previous section could simply be adapted so that $\{V_t^\eps,W_t^{\eps}\}$ satisfy the same LDP by simply setting $f=0$ (or equivalently $\kappa = 0$ in the case of Heston). However, this violates the condition $f(0)>0$ in Baldi and Caramellino~\cite{Baldi2011}. Fortunately, Conforti, Deuschel and De Marco~\cite{conforti2015small} removed the need for strict positivity on the drift at the initial time by imposing more stringent conditions on the diffusion.
\begin{assumption}\label{ass:small_time_LDP}\
\begin{enumerate}[i)]
	\item There exists $\xi>0$ such that  $g(y)=\xi |y|^\gamma$ for $\gamma \in[1 / 2,1)$ for all $y\geq0$;
	\item The equality $b(y)=\tau(y)+ K y$ holds for all $y\geq0$, where $\tau$ is a Lipschitz continuous and bounded function, and $\tau(y) \geq 0$ in a neighbourhood of the origin.
\end{enumerate}
\end{assumption}

\begin{theorem}[Theorem 1.1 in~\cite{conforti2015small}]
Under Assumption~\ref{ass:small_time_LDP}, 
the solution~$V^\eps$ to~\eqref{eq:log_price_dynNoise}
satisfies $\left\{V^{\eps}\right\}\sim\LDP(\IIV, \Cc([0, T]\to\RR^{+}))$ with
$$
\IIV(\varphi)=\frac{1}{2 \xi^{2}} \int_{0}^{T}\left(\frac{\dot{\varphi}_{t} - K\varphi_{t}}{\varphi_{t}^{\gamma}}\right)^{2} \ind_{\left\{\varphi_{t} \neq 0\right\}} \D t.
$$
\end{theorem}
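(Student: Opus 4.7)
The statement is a small-noise LDP for an SDE whose diffusion $g(y) = \xi y^\gamma$ is neither uniformly elliptic nor Lipschitz at the boundary $y=0$, so the classical Freidlin-Wentzell theorem does not apply directly. The rate function must therefore be defined in a way that allows paths to touch or rest at the origin, which explains the indicator $\ind_{\{\varphi_t \neq 0\}}$ and the absence of any contribution on the zero set.

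I would attack the statement in three steps. First, establish exponential tightness of $\{V^\eps\}$ on $\mathcal{C}([0,T]\to\RR^+)$ via exponential moment bounds, using the linear growth of $b(y)=\tau(y)+Ky$ (bounded $\tau$ plus linear $Ky$) together with the polynomial bound $g(y)\leq 1 + y^\gamma$, through a Burkholder-Davis-Gundy plus Gronwall argument applied to $\EE[\exp(\lambda/\eps \cdot \sup_{t\leq T}|V_t^\eps|)]$. Second, for each $\delta>0$ localise by stopping $V^\eps$ on first hitting the level $\delta$: on $\{y \geq \delta\}$ the coefficient $g$ is bounded away from zero and Lipschitz, so the standard Freidlin-Wentzell theorem yields an LDP for the stopped process with rate function of the claimed form restricted to paths staying above $\delta$. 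Third, pass to the limit $\delta \downarrow 0$ and identify the limiting rate function as the expression stated; the indicator reflects that absolute continuity of $\varphi$ forces $\dot{\varphi}_t = 0$ a.e.\ on $\{\varphi_t = 0\}$, so no cost is accumulated there.

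The main obstacle is the third step. The upper bound requires showing that excursions close to zero are exponentially unlikely unless the limiting path is tangent to the boundary in a way permitted by the rate function; this typically rests on exponential estimates for hitting times of level sets $\{y=\delta\}$, which are delicate because near $0$ the diffusion vanishes and the linear drift $Ky$ is weak. The lower bound is more subtle still: the natural Girsanov control $\dot{h}_t = (\dot{\varphi}_t - K\varphi_t)/(\xi\varphi_t^\gamma)$ that would steer $V^\eps$ towards a prescribed $\varphi$ blows up whenever $\varphi_t \to 0$, so the standard weak-convergence argument fails. The resolution, as in Conforti-Deuschel-De Marco, is to approximate $\varphi$ by paths $\varphi^{(n)}$ bounded away from zero (for instance by replacing $\varphi$ by $\varphi \vee 1/n$ and reparametrising to preserve absolute continuity), apply the localised LDP of step two along this approximating sequence, and then show that $\IIV(\varphi^{(n)}) \to \IIV(\varphi)$ by dominated convergence together with the observation that on $\{\varphi_t = 0\}$ the integrand is understood as zero. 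Reversing the order of the limits $n\to\infty$ and $\delta\downarrow 0$, and checking that the residual contribution from excursions of $V^\eps$ into $[0,\delta]$ under the steered measure is negligible, is where essentially all of the technical work lies.
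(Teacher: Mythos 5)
The paper does not prove this statement: it is quoted verbatim as Theorem~1.1 of Conforti, Deuschel and De Marco~\cite{conforti2015small}, and the authors rely on that reference without reproducing the argument. There is therefore no in-paper proof against which your attempt can be compared; what can be assessed is whether your sketch is a plausible reconstruction of the cited result.

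Your three-step outline (exponential tightness; localised Freidlin--Wentzell on $\{y\geq\delta\}$ where the coefficients are Lipschitz and nondegenerate; passage to the limit $\delta\downarrow 0$ by approximating paths from below and by hitting-time estimates for the upper bound) is broadly the right shape, and you correctly place the genuine difficulty in the third step: the Girsanov control $\dot h_t=(\dot\varphi_t-K\varphi_t)/(\xi\varphi_t^\gamma)$ degenerates whenever $\varphi$ approaches zero, so the naive lower-bound argument fails. Two remarks on the details. First, the truncation $\varphi^{(n)}=\varphi\vee\tfrac1n$ is itself absolutely continuous (the pointwise maximum of two AC functions is AC), so no ``reparametrisation'' is needed; the useful fact is that $\IIV(\varphi^{(n)})\to\IIV(\varphi)$ because the contribution of the set $\{\varphi\leq\tfrac1n\}$ is of order $K^2 n^{2\gamma-2}\,|\{\varphi\leq\tfrac1n\}|$, which vanishes as $n\to\infty$ precisely because $\gamma<1$ --- this is where the hypothesis $\gamma\in[\tfrac12,1)$ enters. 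Second, for the upper bound one cannot simply discard paths that dip below~$\delta$: since absolutely continuous nonnegative paths touching the origin can still have finite rate, they must be accounted for in the weak upper bound over compacts, which calls for direct exponential estimates near the boundary (for instance by comparison with an explicit driftless CEV process) rather than exclusion by an exponentially negligible event. That is where most of the technical content of~\cite{conforti2015small} lies, consistent with where your sketch locates the difficulty, though your proposal does not yet supply the estimates that would close the upper bound.
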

Therefore by setting $K=0$ we can use the methodology form the previous section since the LDPs are the same. Similarly as before, we only consider the deterministic change of drift, since the stochastic case is very similar.
%\paragraph{Deterministic change of drift}
We therefore search for~$\hh$ such that
$$
\hh^* = \argmax\limits_{\xx\in \HH_T^0} 
\left\{F(\varphi_T(\xx)) - \frac{\|\dot{\xx}\|_{T}^2}{2}\right\},
$$
where $\varphi_t(\xx)$ is the unique solution on $[0,T]$ to:
\begin{equation*}
\left\{
\begin{array}{rll}
\varphi_t(\xx) & = \displaystyle \sqrt{\psi_t}\vrho \dot{\xx}(t)^\top, & \varphi_0(\xx) = 0,\\
\dot{\psi}_t & =  \displaystyle g(\psi_t) \dot{x}_1(t), & \psi_0 = v_0.
\end{array}
\right.
\end{equation*}

\subsubsection{Example: option with path-dependent payoff}
We consider a payoff $G(\alpha\cT X)$ 
as in Section~\ref{sec:PayoffPathDepG}.
In the deterministic case, we have
\begin{equation*}
\left\{
\begin{array}{rl}
\varphi^{*},\psi^{*}
& = \displaystyle \argmax\limits_{\{\varphi,\psi\} \in \HH_T^0\times \HH_T^{v_0,+}} \left\{F\left(\int_0^T\alpha_t \dot{\varphi}_t \D t\right) - \frac{1}{2}\int_0^T \left\{U(\psi_t)^2 + \left(\frac{Z(\varphi_t,\psi_t) - \rho U(\psi_t)}{\brho}\right)^2\right\}\D t\right\},\\
\dot{h}_1^{*} & = \displaystyle U(\psi^{*}), \\
\dot{h}_2^{*} & = \displaystyle  \frac{Z(\varphi^{*},\psi^{*}) - \rho U(\psi^{*})}{\brho},
\end{array}
\right.
\end{equation*}
where
$U(\psi) = \frac{\dot{\psi}}{\xi \sqrt{\psi}}$
and
$Z(\varphi,\psi) = \frac{\dot{\varphi}}{\sqrt{\psi}}$.
The same way as before we only look at the Heston model and we have for all $t\in[0,T]$,
with $A_t = \frac{U_t}{\sqrt{\psi_t}}$,
$$
%\begin{align*}
\frac{1}{\brho^2}(Z_t - \rho U_t) = \frac{1}{2} \beta \alpha_t \psi_t
\qquad\text{and}\qquad
\dot{A}_t= -\frac{1}{2}\xi A_t^2 + \frac{1}{2}\xi \beta \alpha_t \left(\frac{1}{2} - \frac{1}{4}\brho^2\beta\alpha_t \right) + \frac{1}{2}\rho\beta\dot{\alpha}_t.
$$
%\end{align*}

%%%%%%%%%%%%%%%%%%%%%%%%%%%%%%%%%%%%%%%%%%%%
\begin{remark}
%{Large-time LDP}
Variance reduction for affine stochastic volatility processes via importance sampling through the large-time approximation is extensively covered in~\cite{Grbac2021}, so we do not repeat the study and refer the reader to the aforementioned work.
\end{remark}
%In order to follow the previous methodology we would need LDPs in large-time Heston. In~\cite{Grbac2021}, using again LDP and Varadhan's lemma, the authors derive changes of measure of the form
%$$
%\frac{\E^{\int_0^TX_t\D\mu_t}}{\EE\Big[\E^{\int_0^TX_t\D\mu_t}\Big]},
%$$
%for some measure $\mu$ and in large-time Heston using the form of the characteristic function of the log-price in the Heston model. The optimisation problem that is found is easy to solve for Geometric Asian Call options (see numerical results) and is fast to compute. The corresponding change of drift is also given in~\cite{ldp-heston-largetime}: it is a change of drift of the form $\int_0^Tf(t)\sqrt{V_t}\D t$ but for specific type of functions $f$. It is therefore a specific case of the slightly adaptive changes of drift we were looking at.

%%%%%%%%%%%%%%%%%%%%%%%%%%%%%%%%%%%%%%%%%%%%%%%%%
\section{Importance sampling via moderate deviations}\label{sec:MDP_IS}
In the previous sections, 
large deviations provided us with a way of computing the asymptotic change of measure for importance sampling, 
via an $\eps$-approximation of the log-price~$X^{\eps}$.
While the large deviations rate function
was a convenient quadratic in the deterministic volatility setting, 
it is in general rather cumbersome to compute numerically, 
unfortunately offsetting any importance sampling gain. 
Moderate deviations act on a cruder scale, but provide quadratic rate functions, 
easier to compute.
Suppose that the sequence $\{X^{\eps}\}_{\eps>0}$ converges in probability to~$\overline{X}$. 
Moderate deviations for $\{X^{\eps}\}_{\eps>0}$ are defined as large deviations for 
the rescaled sequence 
$$
\left\{\frac{X^{\eps} - \overline{X}}{\sqrt{\eps}h(\eps)}
\right\}_{\eps>0},
$$
where $h(\eps)$ tends to infinity 
and $\sqrt{\eps}h(\eps)$ to zero
as~$\eps$ tends to zero. 
A typical choice is $h(\eps) = \eps^{-\alpha}$ for $\alpha \in (0,\frac{1}{2})$ or equivalently $\frac{1}{\sqrt{\eps}h(\eps)} = \eps^{-\alpha}$ for $\alpha\in(0,\frac{1}{2})$. 
We shall stick to this choice of~$h$ 
in our analysis in order to highlight clear rates of convergence.
We now introduce the approximation
\begin{equation}\label{eq:MDPSmallNoiseGeneral}
\widetilde{X}^{\eps} := \overline{X} + 
\eps^{-\alpha}\left(X^{\eps} - \overline{X}\right).
\end{equation}
This process is centered around~$\overline{X}$ and is a simple candidate. 
Furthermore, in stochastic volatility models, and particularly in large-time setting, the moderate deviations rate function is simply the second-order Taylor expansion of the large deviations rate function around its minimum~$\overline{X}$ ~\cite[Remark 3.5]{Jacquier2019}. 
We again consider the dynamics~\eqref{eq:log_price_dyn}
with Assumption~\ref{ass:SDE_coeffs} for the coefficients. 
We further assume the following conditions:

\begin{assumption}\label{ass:SDE_V}\ 
\begin{enumerate}[i)]
    \item For $f\in \Cc^2(\RR^{+}\to\RR)$, 
    the equation $\psi_t = v_0 + \int_0^t f(\psi_s)\D s$ admits a unique strictly positive solution $\psi\in\Cc^2([0,T]\to\R)$;
    \item The small-noise approximation~\eqref{eq:log_price_dynNoise} of~$V$ satisfies an LDP with the good rate function~$\IIV$ and speed~$\eps$ such that~$\IIV$ admits a unique minimum and is null there.
\end{enumerate}
\end{assumption}
% \begin{remark}
% \todo{Remove or rewrite remark}
% Consequently, as it will be seen from Lemma~\ref{lem:degenerate-limit}, $V$ converges in probability to the unique minimum of $\IIV$, the function $\psi$.
% \end{remark}
As it will be shown in Lemma~\ref{lem:degenerate-limit}, 
the sequence $\{V^\eps\}_{\eps>0}$ converges in probability to the function~$\psi$ as a consequence of Assumption \ref{ass:SDE_V}. 
This provides a natural choice for the centered process 
${\overline{X}_t = -\frac{1}{2}\int_0^t\psi_s\D s}$,
so that the approximation~\eqref{eq:MDPSmallNoiseGeneral}
reads, for any $t\in [0,T]$,
\begin{equation}\label{eq:MDPSmallNoiseX}
\widetilde{X}^\eps_t = -\frac{1}{2} \int_0^t\psi_s \D s + \eps^{-\alpha}\left(X^{\eps}_t + \frac{1}{2} \int_0^t\psi_s \D s\right).
\end{equation}

\subsection{Small-noise moderate deviations}
Plugging in the small-noise approximation of~$X^\eps$ introduced in~\eqref{eq:log_price_dynNoise}, the process~$\widetilde{X}^\eps$ in~\eqref{eq:MDPSmallNoiseX} satisfies the SDE
\begin{align*}
\D \widetilde{X}^\eps_t
 & = \displaystyle\left(-\frac{1}{2} + \frac{1}{2\eps^{\alpha}}\right)\psi_t\D t
+ \eps^{-\alpha}\D X^{\eps}_t\\
 & = \displaystyle\left(-\frac{1}{2} + \frac{1}{2\eps^{\alpha}}\right)\psi_t\D t
+ \frac{1}{\eps^{\alpha}}\left(-\frac{1}{2}V_t^\eps \D t + \seps\sqrt{V_t^\eps} \D B_t\right)\\
 & = \displaystyle
 -\frac{1}{2}\psi_t\D t - \frac{1}{2\eps^{\alpha}}
 \left(V_t^\eps - \psi_t\right) \D t + 
\eps^{\half-\alpha}\sqrt{V_t^\eps} \D B_t
\end{align*}
starting from $\widetilde{X}^{\eps}_0 = 0$,
together with the small-noise approximation~\eqref{eq:log_price_dynNoise} for the variance:
$$
\D V_t^{\eps} = \displaystyle  f(V_t^{\eps})\D t + \seps g(V_t^{\eps})\D W_t,
$$
starting from $V_0^{\eps} = v_0$.
%\begin{equation*}
%\begin{array}{rll}
%\D \widetilde{X}^{\widetilde{\eps}}_t
%& = \displaystyle -\frac{1}{2}\psi_t \D t-\frac{1}{2\widetilde{\eps}^{\alpha}}\left(V_t^{\widetilde{\eps}}-\psi_t\right)
% \D t + {\widetilde{\eps}}^{\frac{1}{2}-\alpha}\sqrt{V_t} \D B_t,
% & \widetilde{X}_0^{\widetilde{\eps}} = 0,\\
%\D V_t^{\widetilde{\eps}} & = \displaystyle  f(V_t^{\widetilde{\eps}})\D t + \widetilde{\eps}^{\half} g(V_t^{\widetilde{\eps}})\D W_t & V_0^{\widetilde{\eps}} = v_0,
%\end{array}
%\end{equation*}
%for $\widetilde{\eps}>0$.
This transformation creates a lag between the decreasing speeds of~$\widetilde{X}^{\eps}$ and that of~$V^{\widetilde{\eps}}$ (speed of convergence to zero of the diffusion part of the volatility process i.e. $\eps^{\frac{1}{2} - \alpha}$ versus $\eps^{\frac{1}{2}}$). Since $\widetilde{X}^{\eps}$ is our reference, %by Lemma~\ref{degenerate}, looking at the processes according to LDPs with speed $\eps^{1 -2\alpha}$, $V^\eps$ is exponentially equivalent to $\psi$. Let us also note that moderate deviations of $V^\eps$ are central to obtain large deviations for $\widetilde{X}^\eps$. \\
we adjust the speed of the LDP via 
$\eps^{1 - 2\alpha}\mapsto \seps$.
With $\beta := \alpha / (1 - 2\alpha)$
and $\eta^\eps_t := (V_t^\eps - \psi_t)\eps^{-\beta}$, we obtain the system
\begin{equation}\label{eq:eps_MDP__eta_V}
\def\arraystretch{1.25}
\begin{array}{rll}
\D \widetilde{X}^\eps_t & = \displaystyle -\frac{1}{2}\psi_t \D t -\frac{1}{2}\eta^\eps_t \D t + \sqrt{\eps}\sqrt{V_t^\eps} \D B_t, 
& \widetilde{X}_0^\eps = 0,\\
\D \eta_{t}^\eps & = \displaystyle \eps^{-\beta}(f(V_t^\eps)-f(\psi_t))\D t + \sqrt{\eps}g(V_t^\eps)\D W_t, & \eta_{0}^\eps = 0,\\
\D V_t^\eps & = \displaystyle  f(V_t^\eps)\D t + \eps^{\frac{1}{2} + \beta}g(V_t^\eps)\D W_t, & V_0^\eps = v_0.
\end{array}
\end{equation}
Similarly, in the price small-noise setting we have for $\gamma =\frac{1}{1-2\alpha} > 0$,
\begin{equation}\label{eq:eps_price_dyn}
\def\arraystretch{1.25}
\begin{array}{rll}
\D \widetilde{S}^\eps_t & = \displaystyle -\frac{1}{2}\eps^\gamma\eta^\eps_t \D t + \sqrt{\eps}\sqrt{V_t^\eps} \D B_t,
& \widetilde{S}_0^\eps = 1,\\
\D \eta_{t}^\eps & = \displaystyle \eps^{-\beta}(f(V_t^\eps)-f(\psi_t))\D t + \sqrt{\eps}g(V_t^\eps)\D W_t, & \eta_{0}^\eps = 0,\\
\D V_t^\eps & = \displaystyle  f(V_t^\eps)\D t + \eps^{\frac{1}{2} + \beta}g(V_t^\eps)\D W_t, & V_0^\eps = v_0.
\end{array}
\end{equation}
In the following, we provide an LDP for~$\{\eta^\eps\}$ 
(equivalently an MDP for~$\{V^\eps\}$). 
We relegate more technical proofs to Appendix~\ref{apx:proofs}.

%%%%%%%%%%%%%%%%%%%%%%%%%%%%%%%%%%%%%%%%%%%
\subsubsection{Theoretical results}
The main moderate deviations result of this section is Theorem~\ref{thm:mdp-small-noise}, 
but we first start with the following three technical lemmata,
useful for the theorem but also of independent interest, proved in Appendices~\ref{sec:lem:degenerate_Proof}-\ref{sec:lem:expct_convg_Proof}-\ref{sec:lem:degenerate-limit_Proof}:
\begin{lemma}\label{lem:degenerate}
Let $\{Z^{\eps}\}_{\eps>0}$ be a family of random variables mapping to any metrisable space~$\Xx$ and satisfying an LDP with good rate function~$\II$.
If there exists a unique $x_0$ such that $I(x_0) = 0$, 
then for all $\beta > 1$, $Z^{\eps^\beta}$ satisfies an LDP with the good rate function
\[ 
\II(x)=
\begin{cases} 
      0, & \text{ for } x=x_0, \\
      +\infty, & \text{ elsewhere.}
\end{cases}
\]
Equivalently, if for $\beta > 1$, $Z^{\eps}$ satisfies an LDP with speed $\eps^{\beta}$ and the good rate function with a unique minimum at zero, then $Z^{\eps}$ is exponentially equivalent to $x_0$ with speed $\eps$.
\end{lemma}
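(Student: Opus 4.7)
The plan is to verify the upper and lower LDP bounds for $\{Z^{\eps^\beta}\}$ at speed $\eps$ separately, using the hypothesised LDP for $\{Z^{\eps}\}$ at speed $\eps$ (applied with $\eps$ replaced by $\eps^\beta$).

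For the upper bound, take a closed set $F\subset\Xx$. If $x_0\in F$, the degenerate rate function has infimum $0$ on $F$ and the bound is trivial. If $x_0\notin F$, the crucial step is to show that $c:=\inf_{x\in F}\II(x)>0$. This is where goodness of $\II$ is essential: pick $x_n\in F$ with $\II(x_n)\downarrow c$; eventually these iterates lie in a compact sub-level set $\{x:\II(x)\leq c+1\}$, so a subsequence converges to some $x^*\in F$ (using closedness of $F$), and lower semicontinuity yields $\II(x^*)\leq c$. If $c=0$, uniqueness of the minimiser would force $x^*=x_0\notin F$, a contradiction. With $c>0$ in hand, the original LDP applied along the sequence $\eps^\beta\downarrow 0$ gives $\PP(Z^{\eps^\beta}\in F)\leq \exp\{-c/(2\eps^\beta)\}$ for all $\eps$ small enough, hence
$$
\eps\log\PP(Z^{\eps^\beta}\in F)\leq -\frac{c}{2\eps^{\beta-1}}\xrightarrow[\eps\downarrow 0]{}-\infty,
$$
which matches the degenerate rate function on $F$.

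For the lower bound, take an open set $G$. If $x_0\notin G$, the degenerate infimum on $G$ is $+\infty$ and the required bound is vacuous. If $x_0\in G$, then $G^c$ is closed and does not contain $x_0$; the upper bound just proved yields $\PP(Z^{\eps^\beta}\in G^c)\to 0$, so that $\PP(Z^{\eps^\beta}\in G)\to 1$ and therefore $\eps\log\PP(Z^{\eps^\beta}\in G)\to 0$, matching $-\inf_G\II_{\mathrm{deg}}=0$. The equivalent reformulation is then immediate: specialising the degenerate LDP to the closed sets $\{x:d(x,x_0)\geq\delta\}$ gives $\limsup_{\eps\downarrow 0}\eps\log\PP(d(Z^{\eps^\beta},x_0)\geq\delta)=-\infty$ for every $\delta>0$, which is precisely exponential equivalence of $\{Z^{\eps^\beta}\}$ with the constant sequence $x_0$ at speed $\eps$ (in the sense of~\cite[Definition~4.2.10]{Dembo2010}).

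The only substantive point in the argument is the compactness/lower-semicontinuity step showing $\inf_F\II>0$ for closed $F\not\ni x_0$; this is where both hypotheses on $\II$ (goodness and uniqueness of the minimiser) are used, and without either the conclusion fails. Everything else is elementary manipulation of the exponents $\eps$ versus $\eps^\beta$ with $\beta>1$.
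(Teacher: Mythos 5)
Your proof is correct and rests on the same key observation as the paper's: for a closed set not containing $x_0$, goodness of $\II$ together with uniqueness of its minimiser forces the infimum of $\II$ on that set to be strictly positive, and the mismatch between speeds $\eps$ and $\eps^\beta$ (with $\beta>1$) then sends the log-probability upper bound to $-\infty$. The organisation differs, however. The paper first proves that $\{Z^{\eps^\beta}\}$ is exponentially equivalent at speed $\eps$ to the deterministic constant $x_0$, then notes that the constant trivially satisfies an LDP with the degenerate rate function, and finally transfers the LDP across exponential equivalence (cf.\ \cite[Theorem~4.2.13]{Dembo2010}); the exponential-equivalence reformulation is thus the starting point. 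You instead verify the upper and lower LDP bounds for $\{Z^{\eps^\beta}\}$ against the degenerate rate function directly, and recover exponential equivalence as a corollary at the end. Your route is marginally more self-contained (no appeal to the equivalence-transfer theorem), and you make explicit the compactness/lower-semicontinuity argument for $\inf_F\II>0$, which the paper states without detail. Both arguments are sound, and each buys a small amount of economy the other lacks: the paper's version is shorter if one already has the transfer theorem, yours is more elementary and self-justifying.
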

As a consequence of this lemma, the sequence~$\{Z^{\eps}\}$ converges in probability to~$x_0$. 

\begin{lemma}\label{lem:expct_convg}
Let $\{Z^{\eps}\}_{\eps>0}$ be a sequence of random variables mapping to a metrisable space~$\Xx$ and satisfying an LDP with good rate function~$\II$ such that $\II(x)=0$ if and only if $x = x_0$ for some $x_0 \in\Xx$. 
If $\EE[Z^{\eps}]$ is uniformly integrable, then $\lim_{\eps\downarrow 0}\EE[Z^{\eps}] = x_0$.
\end{lemma}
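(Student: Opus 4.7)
The strategy is a classical two-step decomposition: first deduce from the LDP and uniqueness of the zero of $\II$ that $Z^\eps \to x_0$ in probability, then upgrade this to convergence of expectations via the uniform integrability hypothesis.

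For the first step, fix $\delta > 0$ and consider the closed set $C_\delta := \{x \in \Xx : d(x, x_0) \geq \delta\}$. I claim that $c_\delta := \inf_{x \in C_\delta} \II(x) > 0$. Indeed, if $c_\delta = 0$, take a minimising sequence $(x_n) \subset C_\delta$ with $\II(x_n) \downarrow 0$; since~$\II$ is a good rate function, its level set $\{\II \leq 1\}$ is compact, so $(x_n)$ admits a subsequence converging to some $x^\star \in C_\delta$ by closedness. Lower semicontinuity of~$\II$ gives $\II(x^\star) \leq 0$, so $\II(x^\star) = 0$ and hence $x^\star = x_0$, contradicting $x^\star \in C_\delta$. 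The LDP upper bound applied to the closed set $C_\delta$ then yields
\[
\limsup_{\eps \downarrow 0} \eps \log \PP[Z^\eps \in C_\delta] \leq - c_\delta < 0,
\]
so $\PP[d(Z^\eps, x_0) \geq \delta]$ decays to zero (in fact exponentially) as $\eps \downarrow 0$. This is precisely convergence in probability of $Z^\eps$ to $x_0$.

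For the second step, I would invoke the standard Vitali-type result that a sequence converging in probability and uniformly integrable also converges in $L^1$. Applied here, $Z^\eps \to x_0$ in $L^1$, which implies $\EE[Z^\eps] \to x_0$.

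The only genuine obstacle is the positivity of $c_\delta$, and even that is routine given the assumptions (goodness of~$\II$ plus uniqueness of its zero); the rest is textbook probability theory. One mild ambiguity in the statement is that writing $\EE[Z^\eps]$ presupposes~$\Xx$ is embedded in a Banach space so that expectations make sense, and that "$\EE[Z^\eps]$ is uniformly integrable" should be read as "$\{Z^\eps\}_{\eps>0}$ is uniformly integrable"; under that (clearly intended) reading the argument above is complete.
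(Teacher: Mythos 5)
Your proposal is correct and follows essentially the same route as the paper: convergence in probability of $Z^\eps$ to $x_0$ from the LDP upper bound (with $\inf_{C_\delta}\II>0$ secured by goodness of~$\II$ and uniqueness of its zero, exactly as you argue), then upgrading to convergence of expectations via uniform integrability. The paper re-derives the Vitali-type upgrade by hand through an explicit truncation decomposition rather than citing the theorem, but that is a stylistic difference; your reading of the hypothesis as uniform integrability of the family $\{Z^\eps\}$ (and the implicit assumption that $\Xx$ sits in a normed space) is also what the paper's own proof uses.
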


\begin{lemma}\label{lem:degenerate-limit}
Let $V^{\eps}$ be given by~\eqref{eq:eps_MDP__eta_V}.
such that $f\in\Cc(\RR^+\to \RR)$ and $g\in\Cc(\RR^+\to \RR^+)$ satisfy Assumption~\ref{ass:SDE_coeffs}. Then $\{V^{\eps}\}$ converges almost surely to the unique solution of $\dot{\psi}_t = f(\psi_t)$ on $[0,T]$ with boundary condition $\psi(0) = v_0$.
\end{lemma}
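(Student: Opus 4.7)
The plan is a direct Gronwall argument combined with Doob's maximal inequality and Borel--Cantelli, exploiting that the diffusion coefficient of $V^\eps$ in~\eqref{eq:eps_MDP__eta_V} vanishes like $\eps^{1/2+\beta}$ with $\beta = \alpha/(1-2\alpha) > 0$. The global Lipschitz property of $f$ gives a unique solution $\psi\in\Cc([0,T]\to\RR)$ to $\dot\psi_t = f(\psi_t)$, $\psi_0 = v_0$, via Picard--Lindelöf, and subtracting its integrated form from that of $V^\eps$ yields
\begin{equation*}
V^\eps_t - \psi_t = \int_0^t \bigl(f(V^\eps_s) - f(\psi_s)\bigr)\D s + \eps^{\frac12+\beta} M^\eps_t,
\qquad\text{where}\qquad M^\eps_t := \int_0^t g(V^\eps_s)\D W_s.
\end{equation*}
With Lipschitz constant $L$ from Assumption~\ref{ass:SDE_coeffs}(i), a pathwise application of Gronwall's inequality gives
$$
\sup_{t\in[0,T]}|V^\eps_t - \psi_t| \leq \eps^{\frac12+\beta}\,\E^{LT}\sup_{t\in[0,T]}|M^\eps_t|,
$$
reducing the statement to showing $\eps^{\frac12+\beta}\sup_{t\leq T}|M^\eps_t| \to 0$ almost surely as $\eps\downarrow 0$.

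To control $M^\eps$ despite the non-Lipschitz, polynomially growing $g$, I would introduce the stopping time $\tau^\eps_R := \inf\{t\geq 0 : V^\eps_t \geq R\}\wedge T$. On $[0,\tau^\eps_R]$, Assumption~\ref{ass:SDE_coeffs}(ii) gives $|g(V^\eps_s)| \leq 1 + R^p$, so Doob's $L^2$ maximal inequality and the Itô isometry yield
$$
\PP\left(\eps^{\frac12+\beta}\sup_{t\leq T\wedge\tau^\eps_R}|M^\eps_t| > \delta\right) \leq \frac{4(1+R^p)^2 T}{\delta^2}\,\eps^{1+2\beta}.
$$
The next ingredient is a uniform-in-$\eps$ control $\PP(\tau^\eps_R < T) \leq \eta(R)$ with $\eta(R)\to 0$ as $R\to\infty$. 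I would obtain this by a standard moment bootstrap: applying Itô's formula to $(V^\eps)^{2q}$ for $q$ large enough to dominate the exponent $2p$ arising from $g^2$, using the linear growth of $f$ (a direct consequence of the global Lipschitz assumption) together with $\eps^{1+2\beta} \leq 1$, and concluding by Gronwall, yields $\sup_{\eps\in(0,1]}\EE\sup_{t\leq T}(V^\eps_t)^{2q} < \infty$, so Markov's inequality bounds $\PP(\tau^\eps_R<T)$ uniformly in $\eps$. A union bound on the two events and Borel--Cantelli along any countable $\eps_n\downarrow 0$ then deliver the claimed almost sure convergence.

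A more conceptual alternative, should Assumption~\ref{ass:SDE_V}(ii) additionally be in force, is an LDP route: writing $\widetilde\eps := \eps^{1+2\beta}$, the SDE for $V^\eps$ in~\eqref{eq:eps_MDP__eta_V} coincides with the small-noise equation~\eqref{eq:log_price_dynNoise} parametrised by $\widetilde\eps$, so one obtains an LDP at speed $\widetilde\eps$ whose rate function vanishes uniquely at $\psi$. Since $1+2\beta > 1$, Lemma~\ref{lem:degenerate} then shows that $V^\eps$ is exponentially equivalent to $\psi$ at speed $\eps$, and Borel--Cantelli again delivers almost sure convergence along subsequences. The genuine technical obstacle, common to both routes, is the uniform-in-$\eps$ moment bound on $V^\eps$: because $g$ is only of polynomial growth (not globally Lipschitz), the moment scheme via Itô on $(V^\eps)^{2q}$ must be carried out carefully, with $q$ chosen large enough that the quadratic-variation term $\eps^{1+2\beta}q(2q-1)(V^\eps)^{2q-2}g(V^\eps)^2$ can be absorbed into the Gronwall step. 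Once this moment bound is in hand, the remainder of the argument is routine.
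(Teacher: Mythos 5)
Your primary route is correct, but it takes a genuinely different path from the paper's. The paper proceeds indirectly: it first establishes tightness of $\{V^\eps\}$ via Kolmogorov--Chentsov moment estimates, extracts a subsequential limit, and then applies the continuous mapping theorem to the truncated functional $\Phi_t(\psi)=|\psi_t - v_0 - \int_0^t f(\psi_s)\D s|\wedge 1$ together with an It\^o-isometry bound $\EE[\Phi_t(V^\eps)]\leq\sqrt{\eps}M$ to conclude that any limit point must solve the ODE $\dot{\psi}_t=f(\psi_t)$; uniqueness of the deterministic solution then pins down the limit. That argument really only delivers convergence in distribution (hence in probability, the limit being deterministic). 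Your argument is more direct and actually targets the claimed \emph{almost sure} convergence: the global Lipschitz property of $f$ lets you run a pathwise Gronwall bound that isolates the martingale term $\eps^{1/2+\beta}M^\eps$, and then a localization/Doob/Borel--Cantelli scheme controls it. Your route also yields a quantitative rate $\eps^{1/2+\beta}$, which the paper's does not. Note that both proofs implicitly require the growth exponent of $g$ to satisfy $p\leq 1$ for the moment bootstrap (Gronwall closing on moments of $V^\eps$); this is satisfied in the Heston case $p=1/2$.

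One place you should be more careful: in the Borel--Cantelli step you need summability of both
$\PP\bigl(\eps_n^{1/2+\beta}\sup_{t\leq \tau^{\eps_n}_{R_n}}|M^{\eps_n}_t|>\delta\bigr)\lesssim R_n^{2p}\eps_n^{1+2\beta}$ and $\PP(\tau^{\eps_n}_{R_n}<T)\lesssim R_n^{-2q}$, so the truncation level $R_n$ must grow to infinity together with $\eps_n\downarrow 0$ at a rate tuned so that both series converge (e.g.\ $R_n=n^a$, $\eps_n=n^{-\gamma}$ with $2aq>1$ and $\gamma(1+2\beta)>2ap+1$). Claiming it works ``along any countable $\eps_n\downarrow 0$'' without this coupling is too strong; for slowly decaying sequences the estimates alone give only convergence in probability. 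Finally, your LDP alternative is logically sound under Assumption~\ref{ass:SDE_V}(ii), but be aware that the paper uses the present lemma \emph{together with} Lemma~\ref{lem:degenerate} to establish exponential equivalence of $V^\eps$ and $\psi$ in the proof of Theorem~\ref{thm:mdp-small-noise}, so that route sits very close to the downstream use and should be phrased carefully to avoid appearing circular in that context.
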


%%%%%%%%%%%%%%%%%%%%%%%%%%%%%%%%%%%%%%%%%%%

\begin{theorem}\label{thm:mdp-small-noise}
Let $\beta, v_0>0$ and let $f, g \in \Cc_T$ be such that Assumption~\ref{ass:SDE_coeffs} is satisfied.
Under Assumption~\ref{ass:SDE_V}, 
let $W^{\eps} := \sqrt{\eps}W$ and $V^\eps, \eta^{\eps}$ defined in~\eqref{eq:eps_MDP__eta_V}.
The triple
$\{V^\eps, \eta^\eps,W^\eps\}$ satisfies an LDP with speed~$\eps$ and the good rate function
\begin{equation*}
\II^{V,\eta,W}(v,\eta,w) = 
\left\{
\begin{array}{ll}
\displaystyle \frac{1}{2}\|\dot{w}\|_{T}^2, & \text{if }
\displaystyle 
w \in \HH_T^0,\;
v = \psi,\;
\dot{\eta} = \dot{f}(\psi)\eta + g(\psi)\dot{w},\\
+\infty, & \text{otherwise}.
\end{array}
\right.
\end{equation*}
\end{theorem}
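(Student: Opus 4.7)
The plan is to identify the triple $(V^\eps,\eta^\eps,W^\eps)$ as exponentially equivalent (at speed $\eps$) to a continuous image of the single driver $W^\eps$, then conclude by Schilder together with the contraction principle. Schilder's theorem directly gives $\{W^\eps\}\sim\LDP(\IIW,\Cc_T)$ with $\IIW(w)=\tfrac12\|\dot w\|_T^2$ on $\HH_T^0$ and $+\infty$ elsewhere, so everything will be driven by this.

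For the first coordinate, observe that the $V^\eps$ appearing in~\eqref{eq:eps_MDP__eta_V} coincides with the small-noise approximation~\eqref{eq:log_price_dynNoise} of~$V$ with noise parameter $\widetilde\eps:=\eps^{1+2\beta}$. Assumption~\ref{ass:SDE_V}(ii) thus gives an LDP at speed $\widetilde\eps$ with good rate function $\IIV$ possessing a unique minimiser, which by Lemma~\ref{lem:degenerate-limit} is necessarily the deterministic path $\psi$. Since $1+2\beta>1$, Lemma~\ref{lem:degenerate} transfers this to the coarser speed~$\eps$: $V^\eps$ satisfies a degenerate LDP concentrated at $\psi$, equivalently $V^\eps$ is exponentially equivalent at speed $\eps$ to the constant-in-randomness path $\psi$.

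The remaining (and main) work is to reduce $\eta^\eps$ to the explicit linearised proxy $\bar\eta^\eps$ defined by
\[
\D\bar\eta_t^\eps = f'(\psi_t)\,\bar\eta_t^\eps\,\D t + g(\psi_t)\,\D W_t^\eps, \qquad \bar\eta_0^\eps=0.
\]
Since $\psi$ is deterministic, variation of constants produces a continuous linear map $\Psi:\HH_T^0\to\Cc_T$ with $\bar\eta^\eps=\Psi(W^\eps)$. Setting $\Delta^\eps:=\eta^\eps-\bar\eta^\eps$ and using $f\in\Cc^2$ from Assumption~\ref{ass:SDE_V}(i) together with the identity $V^\eps-\psi=\eps^\beta\eta^\eps$, Taylor's theorem gives
\[
\eps^{-\beta}\bigl(f(V_t^\eps)-f(\psi_t)\bigr) = f'(\psi_t)\eta_t^\eps+\eps^\beta R_t^\eps, \qquad |R_t^\eps|\leq \tfrac12\|f''\|_\infty(\eta_t^\eps)^2,
\]
so that $\Delta^\eps$ solves
\[
\D\Delta_t^\eps = f'(\psi_t)\,\Delta_t^\eps\,\D t + \eps^\beta R_t^\eps\,\D t + \bigl(g(V_t^\eps)-g(\psi_t)\bigr)\,\D W_t^\eps, \qquad \Delta_0^\eps=0.
\]
I would localise on the event $E_\delta^\eps:=\{\sup_{t\leq T}|V_t^\eps-\psi_t|\leq\delta\}$, whose complement has super-exponentially small probability at speed $\eps$ by the previous step. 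On $E_\delta^\eps$, continuity of $g$ on compacts makes $|g(V^\eps)-g(\psi)|$ uniformly small, while $R^\eps$ is controlled by the localised $\eta^\eps$; a deterministic Grönwall argument combined with Bernstein-type exponential martingale bounds for the stochastic integral then yields $\limsup_{\eps\downarrow 0}\eps\log\PP(\sup_{t\leq T}|\Delta_t^\eps|>\delta)=-\infty$ for every $\delta>0$, i.e.\ the desired exponential equivalence.

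With both exponential equivalences in hand, the continuous map $\Phi:\HH_T^0\to\Cc_T^3$, $w\mapsto(\psi,\Psi(w),w)$, combined with Schilder and the contraction principle, yields an LDP at speed $\eps$ for $(\psi,\bar\eta^\eps,W^\eps)$ whose rate function is $\tfrac12\|\dot w\|_T^2$ under the constraints $v=\psi$, $\dot\eta=f'(\psi)\eta+g(\psi)\dot w$, $w\in\HH_T^0$, matching the stated formula (the goodness of the rate function follows from the compactness of Cameron--Martin balls in $\Cc_T$). Applying~\cite[Theorem 4.2.13]{Dembo2010} then transfers the LDP to the triple $(V^\eps,\eta^\eps,W^\eps)$ itself. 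The principal obstacle is the linearisation step: handling the $g$ term without a global Lipschitz hypothesis requires carefully combining the exponential concentration of $V^\eps$ around $\psi$ with the mere continuity of $g$ on compact neighbourhoods of $\psi([0,T])$, and routing the quadratic residual $R^\eps$ through a localised Grönwall estimate.
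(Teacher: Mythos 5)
Your proposal shares the paper's global architecture: establish degenerate concentration of $V^\eps$ at $\psi$, linearise $\eta^\eps$ around a Gaussian proxy driven by $W^\eps$, prove exponential equivalence of the remainder, and conclude by contraction. There are, however, two genuine differences, one of which is a gap that would prevent the argument from closing.

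\textbf{A different (and more costly) proxy.} You write $\eta^\eps = \bar\eta^\eps + \Delta^\eps$ with $\bar\eta^\eps$ driven by the \emph{frozen} diffusion $g(\psi_t)\,\D W^\eps_t$. This forces you to prove that $\int_0^{\cdot}\bigl(g(V^\eps_s)-g(\psi_s)\bigr)\,\D W^\eps_s$ is exponentially negligible, via localisation plus a Bernstein-type exponential martingale estimate using the modulus of continuity of $g$. The paper bypasses this estimate entirely: it keeps the raw stochastic integral $Y^\eps := \int_0^{\cdot} g(V^\eps_s)\,\D W^\eps_s$, and invokes the exponential-equivalence extension of the contraction principle from \cite[Lemma 3.1]{Robertson2010} applied to $\{V^\eps,W^\eps\}\sim\{\psi,W^\eps\}$ to get a joint LDP for $\{V^\eps,Y^\eps,W^\eps\}$ with $\dot Y = g(\psi)\dot w$ in the constraint set. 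Then the only remainder left is the $f$-Taylor term $\int_0^{\cdot}\{f(V^\eps)-f(\psi)-\dot f(\psi)(V^\eps-\psi)\}\eps^{-\beta}\,\D t$, and $\eta^\eps$ is recovered from $\widetilde\eta^\eps := \eta^\eps - \int_0^{\cdot}\dot f(\psi)\eta^\eps$ by the continuous linear map solving the variation-of-constants ODE. Your route is feasible in the Freidlin--Wentzell spirit, but it does strictly more work.

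\textbf{The gap: control of the quadratic residual.} Your handling of $R^\eps$ is not correct as stated. On the event $E_\delta^\eps=\{\sup_{t\leq T}|V^\eps_t-\psi_t|\leq\delta\}$, the only a priori bound available from that localisation alone is $|\eta^\eps_t|=\eps^{-\beta}|V^\eps_t-\psi_t|\leq\delta\,\eps^{-\beta}$, so that $\eps^\beta|R^\eps_t|\lesssim\eps^\beta|\eta^\eps_t|^2\leq \delta^2\eps^{-\beta}$, which \emph{diverges} as $\eps\downarrow 0$ for every fixed $\delta>0$. No choice of $\delta$ can rescue this; the Taylor remainder is not driven to zero. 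The paper resolves this with two extra ingredients, neither of which appears in your sketch: (i) an \emph{additional} localisation on $B_\eps = \{\sup_{t\leq T}|Y^\eps_t|\leq\eps^{-\beta/4}\}$, whose complement is super-exponentially small at speed $\eps$ because $\eps^{\beta/4}Y^\eps$ is exponentially negligible (an LDP at a faster speed); and (ii) the \emph{nonlinear} Gr\"onwall estimate of Lemma~\ref{lem:Gronwall-bis}, applied with $g(u)=u+\eps^\beta u^2$, to handle the quadratic self-coupling $|\dot\eta^\eps|\lesssim|\eta^\eps|+\eps^\beta|\eta^\eps|^2$. That superlinear Gr\"onwall, combined with (i), gives the much sharper bound $\sup_t|\eta^\eps_t|\lesssim \E^{\alpha T}\eps^{-\beta/4}$ on $A_\eps\cap B_\eps$, whence $\eps^\beta|\eta^\eps|^2\lesssim\eps^{\beta/2}\to 0$. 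Your proposal invokes only ``a deterministic Gr\"onwall argument'', but the standard linear Gr\"onwall does not apply here, and without the $Y^\eps$ localisation the remainder does not vanish. This is the crux of the proof and should be surfaced explicitly.

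Finally, a small point: the goodness of the rate function should not be attributed to compactness of Cameron--Martin balls in $\Cc_T$ alone; it follows because the maps involved are continuous, the preimage under $\Phi$ of a closed set is closed, and $\IIW$ has compact level sets, so the contraction principle automatically delivers a good rate function.
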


%%%%%%%%%%%%%%%%%%%%%%%%%%%%%%%%%%%%%%%%%%%
\subsubsection{Importance sampling using MDP}
Consider the system~\eqref{eq:eps_MDP__eta_V}, 
and let $\Wf^\eps \coloneqq \sqrt{\eps}\Wf$ and $\Yf^\eps \coloneqq \sqrt{V^\eps}\circ \Wf^\eps$. 
Following Theorem~\ref{thm:mdp-small-noise} and Lemma~\ref{thm:robertsonLDP}, and the Contraction principle and \cite[Exercise 4.2.7]{Dembo2010} imply that the triple $\{\eta^\eps,\Wf^\eps, \Yf^\eps\}$ satisfies an LDP with speed $\eps$ and good rate function 
\begin{equation*}
\II^{\eta,\Wf,\Yf}(\eta,\ww, \yy) =
\left\{
\begin{array}{ll}
\displaystyle 
\frac{1}{2}\|\dot{\ww}\|_{T}^2, & \text{if }
\displaystyle 
\ww \in \HH_T^0,\;
\eta = \int_{0}^{\cdot}\left\{\dot{f}(\psi_s)\eta_s + g(\psi_s)\dot{w}_1(s)\right\}\D s,\;
\yy = \sqrt{\psi}\circ \ww,\\
+\infty, & \text{otherwise}.
\end{array}
\right.
\end{equation*}

%\begin{remark}
%$\alpha$ or $\beta = \frac{\alpha}{1 - 2\alpha}$ do not appear in this result. They will also not appear in the estimators that we will derive from it. 
%\end{remark}

%%%%%%%%%%%%%%%%%%%%%%%%%%%%%%%%%%%%%%%%%%%
\subsubsection{Log-price small-noise MDP}
Consider a continuous payoff function $G\in\Cc([0,T]\to\R^{+}$ and let $F:=\log|G|$. 
As a reminder, we are interested in finding a measure change minimising $\EE[\E^{2F}\frac{\D\PP}{\D\QQ}]$.
%\paragraph{\textit{Deterministic change of drift}}
We first consider a deterministic change of drift, via the change of measure 
\[
\left.\frac{\D\QQ}{\D\PP}\right\vert_{\Ff_T} = \exp\left\{-\frac{1}{2}\|\dot{\hh}\|_{T}^2 + \dot{\hh}\cT \Wf^\top\right\},
\]
for $\hh \in \HH_T^0$ with $\dot{\hh}$ of finite variation. 
In the spirit of moderate deviations, we use the approximation 
\[
\left.\frac{\D\QQ}{\D\PP}\right\vert_{\Ff_T} \approx \exp\left\{-\frac{1}{2}\|\dot{\hh}\|_{T}^2 + \dot{\hh}\cT \Wf^{\eps\top}\right\},
\]
and thus aim at minimising
\[
L(\hh) = \limsup\limits_{\eps \downarrow 0} \eps\log\EE\left[\exp\left\{\frac{1}{\eps}\left(
2F(\widetilde{X}^\eps) + \frac{1}{2}\|\dot{\hh}\|_{T}^2 - \dot{\hh}\cT \Wf^{\eps\top}\right)\right\}\right].
\]
Under the conditions of Varadhan's lemma~\ref{thm:varadhan_modified} (e.g. if~$F$ is bounded), then
$L(\hh) = 
\sup_{\xx \in \HH_T^0} 
\Lf(\xx,\hh)$
where
$$
\Lf(\xx,\hh)
 = 2 F\left(-\frac{1}{2}\int_0^.(\psi_t + \eta_t)\D t + \vrho\Yf^\top \right) - \dot{\hh}\cT\xx^\top + \frac{\|\dot{\hh}\|_{T}^2}{2} - \frac{\|\dot{\xx}\|_{T}^2}{2},
 $$
with $\vrho \coloneqq \begin{pmatrix}\rho &\brho\end{pmatrix}^\top$ and 
\begin{equation}\label{eq:etaY1Y2}
\eta(\xx) = \int_{0}^{\cdot}\left\{\dot{f}(\psi_s)\eta_s + g(\psi_s)\dot{x}_1(s)\right\}\D s
\qquad\text{and}\qquad
\Yf(\xx) = \sqrt{\psi}\circ \xx.
\end{equation}
Minimising $L(\hh)$ is far from trivial, 
and hence, as before, we define the optimal change of drift $\hh^{*}$ as a solution to the dual problem
$\inf\limits_{\hh \in \HH_T^0} \Lf(\xx,\hh)$,
so that, with~$\eta$ as in~\eqref{eq:etaY1Y2},
\begin{equation}\label{eq:optidetdrift}
\begin{cases}
&\displaystyle \Yf = \sqrt{\psi}\circ \xx \\
&\displaystyle \psi_t = v_0 + \int_0^t f(\psi_s)\D s, \\
&\displaystyle \eta_t = \int_0^t\left\{\dot{f}(\psi_s)\eta_s + g(\psi_s)\dot{x}_1(s)\right\}\D s, \\
&\displaystyle \hh^{*} = \argmax\limits_{\xx \in \HH_T^0} \left\{F\left(-\frac{1}{2}\int_{0}^{\cdot}(\psi_t + \eta_t)\D t + \vrho\Yf^\top\right)- \frac{\|\dot{\xx}\|_{T}^2}{2}\right\}.
\end{cases}
\end{equation}

\begin{remark}
This moderate deviations approach is equivalent to approximating $\sqrt{V_t}$ with $\sqrt{\psi_t}$ and $V_t$ with some Gaussian process centered at $\psi$.
\end{remark}
%\paragraph{\textit{Stochastic change of drift}}

We now consider a stochastic change of drift, 
through the Radon-Nikodym derivative
\[
\left.\frac{\D\QQ}{\D\PP}\right\vert_{\Ff_T} = \exp\left\{-\frac{1}{2}\|\dot{\hh}\sqrt{V}\|_{T}^2 + (\dot{\hh}\sqrt{V})\cT\Wf^\top\right\},
\]
for $\hh \in \HH_T^0$, $\dot{\hh}$ of finite variation and such that $\EE\left[\frac{\D\QQ}{\D\PP}\right]=1$. 
Again, we use the approximation
\[
\left.\frac{\D\QQ}{\D\PP}\right\vert_{\Ff_T} \approx \exp\left\{-\frac{1}{2}\|\dot{\hh}\sqrt{V^\eps}\|_{T}^2 + (\dot{\hh}\sqrt{V^\eps})\cT({\Wf^\eps})^\top\right\}
\]
and aim at minimising
\[
L(\hh) = \limsup\limits_{\eps \downarrow 0} \eps\log\EE\left[\exp\frac{1}{\eps}\left\{
2F(\widetilde{X}^\eps) +\frac{1}{2}\|\dot{\hh}\sqrt{V^\eps}\|^2_T - \left (\dot{\hh}\sqrt{V^\eps}\right )\cT \Wf^{\eps\top}\right\}\right].
\]
If Varadhan's lemma conditions hold, then again
$L(\hh) = \sup_{\xx \in \HH_T^0}\Lf(\xx,\hh)$
where
$$
\Lf(\xx,\hh) = 2F\left(-\frac{1}{2}\int_{0}^{\cdot}(\psi_t + \eta_t)\D t + \vrho \Yf^\top\right) - \left (\dot{\hh}\sqrt{\psi}\right )\cT \xx^\top + \frac{\|\dot{\hh}\sqrt{\psi}\|_{T}^2}{2}
- \frac{\|\dot{\xx}\|_{T}^2}{2},
$$
with~$\eta$ defined as in~\eqref{eq:etaY1Y2}. As minimising $L(\hh)$ is \textit{a priori} complicated, we define our optimal change of drift $\hh^{*}$ as a solution to the dual problem
\[
\begin{cases}
&\displaystyle \xx^{*} = \argmax\limits_{\xx \in \HH_T^0} \left\{
F\left(-\frac{1}{2}\int_{0}^{\cdot} \left(\psi_t + \eta_t\right)\D t - \vrho \Yf^\top \right)
- \frac{\left\|\dot{\xx}\right\|_{T}^2}{2}
\right\},\\
&\displaystyle \psi_t = v_0 + \int_0^t f(\psi_s)\D s, \\
&\displaystyle \eta_t = \int_0^t\left\{\dot{f}(\psi_s)\eta_s + g(\psi_s)\dot{x}_1(s)\right\}\D s, \\
&\displaystyle  \hh^{*} = \int_{0}^{\cdot}\frac{\dot{\xx}^{*}(t)}{\sqrt{\psi_t}}\D t.
\end{cases}
\]

%%%%%%%%%%%%%%%%%%%%%%%%%%%%%%%%%%%%%%%%%%%
\subsubsection{Price small-noise MDP}
We consider now the stock price dynamics given in~\eqref{eq:eps_price_dyn}:
$$
\D\widetilde{S}^\eps_t = -\frac{1}{2}\eps^\gamma\eta_{t}^\eps\D t + \sqrt{\eps}\sqrt{V_t}\D B_t,
$$
starting from $\widetilde{S}_0^\eps = 1$, with $\gamma >0$.
%\paragraph{\textit{Deterministic change of drift}}
With the  deterministic change of drift
$$
\left.\frac{\D\QQ}{\D\PP}\right\rvert_{\Ff_T}
= \exp\Big\{-\frac{1}{2}\|\dot{\hh}\|_{T}^2 + \dot{\hh}\cT \Wf^\top\Big\},
$$
for $\hh \in \HH_T^0$ with $\dot{\hh}$ of finite variation,
the optimal change of drift $\hh^{*}$ is the solution to
\begin{equation}\label{eq:optidetdrift-price}
\hh^{*} = \argmax\limits_{\xx \in \HH_T^0} 
\left\{F(\vrho \Yf^\top)- \frac{\|\dot{\xx}\|_{T}^2}{2}\right\}.
\end{equation}

%\begin{remark}
%Notice that $\eta$ does not play a role any more. Moreover, looking at the optimisation problem (primal and dual) in terms of $\{\rho x_1+\brho x_2,\brho x_1 - \rho x_2\}$, one can see that it is equivalent to the problem in Section~\ref{sec:BS}, i.e., in price small-noise setting for a process with deterministic volatility $\sqrt{\psi_t}\,$.
%\end{remark}
%\paragraph{\textit{Stochastic change of drift}}
Regarding the stochastic change of drift,
the Radon-Nikodym derivative takes the form
\[
\left.\frac{\D\QQ}{\D\PP}\right\rvert_{\Ff_T} = \exp\Big\{-\frac{1}{2}\|\dot{\hh}\sqrt{V}\|_{T}^2
 + (\dot{\hh}\sqrt{V})\cT \Wf^\top\Big\}
\]
with $\dot{\hh}$ of finite variation and such that $\EE\left[\frac{\D\QQ}{\D\PP}\right]=1$.
We define our optimal change of drift $\hh^{*}$ as a solution to the dual problem:
$$
\hh^{*} = \int_{0}^{\cdot}\frac{\dot{\xx}^{*}(t)}{\sqrt{\psi_t}} \D t,
\qquad\text{with}\qquad 
\xx^{*} = \argmax\limits_{\xx \in \HH_T^0} \left\{F(\vrho \Yf^\top)
- \frac{\|\dot{\xx}\|^2_T}{2}\right\}.
$$
with~$\eta$, $\Yf$ as in~\eqref{eq:etaY1Y2}. Again the objective simplifies to the case of the deterministic drift change, the only difference being the way $\hh^{*}$ is calculated.

%%%%%%%%%%%%%%%%%%%%%%%%%%%%%%%%%%%%%%%%%%%

\subsubsection{Example: options with path-dependent payoffs}
We apply our methodology to options with payoffs of the form $G(\alpha \cT X)$, where $G:\R \rightarrow \R^{+}$ is a differentiable function and $\alpha$ a positive (almost everywhere) function of class $\Cc^1([0,T]\to\R^{+})$. The payoff is then a continuous function of the path. 
Let $F = \log|G|$ and $\overline{F}(x) = F(x-\frac{1}{2}\int_0^T\psi_s\alpha_s \D s)$ and suppose that Assumptions~\ref{ass:SDE_coeffs} and~\ref{ass:SDE_V} hold.

%%%%%%%%%%%%%%%%%%%%%%%%%%%%%%%%%%%%%%%%%%
%%%%%%%%%%%%%%%%%%%%%%%%%%%%%%%%%%%%%%%%%%
\subsubsection{Log-price small-noise MDP with path-dependent payoff}

%%%%%%%%%%%%%%%%%%%%%%%%%%%%%%%%%%%%%%%%%%
We consider a deterministic change of drift and proceed similarly to Section~\ref{sec:LDP_IS} 
with the transformations
$$
\dot{\phi} = \sqrt{\psi}\left( \rho\dot{x}_1 + \brho\dot{x}_2 \right) - \frac{\eta}{2} \qquad \text{and} \qquad \dot{\eta} = \dot{f}(\psi)\eta + g(\psi)\dot{x}_1,
$$
so that the optimisation problem~\eqref{eq:optidetdrift} for a path-dependent payoff can then be written as
\begin{equation}\label{eq:optidetdriftapplication}
\left\{
\begin{array}{rl}
\phi^{*}, \eta^{*} & = \displaystyle  \argmax\limits_{\phi,\eta \in \HH_T^0} \left\{ \overline{F}\left(\int_0^T \alpha_t \dot{\phi}_t \D t \right) 
- \half\int_0^T \left\{U(\eta)^2 + \frac{[Z(\phi_t,\eta_t) - \rho U(\eta_t)]^2}{1-\rho^2}\right \}\D t \right\},\\
\dot{\hh}^{*}  & = \displaystyle 
\left(U(\eta^{*}),  \frac{1}{\brho}\left(Z(\phi^{*}, \eta^{*}) - \rho U(\eta^{*})\right)\right),
\end{array}
\right.
\end{equation}
with
$$
U(\eta) = \frac{\dot{\eta} - \dot{f}(\psi)\eta}{g(\psi)}
\qquad\text{and}\qquad
Z(\phi, \eta) = \frac{\dot{\phi} + \frac{1}{2}\eta}{\sqrt{\psi}}. 
$$
Then, by applying Euler-Lagrange to the problem seen as an optimisation over $\left \{\int_{0}^{\cdot} \alpha_t\dot{\phi}_t\D t,\eta\right \}$, we obtain the system of ODEs
\begin{equation*}
\left\{
\begin{array}{rcl}
Z - \rho U &=& \displaystyle \frac{1}{2}\beta\brho^2\alpha \sqrt{\psi},\\
\displaystyle \frac{\D}{\D t}\left\{\frac{2U}{g(\psi)} - \beta \rho \alpha \frac{\sqrt{\psi}}{g(\psi)}\right\} &=& \displaystyle \beta \alpha \sqrt{\psi} \left(\frac{1}{2\sqrt{\psi}} + \rho \frac{\dot{f}(\psi)}{g(\psi)} \right) -2U\frac{\dot{f}(\psi)}{g(\psi)},
\end{array}
\right.
\end{equation*}
with boundary conditions $\beta = - 2\overline{F}_T'$ and $U_T = -\rho\alpha_T\sqrt{\psi_T}\overline{F}_T'$, where $\overline{F}_T':= \overline{F}'(\int_0^T\alpha_t\dot{\phi}_t\D t)$. 
Introducing
$A := \frac{2U}{g(\psi)} - \beta \rho \alpha \frac{\sqrt{\psi}}{g(\psi)}$
simplifies the problem to the linear ODE:
$\dot{A}- \dot{f}A = \frac{1}{2}\beta\alpha$ with $A_T = 0$,
with solution
\[
A_t = \frac{1}{2}\beta \E^{-B_t}\left( \int_0^t \E^{B_s} \alpha_s \D s - \gamma_T \right),
\quad \text{for } t\in[0,T]\,,
\]
where $B:=\int_{0}^{\cdot} f(\psi_t)\D t$ and $\gamma := \int_{0}^{\cdot} \E^{B_t}\alpha_t \D t$. 
We can now solve for~$U$ and~$Z$:
\begin{align*}
U = \beta\left( u + \frac{1}{2}\rho\alpha\sqrt{\psi} \right) \qquad \text{and} \qquad
Z = \beta\left ( \rho u + \frac{1}{2}\alpha\sqrt{\psi} \right )
\end{align*}
with $u = \frac{1}{4}g(\psi)\E^{-B}(\gamma-\gamma_T)$. Our optimisation problem was posed over $\left \{\int_{0}^{\cdot} \alpha_t \dot{\phi}_t \D t, \eta\right \}$ so we require the solution in terms of this couple, therefore:
\begin{align*}
\text{i)}& \;\; \int_0^t \alpha_s \dot{\phi}_s \D s = \beta\int_0^t \alpha_s \left \{ \dot{\phi}^1_s - \frac{1}{2}\dot{\phi}^2_s \right \}\D s + \eta_0 T \E^{B_T} \\
& \qquad \qquad  \dot{\phi}^1_t = (u+ \frac{1}{2}\alpha \sqrt{\psi})\sqrt{\psi} \qquad \qquad \dot{\phi}^2 = \E^B \int_{0}^{\cdot} \E^{-B_s}\left( u_s + \frac{1}{2}\alpha_s\sqrt{\psi_s} \right )g(\psi_s)\D s \\
\text{ii)}& \;\; \eta_t = \beta \E^{B_t} \int_0^t \E^{-B_s}\left( u_s + \frac{1}{2}\rho\alpha_s\sqrt{\psi_s} \right )g(\psi_s)\D s + \eta_0 \E^{B_t},
\end{align*}
where $\beta,\eta_0\in\RR$ are parameters over which we perform our optimisation. 
Thus the original optimisation objective~\eqref{eq:optidetdriftapplication} becomes 
\begin{align*}
\beta^{*}, \eta_0^{*} = \argmax_{\beta,\eta_0\in\RR}
\left\{\overline{F}\left( \beta\int_0^t \alpha_s \left \{ \dot{\phi}^1_s - \frac{1}{2}\dot{\phi}^2_s \right \}\D s + \eta_0 T \E^{B_T} \right ) - \frac{1}{2}\beta^2\int_0^T\left\{ \big| \dot{\phi}^2_t \big|^2 + \frac{1}{4}\brho^2\alpha_t^2\psi_t \right\}\D t\right\}.
\end{align*}

%%%%%%%%%%%%%%%%%%%%%%%%%%%%%%%%%%%%%%%%%%
\begin{remark}
Stochastic change of drift objective is equivalent to the one with deterministic drift, 
the difference being how~$\hh$ is calculated.
\end{remark}

%%%%%%%%%%%%%%%%%%%%%%%%%%%%%%%%%%%%%%%%%%
%%%%%%%%%%%%%%%%%%%%%%%%%%%%%%%%%%%%%%%%%%
\subsubsection{Price small-noise MDP with path-dependent payoff}

We consider a deterministic change of drift.
When considering price dynamics, the $\eta$ does not play a role any more, but we can nevertheless perform a similar transformation as before. Let ${\dot{\phi} = \sqrt{\psi}\left( \rho \dot{x}_1 + \brho\dot{x}_2 \right)}$ and
$\dot{\varphi} = \brho\sqrt{\psi}\dot{x}_2$,
so that~\eqref{eq:optidetdrift-price} becomes
\begin{equation}\label{eq:optidetdriftapplication-price}
\left\{
\begin{array}{rcl}
\phi^{*}, \varphi^{*} &=& \displaystyle 
\argmax\limits_{\phi,\varphi \in \HH_T^0} \left\{ \overline{F}\left(\int_0^T \alpha_t \dot{\phi}_t \D t \right)- \frac{1}{2} \int_0^T \left\{U(\varphi)^2 + \frac{1}{\brho^2}\left(Z(\phi,\varphi) - \rho U(\varphi)\right)^2\right \}\D t \right\} \\
\left(\dot{h}^{*}_1,  \dot{h}^{*}_2\right) &=& 
\displaystyle \left(U(\varphi^{*}), 
\frac{1}{\brho}\left(Z(\phi^{*}, \varphi^{*}) - \rho U(\varphi^{*})\right)\right),
\end{array}
\right.
\end{equation}
with
$U(\varphi) = \frac{\dot{\varphi}}{\brho\sqrt{\psi}}$ and
$Z(\phi, \varphi) = \frac{\dot{\phi}-\dot{\varphi}}{\sqrt{\psi}}$.
Now seen as optimisation over $\left \{\int_{0}^{\cdot} \alpha_t \dot{\phi}_t \D t, \eta\right \}$ we obtain by Euler-Lagrange the system of ODEs 
\begin{equation*}
\left\{
\begin{array}{rcl}
Z - \rho U &=& \displaystyle \frac{1}{2}\beta\brho^2\alpha \sqrt{\psi}, \\
\displaystyle\frac{\D}{\D t}\left\{\frac{U}{\sqrt{\psi}} - \frac{\brho+1}{\brho^2} (Z - \rho U) \right \} &=& 0,
\end{array}
\right.
\end{equation*}
with boundary condition $\beta = - 2\overline{F}_T'$ with $\overline{F}_T'$ as above,
which can be solved as
\begin{align*}
U = \frac{1+\brho}{2}\beta\alpha\sqrt{\psi}
\qquad \text{and} \qquad 
Z=\frac{\brho + \rho(\brho+1)}{2}\beta\alpha\sqrt{\psi},
\end{align*}
or
\begin{align*}
\dot{\varphi} = \frac{\brho(1+\brho)}{2}\beta\alpha\psi
\qquad \text{and} \qquad
\dot{\phi} = \frac{\varrho}{2}\beta\alpha\psi,
\end{align*}
where $\varrho := \brho + (\brho+1)(\rho+\brho)$. The optimisation problem~\eqref{eq:optidetdrift-price} thus simplifies to
\[
\beta^{*} = \argmax_{\beta\in\RR} 
\left\{\overline{F}\left(
\frac{\varrho\beta}{2}\int_0^T\alpha_t\psi_t\D t \right )
- \frac{\beta^2}{8}\left((\brho+1)^2+\brho^2\right)\int_0^T\alpha_t^2\psi_t^2 \D t\right\}.
\]

\begin{remark}
Stochastic change of drift objective is again equivalent to the one with deterministic drift the difference 
being how~$\hh$ is calculated.
\end{remark}

%%%%%%%%%%%%%%%%%%%%%%%%%%%%%%%%%%%%%%%%%%%%%%%%%
\subsection{Small-time moderate deviations}
We now mimic the results of the previous section, bu for small-time moderate deviations. 
Consider the log-price dynamics~\eqref{eq:log_price_dyn} under  Assumption~\ref{ass:SDE_coeffs} and Assumption~\ref{ass:SDE_V}.
Let $\alpha \in (0,\frac{1}{2})$ and 
${\widetilde{X}^\eps := \eps^{-\alpha} X^{\eps}}$, so that
\begin{equation*}
\begin{array}{rll}
\D\widetilde{X}^\eps_t
&= \displaystyle -\frac{1}{2}\eps^{1-\alpha} V_t^{\eps} \D t + \eps^{\frac{1}{2}-\alpha}\sqrt{V_t^\eps}\D B_t, 
& \widetilde{X}_0^\eps = 0, \\
\D V_t^\eps &= \displaystyle\eps f(V_t^\eps)\D t + \sqrt{\eps}g(V_t^\eps)\D W_t,  & V_0^\eps = v_0.
\end{array}
\end{equation*}
As we will see, as a consequence of Theorem~\ref{mdp-small-time} the results remain the same as in the case of price small-noise moderate deviations of the previous section with $f = 0$ and $\psi = v_0$. This being the case, we do not repeat them here. Let us nevertheless note that in the case of change of measure with deterministic drift, the problem is similar to the one, where $V$ is constant equal to $v_0$.
%%%%%%%%%%%%%%%%%%%%%%%%%%%%%%%%%%%%%%%%%%%
\subsubsection{Theoretical results}

Analogous to Lemma~\ref{lem:degenerate-limit} and Theorem~\ref{thm:mdp-small-noise}, we have 
\begin{lemma} \label{lem:degenerate-limit-small}
Let $V^{\eps}$ be given by~\eqref{eq:eps_MDP__eta_V}.
such that $f\in\Cc(\RR^+\to \RR)$ and $g\in\Cc(\RR^+\to \RR^+)$ satisfy Assumption~\ref{ass:SDE_coeffs}. 
Then $\{V^{\eps}\}$ converges in probability to~$v_0$.
\end{lemma}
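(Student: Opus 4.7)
The plan is to mimic Lemma \ref{lem:degenerate-limit}, adapted to the small-time rescaling. Writing the SDE in integral form yields
\begin{equation*}
V_t^\eps = v_0 + \eps \int_0^t f(V_s^\eps)\D s + \sqrt{\eps}\int_0^t g(V_s^\eps)\D W_s,
\end{equation*}
so that both drift and diffusion vanish as $\eps \downarrow 0$. Heuristically, the limit ODE $\dot\psi \equiv 0$ with $\psi_0 = v_0$ has the unique solution $\psi \equiv v_0$, which is the natural candidate limit, and convergence in probability in the uniform norm on $[0,T]$ is what we aim for.

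My preferred route is a direct second-moment estimate that avoids the detour through an LDP. First, I would establish a uniform moment bound $\sup_{\eps \in (0,1],\, t \in [0,T]} \EE\bigl[|V_t^\eps|^{2p}\bigr] < \infty$ via a standard Gronwall argument applied to It\^o's formula on $x \mapsto |x|^{2p}$, using the linear growth of $f$ from Assumption \ref{ass:SDE_coeffs}(i) and the $p$-polynomial growth of $g$ from Assumption \ref{ass:SDE_coeffs}(ii); the factor $\sqrt{\eps}$ in front of the diffusion in fact makes this bound uniform in $\eps$ without any extra effort.

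Second, applying Cauchy-Schwarz to the Lebesgue integral and Doob's maximal inequality combined with It\^o's isometry to the stochastic integral, I would obtain
\begin{equation*}
\EE\left[\sup_{t \in [0,T]} |V_t^\eps - v_0|^2\right] \le 2T\eps^2 \int_0^T \EE\bigl[|f(V_s^\eps)|^2\bigr]\D s + 8\eps \int_0^T \EE\bigl[|g(V_s^\eps)|^2\bigr]\D s.
\end{equation*}
Plugging the growth conditions on $f$ and $g$ into the integrands and invoking the uniform moment bound from the previous step, both integrals are controlled by a constant independent of $\eps$, so the right-hand side is $O(\eps)$. Markov's inequality then yields $V^\eps \to v_0$ in probability in the uniform topology on $\Cc_T$.

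The main obstacle is the uniform moment bound, because the polynomial growth of $g$ with $p \ge \tfrac12$ does not directly deliver a Lipschitz estimate; however, the usual bootstrap via It\^o's formula on $|x|^{2p}$ and Gronwall's inequality goes through thanks to the global Lipschitz continuity of $f$ and the $\sqrt{\eps}$ scaling of the diffusion. An alternative route, closer in spirit to Lemma \ref{lem:degenerate-limit}, would be to combine the small-time LDP for $V^\eps$ with the exponential equivalence of $\eps\int_0^{\cdot}f(V_s^\eps)\D s$ to zero proved earlier in this section, obtaining an LDP whose good rate function vanishes uniquely at $\psi \equiv v_0$, and then invoke Lemma \ref{lem:degenerate} to conclude.
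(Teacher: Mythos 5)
Your direct second-moment route is sound and is genuinely different from the paper's. The paper gives no explicit proof for this lemma -- it is declared \emph{analogous} to Lemma~\ref{lem:degenerate-limit}, whose proof goes through tightness via Kolmogorov--Chentsov (Lemma~\ref{lem:tightness}), a bounded continuous functional $\Phi_t(\psi) = |\psi_t - v_0 - \int_0^t f(\psi_s)\D s|\land 1$, weak convergence along subsequences, and the continuous mapping theorem plus H\"older and It\^o isometry to identify the limit. Your estimate $\EE\bigl[\sup_{t\le T}|V_t^\eps - v_0|^2\bigr] = O(\eps)$ followed by Markov is shorter and delivers convergence in probability in the uniform topology in one step, whereas the paper's route obtains pointwise identification of the limit and must rely on tightness to upgrade to the path space; both are valid, but yours is more elementary. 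Your alternative sketch (small-time LDP plus exponential equivalence of $\eps\int_0^\cdot f(V_s^\eps)\D s$ plus Lemma~\ref{lem:degenerate}) is closer to the authors' tool set and would also work.

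Two small points worth flagging. First, the statement's reference to~\eqref{eq:eps_MDP__eta_V} is almost certainly a typo for the small-time scaling $\D V_t^\eps = \eps f(V_t^\eps)\D t + \sqrt{\eps}g(V_t^\eps)\D W_t$; the limit $v_0$ only makes sense for that scaling (under~\eqref{eq:eps_MDP__eta_V} one gets the drift ODE solution $\psi$, not a constant), and you correctly used the small-time SDE. Second, your Gronwall step closes only when the polynomial growth exponent satisfies $p \le 1$ (so that $|g(x)|^2 \lesssim 1 + |x|^2$ and the required moment order is $\max(2, 2p) = 2$); if $p > 1$ the moment inequality does not self-close without further structure. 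This is not a defect of your argument relative to the paper -- the paper's tightness lemma has the same restriction, and the paper's small-time results (Assumption~\ref{ass:small_time_LDP}) explicitly work with $g(y) = \xi|y|^\gamma$, $\gamma \in [1/2,1)$ -- but it would be worth stating the restriction $p\le 1$ explicitly when you invoke the uniform moment bound. Also, you should phrase the moment bound as being on $\EE[|V_t^\eps|^{\max(2,2p)}]$ rather than $\EE[|V_t^\eps|^{2p}]$, since the drift contribution $\EE[f(V_s^\eps)^2]$ requires the second moment regardless of $p$.
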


\begin{theorem}\label{mdp-small-time}
Let $\beta, v_0>0$ and let $f, g \in \Cc_T$ be such that Assumption~\ref{ass:SDE_coeffs} is satisfied.
Under Assumption~\ref{ass:SDE_V}, 
let $W^{\eps} := \sqrt{\eps}W$ and $V^\eps, \eta^{\eps}$ defined in~\eqref{eq:eps_MDP__eta_V}.
The triple
$\{V^\eps, \eta^\eps,W^\eps\}$ satisfies an LDP with speed~$\eps$ and the good rate function
\begin{equation*}
\II^{V,\eta,W}(v,\eta,w) = 
\left\{
\begin{array}{ll}
\displaystyle \frac{1}{2}\|\dot{w}\|_{T}^2, & \text{if }
\displaystyle 
w \in \HH_T^0,\;
v = \psi,\;
\dot{\eta} =  g(\psi)\dot{w},\\
+\infty, & \text{otherwise}.
\end{array}
\right.
\end{equation*}
\end{theorem}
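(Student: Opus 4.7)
My plan is to follow the blueprint of Theorem~\ref{thm:mdp-small-noise} and exploit the essential simplification that, in the small-time setting, the limiting volatility trajectory $\psi$ collapses to the constant $v_0$, because the drift of $V^\eps$ enters multiplied by~$\eps$ and is thus asymptotically negligible. Lemma~\ref{lem:degenerate-limit-small} already yields $V^\eps \to v_0$ in probability, and by the same kind of estimate at the LDP scale this convergence upgrades to exponential equivalence with speed~$\eps$. This pins down the $v = \psi \equiv v_0$ clause of the proposed rate function and reduces the remaining work to identifying the joint LDP for the pair $(\eta^\eps, W^\eps)$.

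The first step is Schilder's theorem applied to $W^\eps := \sqrt{\eps} W$, giving an LDP on $\Cc_T$ with good rate function $\frac{1}{2}\|\dot{w}\|_T^2$ for $w \in \HH_T^0$ and $+\infty$ otherwise. The second step is to linearise the $\eta^\eps$-equation around $V^\eps = v_0$: in the small-time analogue of~\eqref{eq:eps_MDP__eta_V} the drift term $\eps^{-\beta}(f(V^\eps)-f(\psi))$ is proportional to $f'(v_0)\eta^\eps$ to leading order but is now switched off by the extra factor of~$\eps$ sitting in front of~$f$, so that the natural candidate limit dynamics is $\dot{\eta} = g(v_0)\dot{w}$, in agreement with the stated rate function.

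To make this rigorous, I introduce the Gaussian proxy
\[
\widehat{\eta}^\eps_t := \int_0^t g(v_0)\,\D W^\eps_s,
\]
for which the LDP for $(v_0, \widehat{\eta}^\eps, W^\eps)$ follows immediately by the contraction principle applied to Schilder. It then suffices, via~\cite[Theorem~4.2.13]{Dembo2010}, to establish exponential equivalence between $\eta^\eps$ and $\widehat{\eta}^\eps$ in sup-norm. Writing the difference, the drift contribution is controlled by the Lipschitz property of~$f$ (Assumption~\ref{ass:SDE_coeffs}) combined with Gronwall, while the martingale contribution $\int_0^\cdot \sqrt{\eps}\bigl(g(V^\eps_s)-g(v_0)\bigr)\D W_s$ is handled by an exponential Bernstein/Burkholder inequality together with the $p$-polynomial growth of~$g$.

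The main obstacle is precisely this exponential-equivalence estimate: one must control deviations of $g(V^\eps)$ around $g(v_0)$ on the exponential scale~$\eps$, which requires strengthening the in-probability convergence of Lemma~\ref{lem:degenerate-limit-small} to a super-exponential concentration statement uniformly on $[0,T]$. Once this is secured, the LDP transfers from $(v_0, \widehat{\eta}^\eps, W^\eps)$ to $(V^\eps, \eta^\eps, W^\eps)$, and the contraction principle delivers the rate function in the advertised form.
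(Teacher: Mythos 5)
The paper does not spell out a proof of Theorem~\ref{mdp-small-time} at all: it is introduced with the words ``Analogous to Lemma~\ref{lem:degenerate-limit} and Theorem~\ref{thm:mdp-small-noise}, we have'', and the reader is expected to transplant the argument of Appendix~\ref{sec:thm:mdp-small-noise_Proof}. Your blueprint is exactly that transplant, and its main structural moves are the ones the paper would use: pin $V^\eps$ down to the constant $\psi\equiv v_0$ at exponential scale (via Lemma~\ref{lem:degenerate-limit-small} together with Lemma~\ref{lem:degenerate}), observe that in the small-time rescaling the drift of $\eta^\eps$ picks up an extra factor of~$\eps$ and so contributes nothing at the large-deviations level (removing the $\dot f(\psi)\eta$ correction that appears in the small-noise Theorem~\ref{thm:mdp-small-noise}), obtain the joint LDP from Schilder, and close with the contraction principle. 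So in spirit you are on the paper's route.

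There is one genuine difference in decomposition. The paper's proof of Theorem~\ref{thm:mdp-small-noise} (whose small-time analogue is implicit here) introduces $Y^\eps:=\int_0^\cdot g(V^\eps_s)\,\D W^\eps_s$ and then obtains the LDP for the triple $\{V^\eps,Y^\eps,W^\eps\}$ by citing \cite[Lemma~3.1]{Robertson2010}, which precisely packages the statement ``$V^\eps$ exp.\ equivalent to $\psi$ $\Rightarrow$ $\int g(V^\eps)\,\D W^\eps$ exp.\ equivalent to $\int g(\psi)\,\D W^\eps$''. You instead jump straight to the constant-coefficient proxy $\widehat\eta^\eps = g(v_0)W^\eps$ and set out to prove exponential equivalence of $\eta^\eps$ with $\widehat\eta^\eps$ by hand, via exponential martingale estimates. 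These are substantively the same step; the obstacle you identify in your last paragraph --- super-exponential concentration of $\sup_t|g(V^\eps_t)-g(v_0)|$ --- is exactly what Robertson's Lemma~3.1 supplies, so your flagging it as the remaining work is accurate rather than a defect in the plan. Citing that lemma (as Appendix~\ref{sec:thm:mdp-small-noise_Proof} does) would close the gap with no extra estimate needed.

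One further point worth noting: the theorem as stated in the paper refers to $V^\eps,\eta^\eps$ ``defined in~\eqref{eq:eps_MDP__eta_V}'', i.e.\ the small-noise system, whose drift $f(V^\eps)$ carries no factor of $\eps$. Taken literally this would produce $\psi$ solving $\dot\psi=f(\psi)$ (not $\psi\equiv v_0$) and would not justify dropping the $\dot f(\psi)\eta$ term. You implicitly repair this by working with the small-time analogue of the system in which the drift of $V^\eps$ is $\eps f(V^\eps)$, which is clearly what the authors intended; a clean write-up should state that analogue explicitly before invoking it, since the citation to \eqref{eq:eps_MDP__eta_V} in the theorem is a misprint.
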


%%%%%%%%%%%%%%%%%%%%%%%%%%%%%%%%%%%%%%%%%%%%%%%%%
\subsection{Large-time moderate deviations} \label{sec:largetimeMDP}
%%%%%%%%%%%%%%%%%%%%%%%%%%%%%%%%%%%%%%%%%%%

We now consider a rescaling of ~\eqref{eq:log_price_dyn}
defined as $V_t^\eps = V_{\frac{t}{\eps}}$ and $X_t^\eps = \eps X_{\frac{t}{\eps}}$, so that under Assumption~\ref{ass:SDE_coeffs} and Assumption~\ref{ass:SDE_largetime_MDP},
\begin{equation}\label{eq:large_time_MD_dynamics}
\left\{
\begin{array}{rll}
\D X^\eps_t &= \displaystyle -\frac{1}{2}V_t^\eps \D t + \sqrt{\eps}\sqrt{V_t^\eps}\D B_t, 
& X_0^\eps = 0,\\
\D V_t^\eps &= \displaystyle \frac{1}{\eps}f(V_t^\eps)\D t + \frac{1}{\sqrt{\eps}}g(V_t^\eps)\D W_t, 
& V_0^\eps = v_0>0,
\end{array}
\right.\end{equation}
which leads to Regime 2 in the slow-fast setting of~\cite[Theorem 2.1]{Morse2017}, 
by choosing the time-scale separation parameter equal to~$\eps$. The following assumption is needed in order to conform to the conditions in~\cite{Morse2017}.
\begin{assumption}\label{ass:SDE_largetime_MDP}\
\begin{enumerate}[(i)]
    \item $f$ is locally bounded and of the form $f(y) = -\kappa y +\tau(y)$ with $\tau$ globally Lipschitz with Lipschitz constant~$L_{\tau}<\kappa$. In addition, the tail condition $\lim_{|y| \uparrow\infty}\frac{\tau(y)y}{|y|^2} = \kappa$ holds.
    \item The function $g$ is either uniformly continuous and bounded from above and away from zero or takes the form $g(y) = \xi |y|^{q_g}$ for $q_g \in [\frac{1}{2},1)$ with $\xi\ne 0$.
\end{enumerate}
\end{assumption}
\begin{remark}
Together with Assumption~\ref{ass:SDE_coeffs}, 
Condition~(ii), necessary to ensure ergodicity of the volatility process, collapses~$g$ to the form $g(y)=\xi |y|^\frac{1}{2}\,$ (for details we refer to~\cite{Jacquier2019,Morse2017}).
\end{remark}

\subsubsection{Theoretical results}
In order to apply the methodology from the previous sections to derive the desired changes of measure, we need a large-time MDP. More precisely, we need an MDP for $\{X^{\eps},\sqrt{\eps}W,\sqrt{\eps}\Wp\}$ in the case of deterministic drift. We do not consider the stochastic drift change here, since a rigorous treatment is out of scope of this paper.
% and an MDP for $\{\int_{0}^{\cdot} V_t^\eps \D t,\sqrt{\eps}\sqrt{V^\eps}\ct W,\sqrt{\eps}\sqrt{V^\eps}\ct\Wp\}$ in the case of stochastic drift of the form $\sqrt{V}\circ Z$. 
Similar problem has been studied in~\cite[Theorem 2.1]{Morse2017} and~\cite[Theorem 3.3]{Jacquier2019}, where authors propose fewer conditions, although in a simpler setting (which happens to include the Heston model as well). %, using Lemma A.2. in~\cite{Jacquier2019}
We now introduce a theorem, which is a direct application of~\cite[Theorem~3.3]{Jacquier2019} and~\cite[Theorem~2.1]{Morse2017}, that provides the desired MDP.

\begin{theorem}\label{thm:limitPoisson}
Let $\Ll_{V}$ denote the infinitesimal generator of~$V$ before rescaling, i.e.
\[
\Ll_V h = f\dot{h} + \frac{1}{2}g^2\ddot{h}.
\]
Under Assumption~\ref{ass:SDE_coeffs} and Assumption~\ref{ass:SDE_largetime_MDP} the following hold:
\begin{enumerate}[i)]
    \item There exists a unique invariant measure~$\mu$ corresponding to $\Ll_V$;
    \item The process~$X^{\eps}$ converges in probability to $-\frac{1}{2}\overline{v}t$, where $\overline{v} = \int_0^\infty y \, \mu(dy)$;
    % \item The processes $\sqrt{\eps}\int_{0}^{\cdot}\sqrt{V^\eps_t}\D W_t$ and $\sqrt{\eps}\int_{0}^{\cdot}\sqrt{V^\eps_t}\D W^\perp_t$ converge to zero in probability;
    \item There exists a unique solution $\varphi$ with at most polynomial growth to the Poisson equation
    \[
    \Ll_V(\varphi)(y) = \frac{y - \bar{v}}{2}, 
    \qquad \text{with} \qquad \int_0^\infty \varphi(y)\mu(dy)=0.
    \]
\end{enumerate}
Furthermore, let us denote $ Q=\int_0^\infty\boldsymbol\alpha\boldsymbol\alpha^\top\mu(\D y)$, where
$$
\boldsymbol\alpha =
\begin{pmatrix}
\rho\sqrt{y} + \dot{\varphi}(y)g(y) & \brho\sqrt{y}\\
1 & 0 \\
0 & 1 \\
\end{pmatrix}
% \qquad\text{and}\qquad \boldsymbol \alpha_2 =
% \begin{pmatrix}
% \dot{\varphi}(y)g(y) & 0\\
% \sqrt{y} & 0\\
% 0 & \sqrt{y} \\
% \end{pmatrix}.
$$
Then the triple $\{X^{\eps},\sqrt{\eps}W,\sqrt{\eps}\Wp\}$ follows an MDP with good rate function~$I_{Q}$,
% Furthermore 
% $\{\int_{0}^{\cdot} V_t^\eps \D t,\sqrt{\eps}\sqrt{V^\eps}\circ W,\sqrt{\eps}\sqrt{V^\eps}\circ \Wp\}$ follows an MDP with good rate function~$\II_{Q_2}$
where
$$
\II_{Q}(\phi) := 
\inf \left\{\frac{1}{2}\int_0^T u_s^\top u_s\D s: u\in L^2\left([0,T]; \RR^3\right), \, \dot{\phi}^\top Q^{-1} \dot{\phi} = u^\top u \right\},
$$
if $\phi\in\mathcal{AC}$ and infinite otherwise.
\end{theorem}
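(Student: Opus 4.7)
The plan is to recognise this statement as essentially a packaged corollary of the slow–fast MDP machinery in \cite{Morse2017,Jacquier2019} and to reduce the proof to (a) checking that Assumptions~\ref{ass:SDE_coeffs} and~\ref{ass:SDE_largetime_MDP} land exactly in the ``Regime~2'' framework of \cite[Theorem~2.1]{Morse2017} with time-scale parameter~$\eps$, and (b) producing the three ergodic-theoretic inputs (i)–(iii) that feed that theorem. I would then make the MDP itself transparent through the Poisson-equation decomposition of $X^\eps$, which simultaneously explains where the matrix $Q$ comes from.

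For the ergodic inputs, part (i) follows from a Foster–Lyapunov argument with test function $y\mapsto 1+y^2$: the decomposition $f(y)=-\kappa y+\tau(y)$ with $L_\tau<\kappa$ together with the tail condition $\lim_{|y|\to\infty}\tau(y)y/|y|^2=\kappa$ yields a drift condition of the form $\Ll_V V\le -c V+C$, while the non-degeneracy of $g$ (either bounded away from zero or of the $|y|^{q_g}$ form) gives irreducibility and a local minorisation, hence a unique invariant measure $\mu$ with polynomial moments of all orders by Meyn–Tweedie. Part~(ii) is then the ergodic theorem applied to the scaled dynamics: writing $X^\eps_t = -\tfrac12\int_0^t V^\eps_s\D s + \sqrt{\eps}\int_0^t\sqrt{V^\eps_s}\D B_s$, the drift is $-\tfrac12\int_0^t V^\eps_s\D s = -\tfrac{\eps t}{2}\cdot\tfrac{\eps}{t}\int_0^{t/\eps}V_s\D s\to -\tfrac12\bar v\,t$ in probability by ergodicity of $V$, while the martingale term has quadratic variation $\eps\int_0^t V^\eps_s\D s=\mathcal O(\eps)$ and vanishes. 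Part~(iii) is the Fredholm alternative for the generator $\Ll_V$ of a uniformly ergodic diffusion: since $(y-\bar v)/2$ is $\mu$-centred, the Pardoux–Veretennikov theory (see \cite[Theorem~1]{Morse2017} and references therein) produces a unique centred solution $\varphi$ with at most polynomial growth, and $\dot\varphi,\ddot\varphi$ inherit the same control.

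The heart of the statement is the MDP. The trick is to apply Itô's formula to $\eps\varphi(V^\eps_t)$; since $V^\eps$ has generator $\eps^{-1}\Ll_V$, this yields the exact identity
\begin{equation*}
-\frac12\int_0^t\bigl(V^\eps_s-\bar v\bigr)\D s \;=\; \sqrt{\eps}\int_0^t g(V^\eps_s)\dot\varphi(V^\eps_s)\D W_s \;-\;\eps\bigl(\varphi(V^\eps_t)-\varphi(v_0)\bigr).
\end{equation*}
Combining with the martingale representation of $X^\eps$ gives the \emph{pure martingale plus negligible remainder} decomposition
\begin{equation*}
X^\eps_t+\tfrac12\bar v\,t \;=\; \sqrt{\eps}\int_0^t\!\bigl[\rho\sqrt{V^\eps_s}+g(V^\eps_s)\dot\varphi(V^\eps_s)\bigr]\D W_s+\sqrt{\eps}\int_0^t\brho\sqrt{V^\eps_s}\D W^\perp_s+\mathcal O(\eps).
\end{equation*}
After the moderate-deviations rescaling by $1/(\sqrt\eps\,h(\eps))$, the $\mathcal O(\eps)$ boundary term vanishes (polynomial growth of $\varphi$ controls $\varphi(V^\eps_t)$ via moment bounds on $V^\eps$), and one is left with a rescaled martingale whose predictable quadratic variation, jointly with the two Brownian components $\sqrt\eps W,\sqrt\eps W^\perp$, converges to $Qt$ by ergodicity — the three rows of the integrand are exactly the columns of the matrix $\boldsymbol\alpha$, and $Q=\int\boldsymbol\alpha\boldsymbol\alpha^\top\mu(\D y)$ is the $\mu$-averaged covariance. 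The MDP for this martingale triple is then exactly the statement of \cite[Theorem~3.3]{Jacquier2019}, and the variational rate function $\II_Q$ is standard for Gaussian-type MDPs arising from martingales with deterministic limiting bracket.

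The main obstacle I expect is the exponential-equivalence step that replaces the Poisson error $\eps(\varphi(V^\eps_t)-\varphi(v_0))/(\sqrt\eps h(\eps))$ with zero: one needs uniform-in-$\eps$ exponential moment bounds on $\sup_{t\le T}|\varphi(V^\eps_t)|$ at the MDP scale, which in the degenerate case $g(y)=\xi\sqrt{y}$ requires some care (polynomial growth of $\varphi$, combined with the ergodic moment bounds of \cite{Morse2017}, gives stretched-exponential moments sufficient for the $\eps^{2\alpha}$ MDP scale but not finer). A secondary obstacle is to verify that the assumptions of \cite{Morse2017} — in particular the Poisson equation solvability and the uniform-in-$\eps$ control on $V^\eps$ — are implied by Assumption~\ref{ass:SDE_largetime_MDP}; once this bookkeeping is done, the remainder is routine.
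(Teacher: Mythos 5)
Your proposal takes essentially the same approach as the paper, which offers no explicit proof of this theorem and simply declares it ``a direct application of \cite[Theorem~3.3]{Jacquier2019} and \cite[Theorem~2.1]{Morse2017}''; your write-up is precisely the content of those references unpacked — Foster–Lyapunov ergodicity for $\mu$, the Poisson-equation corrector applied via It\^o to $\eps\varphi(V^\eps)$ to recast $X^\eps+\tfrac12\bar v t$ as a martingale plus an $\mathcal O(\eps)$ remainder, and identification of the limiting bracket with $Q=\int\boldsymbol\alpha\boldsymbol\alpha^\top\D\mu$. The It\^o identity and the decomposition you display are correct, and the obstacle you flag (exponential-equivalence of the Poisson boundary term at the moderate scale under degenerate $g$) is exactly the technical issue the cited references are invoked to absorb.
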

\begin{lemma}\label{q-invertible}
The matrix~$Q$ is invertible. 
% The matrix~$Q_2$ is invertible if and only if $\left( \int_{0}^{\cdot} \dot{\varphi}(y)g(y)\,\mu(\D y) \right)^2 \neq \overline{v}\int_{0}^{\cdot} \dot{\varphi}^2(y)g^2(y)\,\mu(\D y)$.
\end{lemma}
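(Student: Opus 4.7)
The plan is to show that $Q$ is symmetric positive definite, which yields invertibility. Symmetry is immediate from the definition $Q = \int_0^\infty \boldsymbol\alpha\boldsymbol\alpha^\top\mu(\D y)$, and for any $v = (v_1, v_2, v_3)^\top \in \RR^3$, Fubini yields
$$
v^\top Q v = \int_0^\infty \left\vert \boldsymbol\alpha(y)^\top v\right\vert^2 \mu(\D y) \geq 0,
$$
so $Q$ is positive semi-definite. The real work is to exclude the degenerate case.

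I would then suppose $v^\top Q v = 0$ and exploit non-negativity of the integrand: this forces $\boldsymbol\alpha(y)^\top v = 0$ for $\mu$-a.e.\ $y > 0$, which unpacks into the pair of scalar equations
\begin{equation*}
\brho \sqrt{y}\, v_1 + v_3 = 0
\qquad\text{and}\qquad
\left(\rho \sqrt{y} + \dot\varphi(y)\, g(y)\right) v_1 + v_2 = 0.
\end{equation*}
Since $\rho \in (-1,1)$ gives $\brho > 0$, and since $y \mapsto \sqrt{y}$ is non-constant on any infinite subset of $\RR^+$, the first equation can hold $\mu$-almost everywhere only if $v_1 = 0$, provided the support of $\mu$ is infinite. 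Once $v_1 = 0$, the first equation delivers $v_3 = 0$, and substituting back in the second yields $v_2 = 0$, whence $v = 0$.

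The key step is therefore verifying that the invariant measure $\mu$ of $\Ll_V$ ensured by Theorem~\ref{thm:limitPoisson}(i) has infinite support in $(0,\infty)$. Under Assumption~\ref{ass:SDE_largetime_MDP}, the volatility diffusion is non-degenerate on $\RR^+$, and classical one-dimensional ergodic diffusion theory (via Feller's boundary classification or the explicit invariant density obtained from the speed measure) furnishes an invariant density that is strictly positive on $\RR^+$. In the Heston-type setting imposed by the remark following Assumption~\ref{ass:SDE_largetime_MDP} (where $g(y)=\xi\sqrt{y}$), $\mu$ is explicitly a Gamma distribution with full support on $\RR^+$, in which case the argument is immediate. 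Once this support property is in hand, the algebraic step decoupling the components of $v$ is elementary; the only potential obstacle is rigorously pinning down this support statement in the uniformly-continuous-and-bounded-$g$ branch of Assumption~\ref{ass:SDE_largetime_MDP}(ii), but the ergodicity hypotheses there are precisely designed to ensure it.
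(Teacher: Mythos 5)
Your proof is correct, but it follows a genuinely different route from the paper's. You establish positive semi-definiteness via the integral representation $v^\top Q v = \int_0^\infty \lvert \boldsymbol\alpha(y)^\top v\rvert^2\,\mu(\D y)$ and then show the null space is trivial by unpacking $\boldsymbol\alpha(y)^\top v = 0$ into two scalar equations; the paper instead computes $\det(Q)$ explicitly (the $3\times3$ matrix has a simple block structure since the lower-right block is the identity) and observes that $\det(Q) = \mathrm{Var}_\mu(\rho\sqrt{Y}+\dot\varphi(Y)g(Y)) + \mathrm{Var}_\mu(\brho\sqrt Y) \ge 0$, a variance-sum inequality (the paper calls this Cauchy--Schwarz). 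Both arguments hinge on the same bottom-line fact: the degenerate case forces $Y$ (i.e.\ $\sqrt Y$, via $\brho>0$) to be $\mu$-a.s.\ constant. Where you differ is in closing this case. The paper argues directly that if $\mu$ were a point mass at $\overline v$, then the invariant dynamics would force $f(\overline v)=g(\overline v)=0$, contradicting Assumption~\ref{ass:SDE_coeffs} (recall $g>0$ everywhere); this is a one-liner. You instead invoke the stronger and more technical claim that $\mu$ has infinite support in $(0,\infty)$, appealing to ergodic diffusion theory and Feller boundary classification, and you yourself flag this as the remaining loose end in the non-explicit branch of Assumption~\ref{ass:SDE_largetime_MDP}(ii). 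You actually need much less than infinite support: you only need $\mu$ to not be a single Dirac, which is exactly what the paper's shorter argument delivers. Your linear-algebra framing is arguably more transparent than the determinant computation, but swapping your support-of-$\mu$ justification for the paper's ``no stationary point'' observation would tighten the proof and remove the caveat you noted.
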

\begin{proof}
Let~$Y$ a random variable with distribution~$\mu$, the invariant measure. Then 
\begin{align*}
\mathrm{det}(Q) &= \EE_\mu\left[ \left( \rho\sqrt{Y} + \dot{\varphi}(Y)g(Y)\right)^2 +\brho^2 Y \right ]
- \EE_\mu\left[ \rho\sqrt{Y} + \dot{\varphi}(Y)g(Y) \right ]^2
- \EE_\mu\left[ \brho \sqrt{Y} \right ]^2 \\
&= \EE_\mu\left[ \left( \rho\sqrt{Y} + \dot{\varphi}(Y)g(Y)\right)^2 \right ] - \EE_\mu\left[ \rho\sqrt{Y} + \dot{\varphi}(Y)g(Y) \right ]^2 + \EE_\mu\left [ \brho^2 Y\right ] - \EE_\mu\left[ \brho \sqrt{Y} \right ]^2
\geq 0,
\end{align*}
by Cauchy-Schwarz, where an equality would imply that~$V$ is constant $\mu$-almost surely. 
This implies $f(\overline{v})=g(v)=0$, which is not possible due to Assumption~\ref{ass:SDE_coeffs} on $f$ and $g$. 
% Next the second statement is immediate since
% $\overline{v}:=\EE_\mu[Y]>0$ and 
% \begin{align*}
% \mathrm{det}(Q_2)
% &= \EE_\mu\left[\dot{\varphi}^2(Y)g^2(Y) \right ]
% \overline{v}^2
% - \EE_\mu\left[\dot{\varphi}(Y)g(Y)\sqrt{y}\right ]\overline{v}\\
% &= \overline{v}\left(\EE_\mu\left[\dot{\varphi}^2(Y)g^2(Y)\right ]\overline{v} - \EE_\mu\left[\dot{\varphi}(Y)g(Y)\sqrt{Y}\right ]^2\right).
% \end{align*}
\end{proof}

%%%%%%%%%%%%%%%%%%%%%%%%%%%%%%%%%%%%%%%%%%%
\subsubsection{Example: Options with path-dependent payoff}
Consider again an option with a continuous payoff $G(X)$ and let $F=\log|G|$. Following the approach of the previous sections and using similar notations, we search solutions to the dual problems for deterministic change of drift.
Let 
$\xx \coloneqq \begin{pmatrix}x_1& x_2& x_3\end{pmatrix}^\top$ and 
$\underline{\xx}\coloneqq \begin{pmatrix}x_2 &  x_3\end{pmatrix}^\top$. 
By considering a deterministic change of drift similarly as before our problem becomes 
\[
L(\hh) = \limsup\limits_{\eps \downarrow 0} \eps\log\EE\Big[\exp\frac{1}{\eps}\Big\{
2F(X^\eps) +\frac{1}{2}\|\dot{\hh}\|^2_T - \dot{\hh}\cT({\Wf^\eps})^\top\Big\}\Big]
\]
Applying the modified Varadhan's lemma gives us the 
target functional
$$
\Lf(\xx,\hh) = 2F(x_1) + \frac{1}{2}\|\dot{\hh}\|^2_T - \dot{\hh}\circ\underline{\xx}^\top - \frac{1}{2}\int_0^T\dot{\xx}^\top Q^{-1} \dot{\xx} \D t.
$$
We now write the problem in terms of~$\widetilde{\xx}:=\dot{\xx}\,$:
$$
\begin{cases}
& \displaystyle \widetilde{\xx}^{*}  =  \argmax\limits_{\widetilde{\xx} \in L^2([0,T]; \RR^3)}
\left\{F\left(\int_{0}^{\cdot} \widetilde x_1(t) \D t\right) - \frac{1}{4}\int_0^T \widetilde{\xx}^\top(t) \left( Q^{-1} + \mathrm{diag}(0,1,1) \right )\widetilde{\xx}(t) \D t\right\}, \\
& \displaystyle \dot{\hh} = \widetilde{\underline{\xx}}^{*},
\end{cases}
$$
which is equivalent to
\begin{equation}\label{eq:SupInfInt}
\sup\limits_{x_1\in L^2([0,T]; \RR)}\left \{F\left(\int_{0}^{\cdot} \widetilde x_1(t) \D t\right) - \frac{1}{4}\int_0^T \inf_{\underline{\xx}\in L^2([0,T]; \RR^2)} \xx^\top(t) \underbrace{\left( Q^{-1} + \mathrm{diag}(0,1,1) \right )}_{\eqqcolon A}\xx(t) \D t \right \}.
\end{equation}
This substantially simplifies the main optimisation problem, which can now be solved explicitly, by first denoting $A \coloneqq Q^{-1} + \mathrm{diag}(0,1,1)$ and then writing 
\[
A =
\begin{pmatrix}
a_{11} & \mathbf a_{21}^\top\\
\mathbf a_{21} &A_{22}\\
\end{pmatrix}
\]
with $a_{11} \in\RR$, $\mathbf a_{21}\in\RR^{2}$, $A_{22}\in\RR^{2\times 2}$, so that the infimum in~\eqref{eq:SupInfInt} yields\footnote{Since $Q$ is positive-definite, both $A$ and $A_{22}$ are positive-definite and thus invertible.} ${\underline{\xx}^{*} = - A_{22}^{1} \mathbf a_{21} x_1 \eqqcolon B x_1}$. The quadratic form in fact reduces to ${\xx^\top A \xx = \left( a_{11} - \mathbf a_{21}^\top A_{22}^{-1} a_{21} \right ) x_1^2 \coloneqq \nu\, x_1^2}$, where both constants
\begin{equation}\label{eq:ConstPDP}
\begin{array}{rl}
\nu &=\displaystyle \EE_\mu\left[ \left( \rho\sqrt{Y} + \dot{\varphi}(Y)g(Y)\right )^2 \right ] + \EE_\mu\left [\brho^2 Y\right ] - \frac{1}{2}\EE_\mu\left[ \rho\sqrt{Y} + \dot{\varphi}(Y)g(Y)\right]^2 - \frac{\EE_\mu[\brho\sqrt{Y}]^2}{2},\\
B &=\displaystyle -\frac{1}{2}\begin{pmatrix}
\EE_\mu[\rho\sqrt{Y}+\dot{\varphi}(Y)g(Y)]\\
\EE_\mu[\brho\sqrt{Y}],
\end{pmatrix}
\end{array}
\end{equation}
are obtained from the definition of $Q$. Finally, we have
\begin{equation}\label{pb-deterministic-large}
\begin{cases}
&\displaystyle x_1^{*}=\argmax\limits_{x_1 \in L^2([0,T],\R)}
\left\{F\left(\int_{0}^{\cdot} x_1\D t\right) - \frac{\nu}{4}\int_0^T x_1^2 \D t\right\},
\\
&\dot{\hh}^{*}
= B x_1^{*}.
\end{cases}
\end{equation}
\begin{remark}
This problem is similar to the problem in Section~\ref{sec:BS}, with deterministic volatility.
\end{remark}

\begin{example}(Heston model) We again consider the Heston model, i.e., in the setting of~\eqref{eq:log_price_dyn} $f(v)=\kappa(\theta - v)$ and $g(v)=\xi\sqrt{v}$ for $\kappa,\theta>0$. 
Notice that the condition of Assumption~\ref{ass:SDE_V} are automatically satisfied. 
In this case, the invariant measure~$\mu$ is a Gamma distribution $\Gamma\big(\frac{2\kappa\theta}{\xi^2}, \frac{2\kappa}{\xi^2}\big)$ 
as shown in~\cite{cox1985theory}. 
Therefore, in light of Theorem~\ref{thm:limitPoisson}, 
$\overline{v} = \theta$ and the solution of the Poisson equation $\dot{\varphi}(y)=-\frac{1}{2\kappa}$ is constant, which means
that
$\varphi(y) = (\theta-y)/(2\kappa)$
because of the constraint $\int_{0}^{\infty}\varphi(y)\mu(\D y)=0$.
Therefore 
$$
\EE_\mu[Y] = \theta
\qquad\text{and}\qquad
\EE_\mu\left[\sqrt{Y}\right]
=\frac{\Gamma\left( \frac{2\kappa}{\xi^2}\theta + \frac{1}{2} \right )}{\Gamma\left( \frac{2\kappa\theta}{\xi^2} \right )} \frac{\xi}{\sqrt{2\kappa}}.
$$
From this, noting that $\dot{\varphi}(y)g(y) = -\xi\sqrt{y}/(2\kappa)$ we can calculate the constants~\eqref{eq:ConstPDP}:
\begin{align*}
\nu = \left( \left( \rho - \frac{\xi}{2\kappa} \right)^2 + 1-\rho^2 \right)\left( \theta - \EE_\mu\left[\sqrt{Y}\right]\right)
\qquad \text{and} \qquad
B = -\frac{1}{2}
\begin{pmatrix}
\rho - \frac{\xi}{2\kappa}\\
\brho\\
\end{pmatrix}\EE_\mu\left[\sqrt{Y}\right].
\end{align*}
\end{example}

\section{Numerical results}\label{sec:Num_results}
In all the different settings we studied, the final form of the optimisation problem is
\[\sup_{(\varphi, \phi)\in \HH_T^0\times \HH_T^{v_0}}F(\varphi) - \frac{1}{2}\int_0^T\ell(\varphi_t,\phi_t)\D t\]
where the function $F$ was linked to the payoff, $\ell$ to the rate function of the (log\nobreakdash
-)price process and $\varphi$ and $\phi$ are absolutely continuous paths that arise from Varadhan's lemma of the (log\nobreakdash-)price and volatility processes respectively.

In the tables below, we summarise all problems considered so far. As one can see, in the deterministic drift setting, the MDP problem is usually as simple as solving the problem under the Balck-Scholes (BS) model or at least by approximating the model with a Black-Scholes model. Furthermore, the variance reduction for geometric Asian options are also similar under the MDP with deterministic drift change, meaning advantage over a simple BS model is not significant. However, when it comes to the stochastic change of drift of the form $\int_{0}^{\cdot}\dot{h}_t\sqrt{V_t}\D t$, the MDP problems are slightly harder than in BS approximation and the variance reduction results are in fact significantly better.

\begin{table}[H]
\centering
 \makebox[\textwidth]{\begin{tabular}{ | m{4cm} | m{5.75cm}| m{5.5cm} | } 
 \hline
 \textbf{Method used} &  \textbf{Optimisation with general payoff} &  \textbf{Optimisation with payoff $\alpha\cT X$}\\
\hline
Deterministic volatility approximation (BS) & Optimisation in $\HH_T^0$ with $\ell$ a quadratic function of  the (log-)price process $\varphi$ & Optimisation on $\R$ with simple function to optimise \\
\hline
LDP small-(noise/time) & Optimisation involves solving a system of ODEs & Optimisation in $\RR^2$ with a time-consuming function to compute (the solution of a Ricatti)\\
\hline
MDP small-noise log-price & Optimisation has a simpler form than in the LDP. The ODE can be pre-computed & Straightforward optimisation in $\RR^2$ \\
\hline
MDP small-noise price & Optimisation has a simpler form than in the LDP. The ODE can be pre-computed & Straightforward optimisation in $\RR$ \\
\hline
MDP small-time & Similar complexity to the BS case & Similar complexity to the BS case \\
\hline
MDP large-time 
& Similar complexity to the BS case
& Similar complexity to the BS case
\\
\hline
\end{tabular}}
\caption{Summary of the optimisation problems with the \textit{deterministic} change of drift. $\varphi$ and $\phi$ are absolutely continuous paths that arise from Varadhan's lemma of the (log-)price and volatility processes respectively.}
%\label{table:Vberg}
\end{table}
\begin{table}[H]
\centering
 \makebox[\textwidth]{\begin{tabular}{ | m{4cm} | m{5.75cm}| m{5.5cm} | } 
 \hline
 \textbf{Method used} &  \textbf{Optimisation with general payoff} &  \textbf{Optimisation with payoff $\alpha\cT X$} \\
\hline
Deterministic volatility approximation (BS) & Optimisation on $\HH_T^0$ with $\ell$ a quadratic function of $\varphi$& Optimisation on $\R$ with simple function to optimise \\
\hline
LDP small-(noise/time) & Same as in the deterministic case with a simple additional ODE & Optimisation in $\RR^2$ with a time-consuming function to compute (the solution of an ODE) \\
\hline
MDP small-(noise/time) & Similar as in the deterministic case with an additional ODE & Straightforward optimisation in $\RR^2$ \\
\hline
% MDP large-time 
% & ? %Optimisation in $\HH_T^0$ with $\ell$ a (non-quadratic) function of $\varphi$ 
% & ? %Optimisation on $\R$ with time-consuming function to compute (compute the inverse of a function and an integral)
% \\
% \hline
\end{tabular}}
\caption{Summary of the optimisation problems with the \textit{stochastic} change of drift. $\varphi$ and $\phi$ are absolutely continuous paths that arise from Varadhan's lemma of the (log-)price and volatility processes respectively.}
%\label{table:Vberg}
\end{table}

\subsection{Pricing Asian options}
In order to compare variance reduction results in the Heston model, we look at Asian Geometric Call options, with payoffs of the form
\[
\left(\exp\left\{\frac{1}{T}\int_0^T X_t\D t\right\} - K\right)^+ = \left(S_0 \exp\left \{\frac{rT}{2}\int_0^T\frac{T-t}{T}\D X_t\right \} - K\right)^+,
\]
where $x^+ = \max\{x,0\}$. For convenience we restate the dynamics of the Heston model
\begin{equation*}
\begin{array}{rlrl}
\D X_t &= -\frac{1}{2}V_t\D t + \sqrt{V_t}(\rho \D W_t + \brho\D \Wp_t), & X_0 &= 0,\\
\D V_t &= \kappa (\theta - V_t)\D t + \xi \sqrt{V_t}\D W_t, & V_0 &= v_0 > 0,
\end{array}
\end{equation*}
with parameters realistic on Equity markets:
\[
S_0 = 50 \ ; \ 
r = 0.05 \ ; \ 
v_0 = 0.04 \ ; \ 
\rho = -0.5 \ ; \ 
\kappa = 2 \ ; \ 
\theta = 0.09 \ ; \ 
\xi = 0.2 \,.
\]
To simulate the paths $(X,V)$ on $[0,T]$, 
we use a standard Euler-Maruyama scheme for~$X$, but use the scheme~\cite{Lord2009} for the CIR process in the volatility, 
which is upward biased\footnote{There are many discretisation schemes for the Heston model. 
Since the objective of this paper is not to study the effects of different schemes we satisfy ourselves with~\cite{Lord2009}.}, however nevertheless converges strongly in $L^1$ to the true process $V$. For $n\in\NN$, $\Delta =\frac{T}{n}$ and the increments of the Brownian motion $\{\Delta W_{i}^n\}_{i=0}^{n-1}$ the scheme reads on $[0,T]$:
\begin{equation*}
\left\{
\begin{array}{rcl}
\widetilde{V}_{0}^n & = & v_0,\\
\widetilde{V}_{i+1}^n & = & \widetilde{V}_{i}^n + \kappa \left(\theta - \widetilde{V}_{i}^{n,+}\right)\Delta + \xi \sqrt{\widetilde{V}_{i}^{n,+}}\Delta W_{i}^n \quad \text{for all} \quad i\in\{0,\dots,n-1\},\\
V_{i}^n & = & \widetilde{V}_{i}^{n,+},
\end{array}
\right.
\end{equation*}
In what follows, we compare different LDP and MDP results, with $n=252$ trading days per year. 
All the results are computed for maturity~$T=1$, using $N_{\text{MC}}=500,000$ Monte-Carlo samples. We also consider an antithetic estimator and an LDP estimator derived under the assumption of deterministic volatility (denoted by BS).
Furthermore, since LDP-based deterministic changes of drift in the BS setting (or in cases where the final form of the optimisation problem is similar) are easy to compute, we also propose a \textit{fully adaptive} scheme based on the BS estimator: 
$$h_t=\sum_{i=1}^{n} h^i_t \, \ind_{[(i-1)\cdot\Delta, i\cdot\Delta)}(t),$$
where $h^i_t$ is the best deterministic change of drift up to the $i$-th discretisation step.\footnote{Fully adaptive schemes are  computationally very heavy, therefore we only consider it in the Black-Scholes setting. 
The change of law is computed $n \times N_{\text{MC}}$ times ($252 \times 5\mathrm{E}5 = 1.26\mathrm{E}8$ in our case).} %not sure I understand what the fully adaptive estimator is. Is it computed on each path up to each discretisation step? So $$h_t=\sum_{i=1}^{n} h^i_t \, \ind_{[(i-1)\Delta, i\Delta)}(t),$$
We shall refer to \textit{deterministic} schemes
to mean changes of law with deterministic changes of drift and to \textit{adaptive} changes of drift 
for changes of law with drift of the form $\int_{0}^{\cdot} \dot{h}_t\sqrt{V_t}\D t$.

%%%%%%%%%%%%%%%%%%%%%%%%%%%%%%%%%%
\subsection{LDP results in different settings}
We now look at the results of LDP based estimators in \textit{small-noise}, \textit{small-time} and \textit{large-time} setting. Figure~\ref{fig:LDPvarianceRedu} indicates that the estimators derived in small-noise log-price and small-noise price have very similar variance reduction. On the other hand, the small-time estimator provides good results, but is significantly outperformed by the other two. Although not apparent in the figure, looking at Table~\ref{tab:VarRedu}, the adaptive estimators provide slightly better results, as a matter of fact, they are notably better for small strikes. However, the computation time is also higher for adaptive estimators, which balances out the slight increase in variance reduction for higher strikes.
\begin{figure}[H]
    \centering
    \includegraphics[width=0.495\textwidth, trim={2cm 0.5cm 2cm 1cm }]{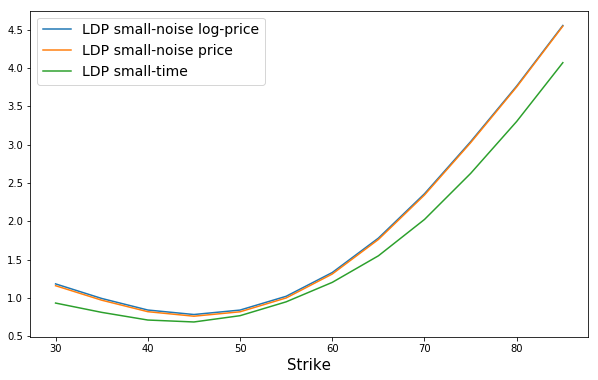}
    \includegraphics[width=0.495\textwidth, trim={2cm 0.5cm 2cm 1cm }]{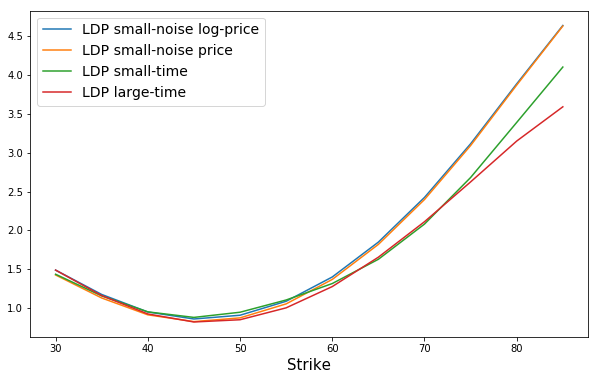}
    \caption{Variance reduction for LDP based estimators for in log-scale. \textit{Left}: deterministic change of drift. \textit{Right}: adaptive changes of drift.}\label{fig:LDPvarianceRedu}
\end{figure}

%%%%%%%%%%%%%%%%%%%%%%%%%%%%%%%%%%
\subsection{MDP results in different settings}

In the deterministic case, all considered estimators have similar variance reduction (see Figure~\ref{fig:MDPvarianceRedu}). 
To be more precise, the BS estimator has a very similar variance reduction or even even slightly outperforms the MDP based estimators (note that the blue and green lines in the left plot of Figure~\ref{fig:MDPvarianceRedu} are indistinguishable for high strikes). Therefore, in that aspect, MDP based estimators do not justify their higher computational cost compared to the simple LDP-BS estimator. In the adaptive case, the BS estimator performs slightly better than before, whereas the MDP based estimators significantly outperform their results from the deterministic case \textit{and} those of the BS estimator. Moreover, as it will be discussed in the next section, their variance reduction is in fact even close to that of the LDP based estimators.
\begin{figure}[H]
    \centering
    \includegraphics[width=0.495\textwidth, trim={2cm 0.5cm 2cm 1cm }]{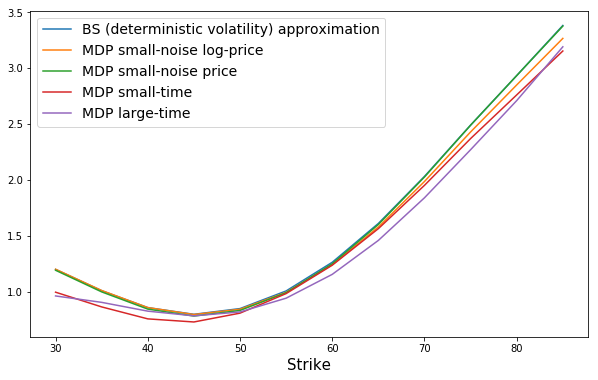}
    \includegraphics[width=0.495\textwidth, trim={2cm 0.5cm 2cm 1cm }]{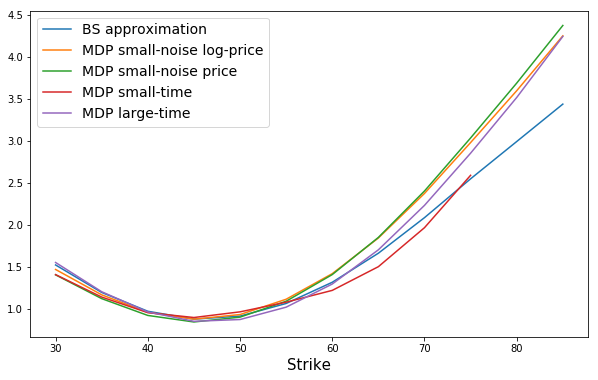}
    \caption{Variance reduction for MDP based estimators for in log-scale. \textit{Left}: deterministic change of drift. \textit{Right}: adaptive changes of drift. Note that because of computational problems, the adaptive small-time MDP estimator was not computed correctly for a strikes greater than $75$. Nevertheless, it looks to be outperformed significantly by other MDP estimators.}\label{fig:MDPvarianceRedu}
\end{figure}

%%%%%%%%%%%%%%%%%%%%%%%%%%%%%%%%%%%%%%%%%%%%%%%%%%%
%%%%%%%%%%%%%%%%%%%%%%%%%%%%%%%%%%%%%%%%%%%%%%%%%%%
\subsection{Overall comparison}

Looking at Figure~\ref{ref:OverallVarRedu}, as expected the LDP small-noise adaptive estimators perform best, even though MDP small-noise and large-time adaptive estimators are not far behind. 
Regaring computation time, Table~\ref{tab:CompTime} indicates that MDP estimators are on average about 
$10\%$  and LDP estimators approximately $15\%$ slower than the corresponding standard BS estimators. 
The fully adaptive BS estimator provides interesting results, especially for near-the-money strikes, 
where it performs much better than MDP and LDP estimators. 
Although this estimator is time consuming, it can still provide a good balance between variance reduction and computation time for certain strikes, see Tables~\ref{tab:VarRedu},~\ref{tab:VarComp},~\ref{tab:CompTime} and Figure~\ref{ref:OverallVarComp}.

\begin{figure}[hbt!]
    \centering
    \includegraphics[width=0.495\textwidth, trim={2cm 0.5cm 2cm 1cm }]{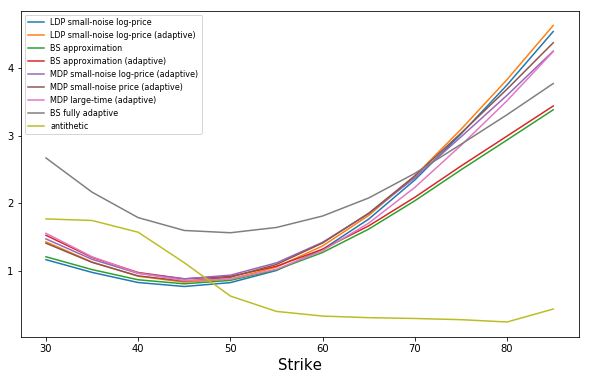}
    \includegraphics[width=0.495\textwidth, trim={2cm 0.5cm 2cm 1cm }]{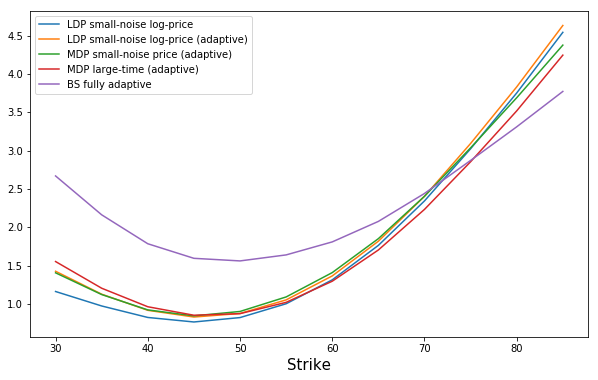}
    \caption{Variance reduction for different estimators in log-scale. 
    The antithetic estimator offers almost no variance reduction for OTM options, 
    because with higher strikes very few paths end up in the money, reducing the effect of antithetic samples.} \label{ref:OverallVarRedu}
\end{figure}

\begin{figure}
    \centering
    \includegraphics[scale=0.5]{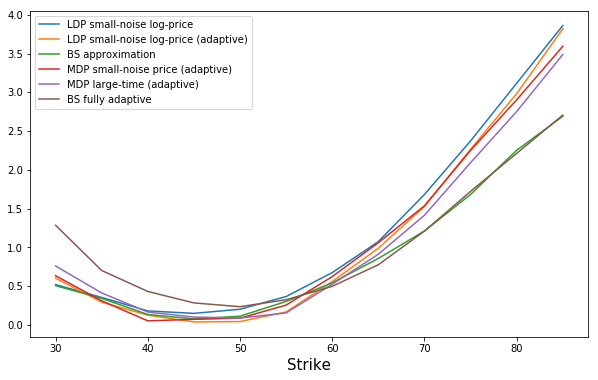}
    \caption{Ratio of variance reduction over computation time for different estimators in log-scale.}\label{ref:OverallVarComp}
\end{figure}
\noindent In the following three tables we use these notations:
\begin{enumerate}[-]
    \item \textbf{Proba}: Probability of having a positive Payoff.
    \item \textbf{LDPsn}: Deterministic estimator based on LDP in small-noise log-price setting.
    \item \textbf{LDPsn} A: Adaptive estimator based on LDP in small-noise log-price setting.
    \item \textbf{BS}: Deterministic BS estimator.
    \item \textbf{BS A}: Adaptive BS estimator.
    \item \textbf{MDPsn-${\log}$ A}: Adaptive estimator based on MDP in small-noise log-price setting.
    \item \textbf{MDPsn A}: Adaptive estimator based on MDP in small-noise price setting.
    % \item \textbf{MDPlt A}: Adaptive estimator based on MDP in large-time setting.
    \item \textbf{BS A2}: Fully adaptive BS estimator.
    \item \textbf{Ant}: Antithetic estimator.
    \item \textbf{Classic}: Classic Monte-Carlo estimator.
\end{enumerate}

\begin{table}[hbt!]
\centering
\begin{tabular}{lrrrrrrrHrr}
\toprule
Strike &   Prob. &  LDPsn &  LDPsn A &   BS &  BS A &  MDPsn$_{\log}$ A &  MDPsn A &  MDPlt A &  BS A2 &  Ant \\
\midrule
30 &    0.95 &     14 &       26 &   16 &    33 &           29 &       25 &       35 &    {470} &   58 \\
35 &    0.94 &    9.4 &       13 &   10 &    15 &           14 &       13 &       15 &    {150} &   55 \\
40 &     0.9 &    6.6 &      8.2 &  7.3 &   9.3 &          9.1 &      8.3 &      9.2 &     {60} &   36 \\
45 &    0.76 &    5.8 &      6.7 &  6.4 &   7.5 &          7.5 &      7.0 &      7.1 &     {39} &   13 \\
50 &    0.52 &    6.6 &      7.5 &  7.1 &   8.2 &          8.5 &      8.0 &      7.5 &     {36} &  4.2 \\
55 &    0.26 &     10 &       11 &   10 &    11 &           13 &       12 &       10 &     {43} &  2.5 \\
60 &   0.096 &     20 &       23 &   18 &    20 &           26 &       25 &       19 &     {64} &  2.1 \\
65 &   0.025 &     58 &       65 &   41 &    46 &           69 &       70 &       50 &    {120} &  2.0 \\
70 &   0.005 &    220 &      250 &  110 &   120 &          240 &      250 &      170 &    {280} &  1.9 \\
75 & 0.00078 &   1100 &     {1200} &  310 &   350 &          960 &     1100 &      720 &    750 &  1.9 \\
80 &  0.0001 &   5700 &     {6800} &  860 &   990 &         4000 &     4900 &     3300 &   2000 &  1.7 \\
85 & 1.1e-05 &  35000 &    {43000} & 2400 &  2800 &        18000 &    24000 &    18000 &   5900 &  2.7 \\
\bottomrule
\end{tabular}
\caption{Variance reduction for different estimators and probability of positive payoff}\label{tab:VarRedu}
%\label{table:Vberg}
\end{table}

\begin{table}[hbt!]
\centering
\begin{tabular}{lrrrrrrrHrr}
\toprule
Strike &   Prob. &  LDPsn &  LDPsn A &  BS &  BS A &  MDPsn$_{\log}$ A &  MDPsn A &  MDPlt A &  BS A2 &  Ant \\
\midrule
30 &    0.95 &    3.3 &      4.0 & 3.2 &   5.6 &          4.6 &      4.3 &      5.8 &     {19} &  8.2 \\
35 &    0.94 &    2.3 &      2.0 & 2.2 &   2.9 &          2.4 &      2.1 &      2.6 &    5.0 &  {7.4} \\
40 &     0.9 &    1.5 &      1.3 & 1.4 &   1.5 &          1.4 &      1.1 &      1.5 &    2.7 &  {5.9} \\
45 &    0.76 &    1.4 &      1.1 & 1.2 &   1.4 &          1.3 &      1.2 &      1.3 &    {1.9} &  1.8 \\
50 &    0.52 &    1.6 &      1.1 & 1.3 &   1.2 &          1.2 &      1.2 &      1.2 &    {1.7} & 0.53 \\
55 &    0.26 &    {2.3} &      1.5 & 2.0 &   1.9 &          1.8 &      1.8 &      1.4 &    2.1 &  0.4 \\
60 &   0.096 &    {4.7} &      3.6 & 3.5 &   3.7 &          4.2 &      4.2 &      3.4 &    3.2 & 0.34 \\
65 &   0.025 &     {11} &      9.8 & 7.2 &   7.4 &           {11} &       {11} &      8.2 &    6.0 & 0.32 \\
70 &   0.005 &     {48} &       33 &  16 &    21 &           31 &       34 &       26 &     16 & 0.31 \\
75 & 0.00078 &    {240} &      180 &  48 &    63 &          160 &      180 &      120 &     52 & 0.29 \\
80 &  0.0001 &   {1300} &      960 & 180 &   180 &          660 &      800 &      570 &    160 & 0.28 \\
85 & 1.1e-05 &   {7300} &     6600 & 490 &   500 &         3000 &     3900 &     3100 &    510 & 0.43 \\
\bottomrule
\end{tabular}
\caption{Ratio of variance reduction over computation time for different estimators}\label{tab:VarComp}
\end{table}

\begin{table}[hbt!]
\centering
\begin{tabular}{lrrrrrrrHrr}
\toprule
Strike &  Classic &  LDPsn &  LDPsn A &  BS &  BS A &  MDPsn$_{\log}$ A &  MDPsn A &  MDPlt A &  BS A2 &  Ant \\
\midrule
30 &       12 &     12 &       14 &  12 &    13 &           14 &       13 &       13 &     31 &   14 \\
35 &       11 &     11 &       14 &  12 &    13 &           13 &       14 &       13 &     36 &   15 \\
40 &       11 &     12 &       13 &  13 &    14 &           14 &       15 &       13 &     30 &   13 \\
45 &       11 &     11 &       13 &  13 &    13 &           13 &       13 &       13 &     28 &   14 \\
50 &       11 &     11 &       14 &  13 &    14 &           15 &       14 &       13 &     28 &   15 \\
55 &       11 &     12 &       15 &  12 &    13 &           15 &       14 &       15 &     28 &   13 \\
60 &       11 &     12 &       14 &  12 &    13 &           13 &       13 &       13 &     28 &   13 \\
65 &       11 &     12 &       14 &  13 &    13 &           13 &       13 &       13 &     27 &   13 \\
70 &       11 &     12 &       15 &  14 &    13 &           15 &       15 &       14 &     24 &   13 \\
75 &       11 &     12 &       14 &  14 &    13 &           13 &       13 &       13 &     21 &   14 \\
80 &       11 &     12 &       14 &  12 &    13 &           13 &       13 &       13 &     20 &   13 \\
85 &       11 &     12 &       14 &  12 &    13 &           13 &       13 &       13 &     19 &   13 \\
\bottomrule
\end{tabular}
\caption{Computation time (in seconds) for different estimators.}\label{tab:CompTime}
\end{table}

\newpage
\subsection{Variance swaps}
The methodology can also be applied to options with payoffs depending on volatility,
for example for options with payoffs of the form
$$
\int_0^T V_t \ind_{\{S_t\geq K\}}\D t.
$$
Consider $T = 1$ and different strikes $K>0$ under the Heston model with the same parameters as above. The results for different estimators are summarised in Table~\ref{tab:VarSwap}.
\begin{table}[H]
\centering
\begin{tabular}{lrrrrrrr}
\toprule
Strike &  LDPsn &  LDPsn A &  MDPsn &  MDPsn A &  BS &  BS A &  Ant \\
\midrule
10  &    {220} &       43 &    190 &       44 & 1.0 &   1.0 &   45 \\
20  &    {160} &       38 &    140 &       38 & 1.0 &   1.0 &   46 \\
30  &    8.2 &      6.3 &    8.6 &      6.5 & 1.0 &   1.0 &   {14} \\
40  &    1.1 &      1.1 &    1.1 &      1.1 &   1 &     1 &  {2.5} \\
45  &   0.86 &     0.85 &   0.88 &     0.87 & 1.0 &   1.0 &  {2.9} \\
50  &   0.96 &     0.96 &   0.86 &     0.76 & 1.0 &   1.0 &   {12} \\
55  &    2.4 &      2.5 &    2.6 &      2.4 & 1.8 &   1.9 &  {6.8} \\
60  &    3.1 &      3.3 &    {4.4} &      3.0 & 3.8 &   4.1 &  3.2 \\
70  &    8.4 &  {9.1} &    7.4 &      6.9 & 6.4 &   7.0 &  2.1 \\
80  &     16 &  19 &     15 &       {22} &  11 &    11 &  2.0 \\
90  &     56 &  {68} &     26 &       54 &  17 &    19 &  1.9 \\
100 &    200 &      {240} &     36 &      190 &  42 &    54 &  2.3 \\
\bottomrule
\end{tabular}
\caption{Variance reduction for different estimators.}\label{tab:VarSwap}
\end{table}

For small strikes, the LDP estimators based on BS approximation are not performing well, which is not surprising, since in the BS approximation the payoff of the option in question is almost constant for small strikes. On the other hand, LDP and MDP give good results. We also notice a clear difference in favor of non-adaptive changes of drift.
For large strikes, we have the same behavior as before: adaptive MDP estimators give intermediate results between LDP (performing best) and Black-Scholes.

%%%%%%%%%%%%%%%%%%%%%%%%%%%%%%%%%%%%%%%%%%%%%%%%%
% \section{Conclusion}
% In this paper, we compared several estimators based on large and moderate deviations. These estimators are especially efficient in the context of rare events. Although, large deviations have been widely studied in the literature and our approach differs by searching for changes of drift of the form $\int_{0}^{\cdot}\dot{h}_t\sqrt{V_t}\D t$. These estimators combine fast computation similar to the ones with deterministic changes of drift, and offer an increase in variance reduction.
% The main contribution of the paper is adaptation of the approach for the use in moderate deviations for deterministic and adaptive changes of drift in different settings. We find that in the Heston model, for geometric Asian call options, estimators with deterministic changes of drift based on moderate deviations are no better than estimators computed using deterministic volatility approximation and the LDP approach. However, MDP based estimators with slightly adaptive changes of drift perform much better than LDP based counterparts with deterministic volatility approximation in the context of rare events. They show close performance to LDP based estimators under the Heston model, which still show the best performance. In models where LDP is difficult to compute, slightly adaptive MDP based estimators could therefore provide a good alternative.
%%%%%%%%%%%%%%%%%%%%%%%%%%%%%%%%%%%%%%%%%%%%%%%%%
\bibliography{biblio} 
\bibliographystyle{siam}

\appendix

%%%%%%%%%%%%%%%%%%%%%%%%%%%%%%%%%%%%%%%%%%%%%%%%%
\section{Technical proofs}\label{apx:proofs}
%%%%%%%%%%%%%%%%%%%%%%%%%%%%%%%%%%%%%%%%%%%%%%%%%

%%%%%%%%%%%%%%%%%%%%%%%%%%%%%%%%%%%%%%%%%%%%%%%%%%%%
\subsection{Proof of Lemma~\ref{lem:degenerate}}\label{sec:lem:degenerate_Proof}
Suppose the random variable $Z^\eps:\Omega\rightarrow \Xx$ maps to metrisable space~$\Xx$. Let then $(\Xx, \mathrm{d})$ be a metric space let $\beta > 1$. We first show that $Z^{\eps^\beta}$ is exponentially equivalent to $x_0\in \Xx$. For $\delta > 0$ define $\Gamma_{\delta} \coloneqq \{x \in \Xx : \, \mathrm{d}(x,x_0)>\delta \, \}$ and observe that $\{\omega \in \Omega : \, Z^{\eps^\beta}(\omega)\in\Gamma_\delta\}$ is measurable, since $\Gamma_\delta$ is an open set. Note also that~$\II$ achieves the infimum over a closed set, being a good rate function. It thus follows that $\inf_{x\in\overline{\Gamma}_\delta} \II(x) > 0$, since $\II(x)=0$ if and only if $x=x_0$. Recall also that $Z^\eps$ satisfies an LDP with the good rate function~$\II$, then $\limsup_{\eps\downarrow 0}\eps^\beta\log\PP\left[ Z^{\eps^\beta} \in \Gamma_\delta \right] \leq - \inf_{x\in\overline{\Gamma}_\delta}\II(x)$ and therefore
\begin{align*}
\limsup_{\eps\downarrow 0} \eps \log \PP \left[ Z^{\eps^\beta} \in \Gamma_\delta \right] &= \limsup_{\eps\downarrow 0} \eps^{1-\beta}\left(\eps^\beta \log \PP \left[ Z^{\eps^\beta} \in \Gamma_\delta \right]\right) \\
&\leq -\bigg\{ \inf_{x\in\overline{\Gamma}_\delta} I(x)\bigg\} \limsup_{\eps\downarrow 0}\eps^{1-\beta} = -\infty.
\end{align*}
Thus proving the exponential equivalence. Next, we show that $x_0\in\Xx$ satisfies an LDP with the good rate function~$\II$. Let $\Gamma\subset\Xx$ such that $x_0\notin\Gamma$, then $\limsup_{\eps\downarrow 0} \eps \log \PP[x_0\in\Gamma]=-\infty$ for the upper bound and thus $-\inf_{x\in\Gamma^\circ}=-\infty$. Now let $\Gamma_{x_0}\subset\Xx$ be such that $x_0\in\Xx$, then $-\inf_{x\in\Gamma_{x_0}} \II(x)=0$ for the upper bound and $\liminf_{\eps\downarrow 0}\eps \log \PP[x_0\in\Gamma_{x_0}]=0$ for the lower bound. By the exponential equivalence the stated result follows.

%%%%%%%%%%%%%%%%%%%%%%%%%%%%%%%%%%%%%%%%%%%%%%%%%

\subsection{Proof of Lemma~\ref{lem:expct_convg}}\label{sec:lem:expct_convg_Proof}
Let $Z^\eps:\Omega\rightarrow \Xx$ 
with~$\Xx$ a metrisable space and$(\Xx, \mathrm{d})$ be a metric space, and $\delta,M>0$. 
Define further the sets ${\Gamma_{\delta} \coloneqq \{x \in \Xx: \, \mathrm{d}(x,x_0)>\delta \, \}}$ and ${B_M \coloneqq \left \{ x\in\Xx: \, |x| \leq M \, \right \}}$. 
Since $Z^\eps$ is uniformly integrable, there exists $\delta>0$ such that
\[
\sup_{\eps>0}\EE\left[ \left\vert Z^\eps \right\vert \ind_{\left\{ Z^\eps \notin B_M \right\}} \right] < \delta,
\]
and hence 
$$
\sup_{\eps>0}\EE\left[ \left\vert Z^\eps \right\vert \right] \leq \sup_{\eps>0}\EE\left[ \left\vert Z^\eps \right\vert \ind_{\left\{ Z^\eps \in B_M \right\}} \right] + \sup_{\eps>0}\EE\left[ \left\vert Z^\eps \right\vert \ind_{\left\{ Z^\eps \notin B_M \right\}} \right]
\leq M + \delta.
$$
By Fatou's lemma we have
\[
|x_0| \leq \liminf_{\eps\downarrow 0} \EE\left[ \left\vert Z^\eps \right\vert \right] < \sup_{\eps>0} \left[ \left\vert Z^\eps \right\vert \right],
\]
but also by Bonferroni's inequality
\begin{align*}
\EE\left[ \left\vert Z^\eps \right\vert \right] &= \EE\left[ \left\vert Z^\eps \right\vert \ind_{\left \{ Z^\eps \in \Gamma_\delta \cap Z^\eps \notin B_M \right \}} \right] + \EE\left[ \left\vert Z^\eps \right\vert \ind_{\left \{ Z^\eps \in \Gamma_\delta \cap Z^\eps \in B_M \right \}} \right] + \EE\left[ \left\vert Z^\eps \right\vert \ind_{\left \{ Z^\eps \in \Gamma_\delta \right \}} \right].
\end{align*}
Since $ \left\vert Z^\eps \right\vert \ind_{\left \{ Z^\eps \notin \Gamma_\delta \right \}}\leq \left ( \mathrm{d}\left(Z^\eps, x_0\right) + |x_0| \right ) \ind_{\left \{ Z^\eps \notin \Gamma_\delta \right \}}$, then $\EE\left[ \left\vert Z^\eps \right\vert \right]\ind_{\left \{ Z^\eps \notin \Gamma_\delta \right \}} \leq \delta + |x_0|$. Finally
\begin{align*}
\limsup_{\eps\downarrow 0} \EE\left[ \left\vert Z^\eps \right\vert \right]  &\leq \limsup_{\eps\downarrow 0}\EE\left[ \left\vert Z^\eps \right\vert \ind_{\left \{Z^\eps \notin B_M \right \}} \right] + M\lim_{\eps\downarrow 0} \PP\left[ Z^\eps \in \Gamma_\delta \right ] + x_0.
\end{align*}
Because $\lim_{\eps\downarrow 0} \PP\left[ Z^\eps \in \Gamma_\delta \right ] = 0 $ from the proof of Lemma~\ref{lem:degenerate} and $Z^\eps$ is uniformly integrable, the result follows by taking $M$ to infinity.

%%%%%%%%%%%%%%%%%%%%%%%%%%%%%%%%%%%%%%%%%%%%%%%%%

\subsection{Proof of Lemma~\ref{lem:degenerate-limit}}\label{sec:lem:degenerate-limit_Proof}
A key step in proving Lemma~\ref{lem:degenerate-limit}
is the tightness of the rescaled variance process:
\begin{lemma}\label{lem:tightness}
The family of random variables $\{V^\eps\}_{\eps>0}$ from~\eqref{eq:log_price_dyn} is tight.
\end{lemma}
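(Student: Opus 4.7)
\smallskip

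\textbf{Plan.} The process in question is the rescaled variance from~\eqref{eq:eps_MDP__eta_V},
$$
\D V_t^\eps = f(V_t^\eps)\,\D t + \eps^{\frac{1}{2}+\beta} g(V_t^\eps)\,\D W_t,
\qquad V_0^\eps = v_0,
$$
and we want tightness of $\{V^\eps\}_{\eps \in (0,1]}$ in $\mathcal{C}([0,T];\RR)$. Since every path starts from the deterministic point $v_0$, the marginals at $t=0$ are trivially tight, so by Kolmogorov's tightness criterion it suffices to establish, for some $k\geq 1$, $\gamma>0$ and a constant $C$ independent of $\eps$,
$$
\EE\left[\,|V_t^\eps - V_s^\eps|^{2k}\right] \leq C |t-s|^{1+\gamma},
\qquad\text{for all } 0\leq s\leq t\leq T.
$$

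\smallskip

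\textbf{Step 1: uniform moment bounds.} First I would show that
$$
M_k := \sup_{\eps \in (0,1]} \EE\left[\sup_{t\in [0,T]} |V_t^\eps|^{2k}\right] < \infty
$$
for any $k\geq 1$. Writing the SDE in integral form, Jensen's inequality together with the Burkholder-Davis-Gundy inequality applied to the stochastic integral gives
$$
\EE\left[\sup_{s\leq t}|V_s^\eps|^{2k}\right] \leq C_1 + C_2\int_0^t \EE\left[|f(V_s^\eps)|^{2k}\right]\D s + C_3\,\eps^{k(1+2\beta)} \EE\left[\left(\int_0^t g(V_s^\eps)^2\D s\right)^{k}\right].
$$
The Lipschitz assumption on $f$ yields $|f(x)|^{2k}\leq C(1+|x|^{2k})$, and the polynomial growth of $g$ (Assumption~\ref{ass:SDE_coeffs}, exponent $p\geq\frac{1}{2}$) combined with Jensen's inequality gives a control of the BDG term by $C\int_0^t\EE[|V_s^\eps|^{2kp}]\D s$. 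Since $\eps^{1/2+\beta}\leq 1$ and $p\leq 1$ in the cases of interest (in particular Heston, where $p=\frac{1}{2}$), Young's inequality absorbs the $|x|^{2kp}$ term into the $|x|^{2k}$ term, after which Gronwall's lemma yields $M_k<\infty$ uniformly in $\eps$. For $p>1$, the estimate has to be carried out up to the stopping time $\tau_N^\eps := \inf\{t:|V_t^\eps|>N\}$, and then one sends $N\uparrow\infty$ using monotone convergence.

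\smallskip

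\textbf{Step 2: increment bound and conclusion.} For $0\leq s<t\leq T$, Hölder's inequality in the drift term and BDG in the martingale term give
$$
\EE\left[|V_t^\eps - V_s^\eps|^{2k}\right]
\leq C(t-s)^{2k-1}\int_s^t\EE\left[|f(V_u^\eps)|^{2k}\right]\D u
+ C\,\eps^{k(1+2\beta)}(t-s)^{k-1}\int_s^t \EE\left[|g(V_u^\eps)|^{2k}\right]\D u.
$$
Invoking Step~1 and the growth bounds on $f$ and $g$, both terms on the right-hand side are bounded uniformly in $\eps$ by $C_k|t-s|^k$. Choosing any $k\geq 2$ yields $\gamma = k-1\geq 1>0$ in Kolmogorov's criterion, and tightness follows.

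\smallskip

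\textbf{Main obstacle.} The delicate point is closing the Gronwall loop in Step~1 when the exponent $p$ in the polynomial growth of $g$ exceeds~$1$: one cannot directly absorb $\EE[|V_s^\eps|^{2kp}]$ into $\EE[|V_s^\eps|^{2k}]$. The standard fix is a localisation via $\tau_N^\eps$ together with a superlinear moment bound (for instance through Itô's formula applied to a suitable convex function of $|V^\eps|$), and then a separate argument (e.g.\ comparison with the deterministic ODE $\dot\psi = f(\psi)$) to show the stopping times do not accumulate. In all examples actually considered in this paper the exponent $p$ lies in $[\tfrac12,1)$ and the direct argument suffices.
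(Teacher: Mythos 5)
Your proof follows essentially the same route as the paper's: Kolmogorov--Chentsov tightness criterion, BDG and Jensen for the increments, linear growth of $f$ and polynomial growth of $g$, and Gronwall to control the moments. The only organizational difference is that you cleanly separate the uniform moment bound (your Step~1) from the increment estimate (your Step~2), whereas the paper interleaves them, invoking Gronwall on the integrands $\EE[1+|V_u^\eps|^\alpha]$ midway through the increment estimate; this is a presentational rather than mathematical difference. Two small points in your favour: you explicitly note that $\eps^{1/2+\beta}\leq 1$ and track where the $\eps$-dependence is dropped (the paper's displayed diffusion term silently omits both the $\eps$ power and the $\D W_u$), and you flag the genuine gap for $p>1$ in Assumption~\ref{ass:SDE_coeffs}, where a naive Gronwall loop does not close and localisation is needed — a point the paper's proof glosses over.
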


%%%%%%%%%%%%%%%%%%%%%%%%%%%%%%%%%%%%%%%%%%%%%%%%%%%%%%%
\begin{proof}[Proof of Lemma~\ref{lem:tightness}]
By Kolmogorov-Chentsov~\cite[Theorem 21.42]{Klenke2014} we need to show there exist $\alpha, \beta, M > 0$ such that for every $\eps >0$ and $0\leq s < t \leq T$
\[
\EE\left[\left\vert V_t^\eps - V_s^\eps\right\vert^\alpha\right] \leq M \left\vert t - s\right\vert^{1+\beta},
\]
therefore using the obvious inequality 
$|a+b|^{\alpha} \leq 2^{\alpha-1}\left(|a|^{\alpha}+|b|^{\alpha}\right)$ for $\alpha\geq1$ we have
\begin{align*}
\EE\left[ \left\vert \int_0^t f(V_u^\eps)\D u - \int_0^s f(V_u^\eps)\D u + \int_0^t g(V_u^\eps)\D u - \int_0^s g(V_u^\eps)\D u \right \vert^\alpha \right ] \\ \leq 2^{\alpha-1} \left( \EE\left[ \left\vert  \int_s^t f(V_u^\eps) \D u \right\vert^\alpha \right ] + \EE\left[ \left\vert  \int_s^t g(V_u ^\eps) \D u \right\vert^\alpha \right] \right).
\end{align*}
Consider for now only the drift term
$$
\EE\left[ \left\vert  \int_s^t f(V_u^\eps) \D u \right\vert^\alpha \right ]
\leq (t-s)^{\alpha-1}\EE\left[ \int_s^t \left\vert f(V_u^\eps)\right\vert^\alpha \D u  \right ]
\leq K_1^\alpha(t-s)^{\alpha-1}\int_s^t \EE\left[ 1 + \left\vert V_u^\eps\right\vert^\alpha \right]\D u,
$$
where we used Jensen's inequality for the first inequality. 
The second follows from the linear growth condition in Assumption~\ref{ass:SDE_coeffs}. For the diffusion term
\begin{align*}
\EE\left[ \left\vert  \int_s^t g(V_u ^\eps) \D u \right\vert^\alpha \right] 
&\leq \left\{\frac{\alpha(\alpha-1)}{2}\right\}^{\frac{\alpha}{2}} \EE\left[ \left\vert  \int_s^t \left\vert g(V_u ^\eps)\right\vert^{2} \D u \right\vert^{\frac{\alpha}{2}} \right]\\
&\leq \left\{\frac{\alpha(\alpha-1)}{2}\right\}^{\frac{\alpha}{2}} (t-s)^{\frac{\alpha}{2}-1} \EE\left[\int_s^t \left\vert g(V_u ^\eps)\right\vert^{\alpha} \D u  \right] & %(\text{Jensen's ineq.})
\\
&\leq K_2^\alpha \left\{\frac{\alpha(\alpha-1)}{2}\right\}^{\frac{\alpha}{2}} (t-s)^{\frac{\alpha}{2}-1} \int_s^t \EE\left[1 +  \left\vert V_u^\eps\right\vert^{\alpha H} \right] \D u,
\end{align*}
where the first line follows from the Burkholder-Davis-Gundy inequality (with $\alpha\geq 2$) and the last one from the 
$H$-polynomial growth ($H\geq \half$) condition on the diffusion. 
Adding both terms together and applying the Gronwall lemma to the integrands yields
\begin{align*}
\EE\left[\left\vert V_t^\eps - V_s^\eps\right\vert^\alpha\right] 
&\leq 2^{\alpha-1}\Big(K_1^\alpha(t-s)^{\alpha-1}\int_s^t \EE\left[ 1 + \left| V_u^\eps\right|^\alpha \right]\D u \\ 
& \quad \quad \quad + K_2^\alpha \left\{\frac{\alpha(\alpha-1)}{2}\right\}^{\frac{\alpha}{2}} (t-s)^{\frac{\alpha}{2}-1} \int_s^t \EE\left[1 +  \left| V_u^\eps\right|^{\alpha H} \right] \D u\Big)\\
&\leq 2^{\alpha-1}\Big(K_1^\alpha(t-s)^{\alpha-1}\int_s^t 2^{\frac{\alpha}{2}-1}(1 + v_0^\alpha)\E^{2\alpha(u-s)} \D u \\ & ~\quad \quad \:\; + K_2^\alpha \left\{\frac{\alpha(\alpha-1)}{2}\right\}^{\frac{\alpha}{2}} (t-s)^{\frac{\alpha}{2}-1} \int_s^t 2^{\frac{\alpha}{2}-1}(1 + v_0^\alpha)\E^{2\alpha(u-s)} \D u\Big)\\
&\leq 2^{\alpha-1} \Big(2^{\frac{\alpha}{2}-1} K_1^{\alpha} (1+v_0^\alpha)\E^{2\alpha H T} (t-s)^\alpha +  2^{\frac{\alpha H}{2}-1} K_2^{\alpha} (1+v_0^{\alpha H})\E^{2\alpha H T} (t-s)^{\frac{\alpha}{2}} \Big) \\
&\leq 2^{\alpha-1}\left(C_1 T^{\frac{\alpha}{2}} + C_2\right)(t-s)^{\frac{\alpha}{2}}
= M \left\vert t - s\right\vert^{\frac{\alpha}{2}},
\end{align*}
where the constants are 
$$
C_1\coloneqq 2^{\frac{\alpha}{2}-1} K_1^{\alpha} (1+v_0^\alpha)\E^{2\alpha T},
\quad
C_2\coloneqq 2^{\frac{\alpha H}{2}-1} K_2^{\alpha} (1+v_0^{\alpha H})\E^{2\alpha H T},
\quad
M\coloneqq 2^{\alpha-1}\max\{C_1 T^{\frac{\alpha}{2}}, \, C_2\}.
$$
Choosing any $\alpha\geq 2$ and $\beta = \frac{\alpha}{2}-1$ completes the proof.
\end{proof}
%%%%%%%%%%%%%%%%%%%%%%%%%%%%%%%%%%%%%%%%%%%%%%%%%

Similarly as in~\cite{Chiarini2014}, define the bounded map $\Phi_t\in\Cc_T$ for each $t\in[0,T]$ as 
$$
\Phi_t(\psi) = \left \vert \psi_t - v_0 - \int_0^t f(\psi_t)\D s \right \vert \land 1.
$$
It is also continuous. 
Indeed, let $\psi^n \rightarrow \psi$ in $\Cc_T$, then we have by the triangle inequality 
$$
\left \vert \Phi_t(\psi^n) - \Phi_t(\psi) \right \vert \leq  \left \vert \psi^n - \psi \right \vert + \int_0^t \left \vert f(\psi^n) - f(\psi) \right \vert\D s.
$$
Since $\Aa\coloneqq \{\psi^n\in\Cc_T: n\in\NN\}\cup \{\psi\}$ is a compact subset of $\Cc_T$ and $f$ is Lipschitz continuous on~$\Aa$ by Assumption~\ref{ass:SDE_coeffs}, there exists a Lipschitz constant $L$, such that
\[
\left\vert f(\varphi_t) - f(\phi_t) \right\vert \leq L \sup_{t\in[0,T]} |\varphi_t - \phi_t|,
\]
for all $t\in[0,T]$ and $\varphi, \phi\in\Aa$. 
Therefore~$\Phi_t$ is continuous since
$$
\left \vert \Phi_t(\psi^n) - \Phi_t(\psi) \right \vert \leq \sup_{t\in[0,T]} \left \vert\psi_t^n - \psi_t\right \vert + t L \sup_{t\in[0,T]} \left \vert\psi_t^n - \psi_t\right \vert.
$$
Now by Lemma~\ref{lem:tightness} $\{V^\eps\}_{\eps>0}$ is tight as a family of random variables. Therefore, by taking a subsequence $\{V^\eps\}_{\eps>0}\,$, converges in distribution to some random variable on the same probability space. 
Since~$\Phi_t$ is continuous and bounded, we have by the Continuous mapping theorem that $\lim_{\eps\downarrow 0} \EE[\Phi_t(V^\eps)]=\EE[\Phi(\psi)]$. We now show this limit is zero for 
$\widetilde{\psi}=v_0 + \int_{0}^{\cdot} f(\widetilde{\psi}_s)\D s$ by H\"older inequality and It\^o isometry
\begin{align*}
\EE[\Phi_t(V^\eps)]
 = \EE\left[\left\vert \sqrt{\eps} \int_0^t g(V_s^\eps) \D W_s \right\vert\right]
  & \leq \left( \eps \; \EE\left[\left\vert \int_0^t g(V_s^\eps) \D W_s \right\vert^2\right] \right)^{\frac{1}{2}}\\
  &= \left( \eps \; \EE\left[ \int_0^t \left\vert g(V_s^\eps)\right\vert^2 \D W_s \right] \right)^{\frac{1}{2}}
  \leq \sqrt{\eps} M,
\end{align*}
since the integral is bounded by $M>0$ by Lemma~\ref{lem:tightness}.
Therefore $\lim_{\eps\downarrow 0} \EE[\Phi_t(V^\eps)] = 0$
and the limiting function indeed solves the ODE $\widetilde{\psi}=v_0 + \int_{0}^{\cdot} f(\widetilde{\psi}_s)\D s$ 
almost surely for all $t\in[0,T]$ by the definition of~$\Phi_t$.

%%%%%%%%%%%%%%%%%%%%%%%%%%%%%%%%%%%%%%%%%%%%%%%%%

%%%%%%%%%%%%%%%%%%%%%%%%%%%%%%%%%%%%%%%%%%%%%%%%%
\subsection{Proof of Theorem~\ref{thm:mdp-small-noise}}\label{sec:thm:mdp-small-noise_Proof}
We first prove the following version of Gronwall's lemma:
\begin{lemma}\label{lem:Gronwall-bis}
Let $0\leq a < b$ and $\varphi:[a,b]\rightarrow \RR^+$ continuous with
${\varphi(t)\leq M + \int_a^t f(s) g(\varphi(s))\D s}$ on~$[a,b]$ for some $M>0$.
If $f:[a,b]\rightarrow \RR$ and $g:\RR^+\rightarrow\RR^+$ satisfy Assumption~\ref{ass:SDE_coeffs}
(in particular, $f$ continuous and $g$ increasing, continuous and strictly positive outside the origin), then
$G(\varphi(t)) \leq G(M) + \int_a^t f(s)\D s$ for all $t \in [a,b]$,
where $G(u) = \int_1^u g(s)^{-1}\D s$, with $G(0) = -\infty$.
\end{lemma}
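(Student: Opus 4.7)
The plan is to carry out the standard Bihari / nonlinear-Gronwall argument. Define the majorant
$$
\Psi(t) \coloneqq M + \int_a^t f(s) g(\varphi(s))\D s,
$$
so that $\varphi(t) \leq \Psi(t)$ on $[a,b]$ by hypothesis, $\Psi$ is absolutely continuous with $\Psi(a) = M > 0$, and $G(u) = \int_1^u g(s)^{-1}\D s$ is of class $\Cc^1$, strictly increasing on $\RR^+$, with derivative $G'(u) = 1/g(u)$.

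First I would differentiate $G \circ \Psi$ via the chain rule at almost every $t \in [a,b]$, obtaining
$$
\frac{\D}{\D t} G(\Psi(t)) \;=\; \frac{\Psi'(t)}{g(\Psi(t))} \;=\; f(t)\,\frac{g(\varphi(t))}{g(\Psi(t))}.
$$
Since $g$ is increasing on $\RR^+$ and $\varphi(t) \leq \Psi(t)$, the ratio $g(\varphi(t))/g(\Psi(t))$ lies in $(0,1]$. On the set where $f(t) \geq 0$ this gives directly $\frac{\D}{\D t} G(\Psi(t)) \leq f(t)$, and integrating from $a$ to $t$ yields $G(\Psi(t)) \leq G(M) + \int_a^t f(s)\D s$. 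Using the monotonicity of $G$ together with $\varphi \leq \Psi$ then delivers the target inequality $G(\varphi(t)) \leq G(\Psi(t)) \leq G(M) + \int_a^t f(s)\D s$.

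The main obstacle is the sign of $f$: on intervals where $f(t) < 0$, multiplying the bound $g(\varphi(t))/g(\Psi(t)) \leq 1$ by $f(t)$ reverses its direction, so the differential inequality $\frac{\D}{\D t} G(\Psi(t)) \leq f(t)$ need not hold pointwise. Two natural ways out are: either to impose the (standard Bihari) hypothesis $f \geq 0$, which is the setting in which the lemma is invoked in the sequel, so no generality is lost; or to split $[a,b]$ into the subsets $\{f \geq 0\}$ and $\{f < 0\}$, control each contribution separately, and recombine. A subsidiary technical point is to verify that $\Psi(t) > 0$ on all of $[a,b]$ so that $G(\Psi(t))$ is finite and the chain-rule step is justified; under $f \geq 0$ this is immediate since $\Psi$ is then nondecreasing with $\Psi(a) = M > 0$, while in the general case one first restricts to the (possibly empty) maximal interval $[a,\tau)$ on which $\Psi$ remains strictly positive and extends by continuity.
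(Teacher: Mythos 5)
Your argument is, up to presentation, the paper's own: the paper sets $\psi(t) := \int_a^t g(\varphi(s))f(s)\,\mathrm{d}s$ and works with the majorant $M + \psi(t)$, which is exactly your $\Psi(t)$; it then bounds $\dot\psi(t) = g(\varphi(t))f(t) \leq g(M+\psi(t))f(t)$ using monotonicity of $g$ and the hypothesis $\varphi \leq M + \psi$, divides, and integrates via the substitution $u = M+\psi(s)$ to reach $\int_M^{M+\psi(t)}\frac{\mathrm{d}u}{g(u)} \leq \int_a^t f(s)\,\mathrm{d}s$, from which the conclusion follows by monotonicity of $G$. Your version phrases the same computation as a chain-rule differentiation of $G\circ\Psi$ rather than an explicit $u$-substitution — cosmetically different, substantively identical.

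You are also right, and more careful than the paper, on the sign issue. The paper's inequality $g(\varphi(t))f(t) \leq g(M+\psi(t))f(t)$ is obtained by multiplying $g(\varphi(t)) \leq g(M+\psi(t))$ by $f(t)$, which only preserves the direction when $f(t)\geq 0$; the paper never says so and the lemma, as literally stated (with $f:[a,b]\to\RR$ and only ``continuous'' required), does not impose it. Without $f\geq 0$ the statement is in fact false: taking $g$ constant reduces the lemma to the classical linear Gronwall inequality, for which nonnegativity of the integrand weight is a genuine hypothesis and admits standard counterexamples when dropped. So the lemma should be read with the implicit hypothesis $f\geq 0$, which is harmless since in its only use (the proof of the small-noise MDP, where it is applied with $f\equiv\alpha>0$) that hypothesis holds. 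Your subsidiary remark that ``splitting $[a,b]$ into $\{f\geq 0\}$ and $\{f<0\}$ and recombining'' would salvage the general case is, however, over-optimistic — it cannot, precisely because the general statement is false — but since your main route is to assume $f\geq 0$ this does not affect the correctness of your proof. One small clean-up you half-address and should make explicit: under $f\geq 0$, $\Psi$ is nondecreasing with $\Psi(a)=M>0$, so $\Psi(t)\geq M>0$ throughout, $G(\Psi(t))$ is finite, and the chain rule applies without caveat; this parallels what the paper silently relies on when substituting $u=M+\psi(s)$.
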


\begin{proof}
Let $\psi(t) := \int_a^{t} g(\varphi(s))f(s)\D s$, so that
$\dot{\psi}(t) = g(\varphi(t))f(t) \leq g(M+\psi(t)) f(t)$ on $[a,b]$,
since~$g$ is increasing. By integration,
$$
\int_a^tf(s)\D s \geq \int_a^t \frac{\dot{\psi}(s)}{g(M+\psi(s))}\D s 
= \int_{M+\psi(a)}^{M+\psi(t)}\frac{\D u}{g(u)}
=  \int_{M}^{\psi(t)+M}\frac{\D u}{g(u)},
$$
with the substitution $u:=M+\psi(s)$, since $\psi(a)=0$ by definition.
Therefore, since $G$ is increasing, then, for any $t\in[a,b]$,
$$
G(\varphi(t)) \leq G(\psi(t) + M) 
\leq \int_1^{M +\psi(z)}\frac{\D s}{g(s)}
\leq \left(\int_{1}^{M}+ \int_{M}^{M+\psi(t)}\right)\frac{\D s}{g(s)}
\leq G(M) + \int_a^t f(s)\D s.
$$
\end{proof}
Let $Y^\eps=\int_{0}^{\cdot} g(V^\eps_t)\D W_t^\eps$.
First, note that by Lemma~\ref{lem:degenerate} and Lemma~\ref{lem:degenerate-limit} the process~$V^\eps$ is exponentially equivalent to $\psi$ (with speed $\eps$),
so that $\{V^\eps,W^\eps\}$ is exponentially equivalent to $\{\psi,W^\eps\}$. 
Therefore, by a small extension to Theorem~\ref{thm:robertsonLDP} shown in~\cite[Lemma 3.1]{Robertson2010}, the triple $\{V^\eps,Y^\eps,W^\eps\}$ satisfies an LDP with the good rate function
$$
\II^{V,Y,W}(v,y,w) = 
\frac{1}{2}\int_0^T\dot{w}_t^2\D t,\quad \text{if }\ w \in \HH_T^0,\  \dot{Y} = g(\psi)\dot{w}, \ v = \psi,
$$
and is infinite otherwise. Now let 
\begin{equation}\label{eq:etatilde}
\widetilde{\eta}_\eps \coloneqq \eta^{\eps} - \int_{0}^{\cdot} f'(\psi_t)\eta^\eps(t) \D t = \int_{0}^{\cdot} \left\{f(V^\eps_t) - f(\psi_t) - f'(\psi_t)(V^\eps_t + \psi_t)\right\}\eps^{-\beta}\D t + Y^\eps.
\end{equation}
and suppose the first term is exponentially equivalent to zero. 
Then, by Contraction principle~\cite[Theorem 4.2.1]{Dembo2010} 
we deduce an LDP for $\{V^\eps, \widetilde{\eta}_\eps, W^\eps\}$. Moreover, since $\widetilde{\eta}^\eps \mapsto \eta^\eps$ is a continuous function\footnote{It is easy to show that $\eta^{\eps} =\E^{\int_{0}^{\cdot} f(s)\D s}\int_{0}^{\cdot} \E^{-\int_0^t f(s)\D s}\widetilde{\eta}_\eps(t) \D t $ solves the ODE in~\eqref{eq:etatilde}.}, the good rate function given in the statement is obtained using the Contraction principle once more. 
Therefore the rest of the proof relies on proving that 
\[
\left\{f(V^\eps) - \left(f(\psi) - f'(\psi)(V^\eps -\psi)\right)\right\}\eps^{-\beta}
\]
is exponentially equivalent to $0$. %What remains is then to prove that $\{f(V^\eps) - f(\psi) - \dot{f}(\psi)(V^\eps -\psi)\}\eps^{-\beta} = \eps^\beta \eta^\eps^2 \int_0^1 (1-u)\ddot{f}(\psi + u (V^\eps - \psi))\D u $ is exponentially equivalent to $0$.
We start by showing $\eta^\eps$ is bounded. To that end we consider a Taylor expansion of $\ell(u) \coloneqq f(\psi+u(V^\eps - \psi))$ for $u\in\R$ around zero evaluated at $u=1$:
\[
\ell(1)=\ell(0)+\ell'(0) + \int_0^1 \ell''(u)(1-u)\D u,
\]
so that, since $\ell(0)=f(\psi)$ and $\ell(1)=f(V^\eps)$, we have 
\[
\left\{f(V^\eps) - f(\psi) - f'(\psi)(V^\eps -\psi)\right\}\eps^{-\beta} = \eps^\beta |\eta^\eps|^2 \int_0^1 (1-u)f''(\psi + u (V^\eps - \psi))\D u
\]
Now let $R>0$ and $\omega \in \Omega$ such that 
\[
\sup_{t \in [0,T]}|V_t^\eps(\omega) - \psi_t|\leq R
\qquad\text{and}\qquad 
\sup_{t \in [0,T]}|Y^\eps(\omega)| \leq \eps^{-\frac{\beta}{4}}.
\]
Then for all $t\in[0,T]$ we have
\begin{align*}
\left|\eta_{t}^\eps(\omega)\right| &= \left\lvert \int_0^t\left\{f(V^\eps_s(\omega)) - \left(f(\psi_s) - f'(\psi_s)(V_s^\eps(\omega) -\psi_s)\right)\right\}\eps^{-\beta}\D s + Y^{\eps}_t(\omega)\right\rvert \\
&= \left\lvert\int_0^t f'(\psi_s)\eta_{s}^\eps\D s + \int_0^t|\eta_{s}^\eps|^2\eps^{\beta} \int_0^1(1-u)f''(\psi_u + u(V^\eps_u(\omega) - \psi_u))\D u\,\D s + Y_t^\eps(\omega)\right\rvert \\
&\leq \int_0^t\big|f'(\psi_s)\big| |\eta_{s}^\eps|\D s + \int_0^t|\eta_{s}^\eps|^2\eps^{\beta} \int_0^1(1-u)\left|f''\left(\psi_u + u(V^\eps_u - \psi_u)\right)\right|\D u\,\D s + \left|Y_t^\eps(\omega)\right|\\
&\leq \alpha \int_0^t\left(|\eta_{s}^\eps| + |\eta_{s}^\eps|^2\eps^{\beta}\right) \D s + \eps^{-\frac{\beta}{4}},
\end{align*}
where 
\[
\alpha \coloneqq 1+\max\Big\{\sup_{x \in \mathcal{I}_\psi}\big|f'(x)\big| : \sup_{x\in \mathcal{I}\psi}\big|f''(x)\big| \Big\}>0 \quad \text{ and } \quad \mathcal{I}_\psi \coloneqq  \Big[-\sup_{u \in [0,1]}\left\{|\psi_u|+ R\right\},\sup_{u \in [0,1]}\left\{|\psi_u| + R\right\}\Big].
\]
Notice that $\alpha$ is finite, because continuous functions admit a maximiser on compact sets. Then by Lemma~\ref{lem:Gronwall-bis} $G(|\eta_{t}^\eps|) \leq G(\eps^{-\frac{\beta}{4}}) + \alpha t$ for all $t \in [0,T]$ and $u>0$ with 
$$
G(u) = \int_1^u\frac{1}{x + \eps^\beta x^2}\D x = \log\left(u\frac{1+\eps^\beta}{1 + \eps^\beta u}\right),
$$
we have
$$
G(|\eta_{t}^\eps|) 
\leq \log\left(\eps^{\frac{-\beta}{4}}\frac{1+\eps^\beta}{1 + \eps^{\frac{3}{4}\beta}}\right) + \alpha T,
\qquad\text{and}\qquad
|\eta_{t}^\eps| \leq \eps^{-\beta}\left\{\frac{1}{1 - C(\eps)\frac{\eps^\beta}{1 + \eps^\beta}} - 1 \right\},
$$
where 
$$
C(\eps) = \E^{\alpha T}\eps^{\frac{-\beta}{4}}\frac{(1+\eps^\beta)}{1 + \eps^{\frac{3}{4}\beta}}
 <\E^{\alpha T}\eps^{\frac{-\beta}{4}}, 
\quad \text{ for all } \quad 0<\eps<1.
$$
Therefore there exists $0<\eps_0<1$ such that
\[
\sup_{t \in [0,T]}|\eta_{t}^\eps(\omega)|\leq 2\E^{\alpha T}\eps^{-\frac{\beta}{4}},
\quad \text{ for all } \quad 0 < \eps <\eps_0.
\]
We now finally prove that $\eps^\beta |\eta^\eps|^2 \int_0^1 (1-u)f''(\psi + u (V^\eps - \psi))\D u$ is exponentially equivalent to zero.
Let $\delta >0$ and $0<\eps<\eps_0$, and define
\begin{align*}
&\Gamma_\eps := \Big\{\omega\in\Omega : \sup_{t \in [0,T]}\eps^\beta |\eta_{t}^\eps|^2 \Big|\int_0^1 (1-u)f''(\psi_t + u\ (V^{\eps}_{t} - \psi_t))\D u\Big|\geq \delta\Big\} \\
&A_\eps := \Big\{\omega\in\Omega : \sup_{t \in [0,T]}|V^\eps_t - \psi_t|\leq  R \Big\} \\
&B_\eps := \Big\{\omega\in\Omega : \sup_{t\in[0,T]}|Y_t^\eps|\leq \eps^{-\frac{\beta}{4}}\Big\},
\end{align*}
then
\begin{align*}
\eps\log\PP\Big[\Gamma_\eps\Big]
&\leq \eps\log \Big\{ \PP\Big[\Gamma_\eps \cap A_\eps \cap B_\eps\Big] + \PP\Big[A_\eps^c \Big] + \PP\Big[B_\eps^c   \Big]\Big\} \\
&\leq \eps\log3 + \max \Big \{ \eps\log \PP\Big[\Gamma_\eps \cap A_\eps \cap B_\eps\Big], \  \eps\log \PP\Big[A_\eps^c \Big], \  \eps\log \PP\Big[B_\eps^c \Big] \Big\}.
\end{align*}
Using the previous bound on $\eta^\eps$, which holds under condition $\{A_\eps\cap B_\eps\}$, we have that
\begin{align*}
\eps\log \PP\Big[\Gamma_\eps \cap A_\eps \cap B_\eps\Big]
&\leq \eps\log \PP\Big[\Big\{\omega\in\Omega : \sup_{t \in [0,T]}\eps^\beta |\eta_{t}^\eps|^2\geq \frac{1}{\alpha}\delta\Big\} \cap A_\eps \cap B_\eps\Big] \\
&\leq \eps\log \PP\Big[\eps^{\frac{\beta}{2}}\geq \frac{1}{4\alpha}\E^{-2\alpha T}\delta \Big]
\xrightarrow{\eps \downarrow 0}-\infty.
\end{align*}
Next, since $V^\eps$ is exponentially equivalent to $\psi$ with speed $\eps$, then $\lim_{\eps\downarrow 0}\eps\log \PP[A_\eps^c] = -\infty$. Finally, looking at the LDP with speed $\eps^{1 + \frac{\beta}{2}}$, $V^\eps$ is still  exponentially equivalent to $\psi$ by the same argument as before, thus similarly $\eps^{\frac{1}{2} + \frac{\beta}{4}}\int_{0}^{\cdot} g(V^\eps_t)\D W_t = \eps^{\frac{\beta}{4}}Y^\eps$ satisfies an LDP with good rate function having a unique minimum at zero and with speed $\eps^{1 + \frac{\beta}{2}}$. Therefore, $\eps^{\frac{\beta}{4}}Y^\eps$ is exponentially equivalent to zero and
$\lim\limits_{\eps\downarrow 0}\eps \log \PP[B_\eps^c] = -\infty$, which completes the proof.

%%%%%%%%%%%%%%%%%%%%%%%%%%%%%%%%%%%%%%%%%%%%%%%%%
%%%%%%%%%%%%%%%%%%%%%%%%%%%%%%%%%%%%%%%%%%%%%%%%%
%%%%%%%%%%%%%%%%%%%%%%%%%%%%%%%%%%%%%%%%%%%%%%%%%
\section{Variations around Varadhan's Lemma}\label{thm:varadhan}
%%%%%%%%%%%%%%%%%%%%%%%%%%%%%%%%%%%%%%%%%%%
\small{
Varadhan's integral lemma is a generalisation of Laplace's method. It gives the asymptotic behavior of $\EE[\E^{\frac{\varphi(Z_{\eps})}{\eps}}]$ on a log scale for a family of random variables $Z_{\eps}$ and a continuous function $\varphi$. The Laplace's method states that under some conditions the following relation holds
$$\lim_{n\rightarrow\infty}\frac{1}{n}\log \int_a^b \E^{n f(x)} \D x = \sup\limits_{x\in[a,b]}f(x).$$
One notable application of Varadhan's integral lemma is finding a good change of measure in importance sampling, as we will see it in the following sections. For theses applications, we need a slightly more general formulation of the lemma than the one found in~\cite{Dembo2010}. Nevertheless, the proof in~\cite{Dembo2010} can be easily adapted as done in~\cite{Robertson2010}.
\begin{theorem} (Varadhan's Integral Lemma)\label{thm:Varadhan}
Let $\Xx$ be a metric space, $Z^{\eps}$ a family of $\Xx$-valued random variables that satisfies a LDP with good rate function $I:\Xx\rightarrow [0,+\infty]$ and let $\varphi : \Xx \rightarrow \R$ be a continuous function. Assume further either the tail condition
$$
\lim\limits_{M\downarrow \infty} \limsup_{\eps \downarrow 0}
\eps\log \EE\left[\exp\left\{\frac{\varphi(Z^{\eps})}{\eps}\right\}\ind_{ \{ \varphi(Z^{\eps}) \geq M\}}\right]  = - \infty
$$
or the following moment condition for some $\gamma > 1$ (because it implies the previous tail condition)
$$
\limsup\limits_{\eps \downarrow 0}
\eps\log \EE\left[\exp\left\{\gamma \frac{\varphi(Z^{\eps})}{\eps}\right\}\right]
< \infty .
$$
Then
$$
\lim_{\eps \downarrow 0}
\eps\log \EE\left[\exp\left\{\frac{\varphi(Z^{\eps})}{\eps}\right\}\right]
= \sup\limits_{x \in X} \{\varphi(x) - I(x) \}.
$$
\end{theorem}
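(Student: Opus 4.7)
The statement is classical Varadhan, with the tail hypothesis loosened slightly compared with \cite{Dembo2010}; the strategy is the standard split into a lower and an upper bound, patched via the tail/moment condition as in \cite{Robertson2010}.

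\textbf{Step 1 (moment $\Rightarrow$ tail).} First I would show that the moment condition implies the tail condition, so that we only need to prove the theorem under the tail hypothesis. Given $\gamma>1$, write $1/\gamma' = 1 - 1/\gamma$ and apply Hölder's inequality:
\[
\EE\Bigl[\exp\Bigl\{\tfrac{\varphi(Z^\eps)}{\eps}\Bigr\}\ind_{\{\varphi(Z^\eps)\geq M\}}\Bigr]
\leq \EE\Bigl[\exp\Bigl\{\tfrac{\gamma\varphi(Z^\eps)}{\eps}\Bigr\}\Bigr]^{1/\gamma}\,
\PP\bigl[\varphi(Z^\eps)\geq M\bigr]^{1/\gamma'}.
\]
A Markov bound gives $\PP[\varphi(Z^\eps)\geq M]\leq \E^{-\gamma M/\eps}\,\EE[\exp\{\gamma\varphi(Z^\eps)/\eps\}]$, so taking $\eps\log$ and using the moment hypothesis yields an upper bound of the form $C - M/\gamma'$, which tends to $-\infty$ as $M\uparrow\infty$.

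\textbf{Step 2 (lower bound).} Fix $x\in\Xx$ with $I(x)<\infty$ (otherwise the inequality is trivial) and $\delta>0$. By continuity of $\varphi$, there is an open neighborhood $U\ni x$ on which $\varphi>\varphi(x)-\delta$. Then
\[
\eps\log\EE\Bigl[\exp\Bigl\{\tfrac{\varphi(Z^\eps)}{\eps}\Bigr\}\Bigr]
\geq \varphi(x)-\delta + \eps\log\PP[Z^\eps\in U],
\]
and the LDP lower bound gives $\liminf_{\eps\downarrow0}\eps\log\PP[Z^\eps\in U]\geq -I(x)$. Letting $\delta\downarrow 0$ and then taking the supremum over $x$ yields the desired lower bound $\sup_{x\in\Xx}\{\varphi(x)-I(x)\}$.

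\textbf{Step 3 (upper bound).} This is the main obstacle, because $\Xx$ need not be compact and $\varphi$ need not be bounded. I would split on the level of $\varphi$: for $M>0$ large, write
\[
\EE\Bigl[\exp\Bigl\{\tfrac{\varphi(Z^\eps)}{\eps}\Bigr\}\Bigr]
= \EE\Bigl[\exp\Bigl\{\tfrac{\varphi(Z^\eps)}{\eps}\Bigr\}\ind_{\{\varphi(Z^\eps)<M\}}\Bigr]
+ \EE\Bigl[\exp\Bigl\{\tfrac{\varphi(Z^\eps)}{\eps}\Bigr\}\ind_{\{\varphi(Z^\eps)\geq M\}}\Bigr].
\]
The tail term is $\leq -\infty$ on the log-scale after letting $M\uparrow\infty$. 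For the main term, I use that $I$ is a good rate function: the sublevel set $\Psi_L:=\{I\leq L\}$ is compact. Split further according to whether $Z^\eps\in\Psi_L$ or not; on $\Psi_L^c$ the LDP upper bound gives a contribution at most $\E^{(M-L)/\eps}$, which is negligible when $L>M+\sup_\Xx\{\varphi-I\}$. On the compact set $\Psi_L$, use continuity of $\varphi$: for each $x\in\Psi_L$ pick an open neighborhood $U_x$ on which $\varphi<\varphi(x)+\delta$ and, shrinking if necessary, such that the LDP upper bound gives $\limsup\eps\log\PP[Z^\eps\in\overline{U}_x]\leq -I(x)+\delta$. Extract a finite subcover $U_{x_1},\dots,U_{x_N}$ of $\Psi_L$, and bound the main term by $\sum_{i=1}^N \exp\{(\varphi(x_i)+\delta)/\eps\}\PP[Z^\eps\in\overline{U}_{x_i}]$. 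Taking $\eps\log$, using that the log of a finite sum is dominated by the max on this scale, and then letting $\delta\downarrow 0$, $L,M\uparrow\infty$ in that order yields $\limsup_{\eps\downarrow0}\eps\log\EE[\exp\{\varphi(Z^\eps)/\eps\}]\leq \sup_{x\in\Xx}\{\varphi(x)-I(x)\}$.

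The delicate point throughout is the order of the limits in $M$, $L$, $\delta$ and $\eps$, and the simultaneous use of the tail hypothesis to justify neglecting the truncation at level~$M$; once those quantifiers are set up as above, the two bounds match and the theorem follows.
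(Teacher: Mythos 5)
The paper does not actually supply a proof of this statement: it simply notes that this formulation is a minor variant of Dembo--Zeitouni and that ``the proof in \cite{Dembo2010} can be easily adapted as done in \cite{Robertson2010}.'' Your proposal fills in exactly that gap by reconstructing the classical Dembo--Zeitouni argument (their Theorem 4.3.1), and the structure --- moment implies tail via H\"older/Markov, lower bound via the LDP lower bound on shrinking neighbourhoods, upper bound via truncation plus compact sublevel sets plus a finite subcover --- is the right one and is essentially complete.

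Two small imprecisions worth tightening. In Step~1, with $1/\gamma + 1/\gamma' = 1$, the $\eps\log$ of your H\"older--Markov chain yields an upper bound of order $C - \gamma M/\gamma' = C - (\gamma-1)M$ rather than $C - M/\gamma'$; since $\gamma > 1$ this still diverges to $-\infty$ as $M\uparrow\infty$, so the conclusion survives, but the exponent should be written correctly. In Step~3, you invoke ``the LDP upper bound'' directly on $\Psi_L^c = \{I > L\}$, which is an \emph{open} set, so the upper bound of the LDP does not apply to it as stated. The standard repair is the one implicit in your subcover: once you have chosen open sets $U_{x_1},\dots,U_{x_N}$ covering the compact set $\Psi_L$, the set $F := \bigl(\bigcup_{i=1}^N U_{x_i}\bigr)^{c}$ is closed and contained in $\Psi_L^c$, hence $\inf_F I \geq L$; applying the LDP upper bound to $F$ (not to $\Psi_L^c$) gives $\limsup_{\eps\downarrow 0}\eps\log\PP[Z^\eps\in F]\leq -L$, and the rest of your estimate goes through unchanged. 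With these two adjustments the argument is correct and matches the source the paper points to.
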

The below modified theorem was proven in~\cite{Robertson2010} and allows the above function $\psi$ to reach $- \infty$ and accounts for cases where the problem is written in term of a family of functions $\{\psi_{\eps}\}_{\eps>0}$.
\begin{theorem}(Modified Varadhan's Integral Lemma)\label{thm:varadhan_modified}
Let $\Xx$ and $\Yy$ be two metric spaces, $Z^{\eps}$ a family of $\Xx$-valued random variables that satisfies a LDP with good rate function $I:\Xx\rightarrow [0,+\infty]$ and let $\varphi : \Yy \rightarrow [-\infty,+\infty)$ and $\varphi : \Xx \rightarrow \R$  be two continuous functions. Let $\Lambda : \Xx \rightarrow \Yy$ be a continuous map and $\{\Lambda_{\eps} : \Xx \rightarrow \Yy\}$ be a family of measurable functions such that $\Lambda_{\eps}(Z^{\eps})$ is exponentially equivalent to $\Lambda(Z^{\eps})$. Suppose that there exists $\gamma > 1$ such that
$$
\limsup\limits_{\eps \downarrow 0}
\eps\log \EE\left[\exp\left\{\gamma \frac{\varphi(\Lambda_{\eps}(Z^{\eps})) + \varphi(Z^{\eps})}{\eps}\right\}\right]< \infty.
$$
Then
$$
\lim_{\eps \downarrow 0}
\eps\log \EE\left[\exp\left\{\frac{\varphi(\Lambda_{\eps}(Z^{\eps})) + \varphi(Z^{\eps})}{\eps}\right\}\right]
= \sup\limits_{x \in X} \{\varphi(\Lambda(x)) + \psi(x) - I(x) \}.
$$
\end{theorem}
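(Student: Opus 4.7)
The plan is to reduce the statement to the standard Varadhan's integral lemma (Theorem~\ref{thm:Varadhan}) by combining the contraction principle, which lifts the LDP for $\{Z^\eps\}$ to a joint LDP for $\{(Z^\eps,\Lambda(Z^\eps))\}$, with an exponential-equivalence argument that swaps $\Lambda_\eps$ for the continuous map $\Lambda$ at the level of the exponential integral. I shall tacitly correct the typo in the statement by writing $\psi:\Xx\to\RR$ for the $\Xx$-valued continuous function (so the target functional reads $\varphi(\Lambda(x))+\psi(x)$, consistent with the displayed conclusion).

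First I would apply the contraction principle to the continuous map $x\mapsto (x,\Lambda(x))$ from $\Xx$ to $\Xx\times\Yy$. Since $\{Z^\eps\}\sim\LDP(I,\Xx)$, this yields a joint LDP for $\{(Z^\eps,\Lambda(Z^\eps))\}$ on $\Xx\times\Yy$ with good rate function $\widetilde I(x,y)=I(x)$ if $y=\Lambda(x)$ and $+\infty$ otherwise. The functional $\Phi(x,y):=\psi(x)+\varphi(y)$ is continuous with values in $[-\infty,+\infty)$. Granted the moment condition on $\Phi(Z^\eps,\Lambda(Z^\eps))$, the standard Varadhan's lemma delivers
\[
\lim_{\eps\downarrow 0}\eps\log\EE\bigl[\exp\{(\psi(Z^\eps)+\varphi(\Lambda(Z^\eps)))/\eps\}\bigr]
=\sup_{x\in\Xx}\bigl\{\psi(x)+\varphi(\Lambda(x))-I(x)\bigr\},
\]
which matches the right-hand side of the claim.

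It then remains to check that replacing $\Lambda$ by $\Lambda_\eps$ does not alter this limit, and that the moment condition transfers correspondingly. For fixed $\delta>0$ I split the expectation along $A_\eps^\delta:=\{d_\Yy(\Lambda_\eps(Z^\eps),\Lambda(Z^\eps))\leq\delta\}$ and its complement. On $A_\eps^{\delta,c}$ I apply H\"older's inequality with exponents $\gamma$ and $\gamma/(\gamma-1)$: the first factor is controlled by the hypothesised moment bound, while the second, $\PP[A_\eps^{\delta,c}]^{(\gamma-1)/\gamma}$, is $\eps$-exponentially negligible by exponential equivalence of $\Lambda_\eps(Z^\eps)$ and $\Lambda(Z^\eps)$. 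On $A_\eps^\delta$, goodness of $I$ confines $Z^\eps$ to a compact $K\subset\Xx$ outside an $\eps$-exponentially rare event, so that $\Lambda(Z^\eps)$ lies in the compact $\Lambda(K)\subset\Yy$; restricted to such a compact set, $\varphi$ becomes uniformly continuous once truncated as $\varphi_N:=\varphi\vee(-N)$, yielding $|\varphi_N(\Lambda_\eps(Z^\eps))-\varphi_N(\Lambda(Z^\eps))|\leq \omega_N(\delta)$ with $\omega_N(\delta)\to 0$ as $\delta\downarrow 0$.

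The main technical obstacle will be handling the possibly $-\infty$ values of $\varphi$, which block a direct uniform-continuity estimate and also require care when transferring the moment bound. I plan to overcome it by running the argument above with $\varphi_N$ in place of $\varphi$ and then letting $N\to\infty$: on the left-hand side, monotone convergence together with the uniform-in-$\eps$ moment bound supplies the required exchange of limit and expectation; on the right-hand side, the variational supremum is insensitive to the truncation level in the limit because points $x$ with $\varphi(\Lambda(x))=-\infty$ contribute $-\infty$ to the objective and are therefore irrelevant to the supremum. Sending $\delta\downarrow 0$ and then $N\to\infty$ finally identifies the two Varadhan limits and closes the proof.
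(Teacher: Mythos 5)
The paper does not supply its own proof of this statement: it is quoted verbatim from \cite{Robertson2010}, which is cited as the source. So your argument has to be judged on its own terms.

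Your first two steps are sound: the contraction principle applied to $x\mapsto(x,\Lambda(x))$ does give the joint LDP on $\Xx\times\Yy$, and the upper/lower-bound form of Varadhan's lemma (which only asks for upper, resp.\ lower, semicontinuity of the integrand, so it tolerates values in $[-\infty,+\infty)$) applies to $\Phi(x,y)=\psi(x)+\varphi(y)$. The trouble is in the transfer step. First, on $A_\eps^\delta$ you confine $Z^\eps$ to a compact $K\subset\Xx$ and invoke uniform continuity of $\varphi_N$ on $\Lambda(K)$; but $\Lambda_\eps(Z^\eps)$ is only guaranteed to lie in the $\delta$-neighbourhood of $\Lambda(K)$, which in a general metric space (such as the path spaces used throughout this paper) is \emph{not} precompact. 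A continuous function need not have any modulus of continuity between a compact set and its $\delta$-fattening, so the estimate $|\varphi_N(\Lambda_\eps(Z^\eps))-\varphi_N(\Lambda(Z^\eps))|\leq\omega_N(\delta)$ does not follow. Second, the hypothesised moment bound concerns $\varphi(\Lambda_\eps(Z^\eps))+\psi(Z^\eps)$, not $\varphi(\Lambda(Z^\eps))+\psi(Z^\eps)$; you want to invoke Varadhan for the latter, so the condition must be transferred, and that transfer is itself non-trivial because it requires essentially the same kind of control you are trying to establish.

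Both gaps disappear if one pushes the exponential equivalence into the LDP rather than attacking the Laplace integral directly. Since the first coordinates coincide, $\{(Z^\eps,\Lambda_\eps(Z^\eps))\}$ and $\{(Z^\eps,\Lambda(Z^\eps))\}$ are exponentially equivalent in $\Xx\times\Yy$, hence the former inherits the contraction LDP with good rate function $\widetilde I(x,y)=I(x)\ind_{\{y=\Lambda(x)\}}+\infty\ind_{\{y\neq\Lambda(x)\}}$. One can then apply the semicontinuous Varadhan bounds \emph{directly} to $\{(Z^\eps,\Lambda_\eps(Z^\eps))\}$ and $\Phi(x,y)=\psi(x)+\varphi(y)$: the moment condition is verbatim the hypothesis, the lower bound is pointwise (so $\varphi=-\infty$ is harmless), and no uniform-continuity or truncation argument is needed. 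The identity $\sup_{(x,y)}\{\Phi(x,y)-\widetilde I(x,y)\}=\sup_x\{\psi(x)+\varphi(\Lambda(x))-I(x)\}$ then gives the claim. You should restructure your step 3 along these lines rather than attempting a direct comparison of the two Laplace integrals.
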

}

\section{Black-Scholes with deterministic volatility}\label{sec:BS}
\small{
We consider here the simple Black-Scholes model with 
$$
\D X_t = -\frac{1}{2}\sigma^2(t) \D t + \sigma(t) \D W_t, \quad X_0 = 0,
$$
where $\sigma\in \Cc_T$ is a deterministic function adhering to Assumption~\ref{ass:SDE_coeffs}. 
The goal of this section is to provide full details of this specific case, 
which we often refer to in the text.
We consider options with payoffs that are continuous functions of~$X$. 
Let $G:\Cc([0,T]\to \RR)$ be a continuous function and $F := \log|G|$, 
which is then a continuous function with values in $[-\infty,+\infty)$. % We look to compute the price of an option with payoff $G(X)$.
Our objective is to find an almost surely positive random variable~$Z$ with $\EE[Z] = 1$ minimising $\EE[G^2(X)Z^{-1}] = \EE\left[\E^{2F(X)}Z^{-1}\right]$.

%%%%%%%%%%%%%%%%%%%%%%%%%%%%%%%%%%%%%%%%%%%
\subsection{Small-noise}
The small-noise approximation~\eqref{eq:small_noise} reads
$$
\D X_t^{\eps} = -\frac{1}{2}\sigma^2(t)\D t + \sqrt{\eps}\sigma(t) \D W_t, \qquad X_0 ^{\eps} = 0.
$$

\paragraph{\textit{Deterministic change of drift}}
With the change of measure
$$
\left. \frac{\D\QQ}{\D\PP}\right|_{\Ff_T}
 := \exp\left\{-\frac{1}{2}\int_0^T\dot{h}_t^2 \D t  + \int_0^T\dot{h}_t\D W_t \right\},
$$
with $h\in\HH_T^0$, we can write
$$
L(h):=\limsup\limits_{\eps\downarrow 0} \eps \log\EE\left[\exp\frac{1}{\eps}\left\{{2F(X^{\eps}) +\frac{1}{2}\int_0^T\dot{h}_t^2 \D t  - \dot{h}\cT W^{\eps}}\right\}\right].
$$
To apply Varadhan's lemma and estimate $L(h)$, we assume there exists $\gamma >1$ such that 
\begin{equation}\label{eq:BSAssume}
\limsup\limits_{\eps\downarrow 0} \eps\log\EE\left[
\exp\left\{4\gamma \frac{F(X^{\eps})}{\eps}\right\}\right] <\infty.
\end{equation}
This condition is satisfied if, for example, the following assumption holds:
\begin{assumption}\label{ass:payoffGuasoni}
$F:\Cc_T\rightarrow\RR$ with $F(x)\leq K_{1} + K_{2} \sup\limits_{t\in[0,T]}|x_t|^{\alpha}$
for some $\alpha\in (0,2)$, $K_{1}, K_{2}>0$.
\end{assumption}
\begin{lemma}\label{lem:BSVaradhanCond}
If $\dot{h}$ is of finite variation, the function
\[
\Lf(\cdot;h):
x\in \Cc_T \mapsto 2F\left(-\frac{1}{2}\int_{0}^{\cdot} \sigma^2(t)\D t + \int_{0}^{\cdot} \sigma(t) \D x_t\right)
 + \half\left(\int_0^T \dot{h}_t^2\D t-\int_0^T \dot{x}_t^2 \D t\right)
 - \dot{h}\cT x
\]
is well defined and continuous and for every $\gamma >0$, 
$$
\limsup\limits_{\eps\downarrow 0} \eps \log\EE\left[\exp\frac{\gamma}{\eps}\left\{\frac{1}{2}\int_0^T \dot{h}_t^2\D t
- \dot{h}\cT W^{\eps}\right\}\right] \text{ exists in }\RR.
$$
\end{lemma}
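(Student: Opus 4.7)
The proof decomposes into two essentially independent pieces. For the continuity claim, my plan is to treat each summand of $\Lf(\cdot;h)$ separately. The compositional term $2F(X(x))$, with $X(x):=-\half\int_0^{\cdot}\sigma^2(t)\D t+\int_0^{\cdot}\sigma(t)\D x_t$, is well defined for every $x\in\Cc_T$ via the pathwise integration-by-parts identity
\[
\int_0^t\sigma(s)\D x_s=\sigma(t)\,x_t-\int_0^t x_s\D\sigma(s),
\]
which is meaningful because Assumption~\ref{ass:SDE_coeffs} forces $\sigma$ to be continuous and of bounded variation on $[0,T]$. This identity exhibits $X:\Cc_T\to\Cc_T$ as continuous in the uniform topology, so $F\circ X$ is continuous into $[-\infty,+\infty)$ by assumption on~$F$. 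The very same argument with $\dot{h}$ in place of $\sigma$ shows that $x\mapsto\dot{h}\cT x=\dot{h}(T)x_T-\int_0^T x_t\D\dot{h}_t$ is continuous on $\Cc_T$, since $\dot{h}$ is of finite variation. The constant $\half\|\dot{h}\|_T^2$ is harmless, and $-\half\|\dot{x}\|_T^2$ equals $-\infty$ outside $\HH_T^0$ and is upper semi-continuous on $\HH_T^0$. Summing the four terms yields continuity of $\Lf(\cdot;h):\Cc_T\to[-\infty,+\infty)$ in the extended-value sense required by Theorem~\ref{thm:varadhan_modified}.

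For the existence of the limsup, my plan is a direct Gaussian moment computation. Since $\dot{h}$ has finite variation on $[0,T]$ (hence lies in $L^2([0,T])$), the Wiener integral $\dot{h}\cT W^\eps=\sqrt{\eps}\int_0^T\dot{h}_t\D W_t$ is a centered Gaussian with variance $\eps\|\dot{h}\|_T^2$. Applying the Gaussian moment generating function with parameter $-\gamma/\eps$,
\[
\EE\left[\exp\frac{\gamma}{\eps}\left\{\half\|\dot{h}\|_T^2-\dot{h}\cT W^\eps\right\}\right]
=\exp\left\{\frac{\gamma\|\dot{h}\|_T^2}{2\eps}\right\}\cdot\exp\left\{\frac{\gamma^2\|\dot{h}\|_T^2}{2\eps}\right\}
=\exp\left\{\frac{\gamma(1+\gamma)\|\dot{h}\|_T^2}{2\eps}\right\},
\]
so that $\eps$ times the logarithm equals the constant $\gamma(1+\gamma)\|\dot{h}\|_T^2/2\in\RR$, independently of~$\eps$. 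The limsup (in fact a bona fide limit) therefore exists in $\RR$ and equals that constant.

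The only delicate point in the whole argument is the handling of the $-\half\|\dot{x}\|_T^2$ summand, which is genuinely not continuous on $\Cc_T$ with respect to the uniform topology but only upper semi-continuous; however, the extended-value framework of Theorem~\ref{thm:varadhan_modified} is tailored exactly to accommodate this, and it is this combination of continuous plus upper semi-continuous summands (with values in $[-\infty,+\infty)$) that plays the role of ``continuous'' in the lemma. Everything else---the pathwise integration by parts for continuous integrators against BV integrands, and the Gaussian moment generating function---is entirely standard.
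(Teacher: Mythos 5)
Your proof is correct and takes essentially the same route as the paper: the continuity claim is handled by integration by parts against the finite-variation integrands $\dot{h}$ and $\sigma$ (realizing $\int_0^{\cdot}\dot{h}\,\D x$ and $\int_0^{\cdot}\sigma\,\D x$ as bounded linear functionals of $x$ in the uniform topology), and the second claim is the elementary Gaussian moment computation giving $\eps\log\EE[\cdots]=\tfrac12\gamma(1+\gamma)\|\dot{h}\|_T^2$ identically in $\eps$. The one place you go beyond the paper's proof is in flagging that the summand $-\tfrac12\|\dot{x}\|_T^2$ is not continuous on $\Cc_T$ but only upper semi-continuous (and $-\infty$ off $\HH_T^0$); the paper's proof silently ignores this term. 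This is a fair observation, but note that in the downstream application of Varadhan's Lemma the energy term $\tfrac12\|\dot{x}\|_T^2$ is precisely the rate function $\IIW$ that gets \emph{subtracted} inside the supremum, so the continuity actually required by Theorem~\ref{thm:varadhan_modified} concerns only the remaining terms $2F(\cdots)+\tfrac12\|\dot{h}\|_T^2-\dot{h}\cT x$ — exactly what the integration-by-parts argument delivers — and no semi-continuity property of the energy term needs to be invoked (good rate functions are lower semi-continuous by definition).
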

\begin{proof}
If $\dot{h}$ is of finite variation and $x$ is continuous with $x(0)=0$, 
then $\int_{0}^{\cdot}\dot{h}_t\D x_t$ is well defined as a Riemann-Stieltjes integral and 
$$
\left|\dot{h}\cT x\right| = \left|\dot{h}(T)x(T) - x\cT \dot{h}\right| \leq \|x\|_\infty\left(|\dot{h}(T)| + \mathrm{TV}(\dot{h})\right ),
$$
where $\mathrm{TV}(\cdot)$ denotes the total variation. 
Similarly, $\int_{0}^{\cdot}\sigma(t)\D x_t$ is well defined and for all $t \in [0,T]$,
$$
\left|\sigma(\cdot)\ct x\right| \leq \|x\|_\infty\left(\|\sigma\|_\infty + \mathrm{TV}(\sigma)\right).
$$
Since $F$ is continuous, the first statement, about existence and continuity, holds.
The second one is a direct consequence of the computation of the exponential Gaussian moment: for every $\gamma >0$, 
$$
\eps  \log\EE\left[\exp\frac{\gamma}{\eps}\left\{\frac{1}{2}\int_0^T \dot{h}_t^2\D t
- \dot{h}\cT W^{\eps}\right\}\right]
= \frac{\gamma(1+\gamma)}{2}\int_0^T\dot{h}_t^2\D t.
$$
\end{proof}
\begin{lemma}
If~$\dot{h}$ is of finite variation, then 
$L(h) = \sup_{x \in \HH_T^0}\Lf(x;h)$.
\end{lemma}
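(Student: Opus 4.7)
The plan is to apply Varadhan's integral lemma (Theorem~\ref{thm:Varadhan}) to the family $W^\eps := \sqrt{\eps}\,W$, which by Schilder's theorem satisfies an LDP on $\Cc_T$ with good rate function $I_W(w)=\tfrac{1}{2}\|\dot w\|_T^2$ for $w\in\HH_T^0$ (and $+\infty$ elsewhere). The preparatory step is to recast the random quantity inside the exponential in $L(h)$ as $\Phi(W^\eps)$ for a fixed continuous functional $\Phi:\Cc_T\to[-\infty,+\infty)$. Setting
\[
\Phi(w) := 2F\!\left(-\frac{1}{2}\!\int_0^{\cdot}\!\sigma^2(t)\,\D t+\int_0^{\cdot}\!\sigma(t)\,\D w_t\right)+\frac{1}{2}\!\int_0^T\!\dot h_t^2\,\D t-\dot h\cT w,
\]
both inner integrals are legitimate Riemann--Stieltjes integrals because $\sigma$ and $\dot h$ are of finite variation; continuity of $\Phi$ on $\Cc_T$ is precisely Lemma~\ref{lem:BSVaradhanCond}. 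Since $\sigma$ is deterministic and of finite variation, the Itô integral $\int_0^{\cdot}\sigma(t)\,\D W_t^\eps$ agrees $\PP$--almost surely, via integration by parts, with its Stieltjes analogue $\sigma(\cdot)W^\eps(\cdot)-\int_0^{\cdot}W^\eps\,\D\sigma$, and similarly for $\dot h\cT W^\eps$. Therefore $\Phi(W^\eps)=2F(X^\eps)+\tfrac{1}{2}\|\dot h\|_T^2-\dot h\cT W^\eps$ almost surely.

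The second step is to verify the exponential moment hypothesis of Varadhan's lemma: for some $\gamma>1$,
\[
\limsup_{\eps\downarrow 0}\eps\log\EE\!\left[\exp\!\left\{\gamma\Phi(W^\eps)/\eps\right\}\right]<\infty.
\]
By Cauchy--Schwarz,
\[
\EE\!\left[\E^{\gamma\Phi(W^\eps)/\eps}\right]\leq\EE\!\left[\E^{4\gamma F(X^\eps)/\eps}\right]^{1/2}\EE\!\left[\E^{(2\gamma/\eps)\{\frac{1}{2}\|\dot h\|_T^2-\dot h\cT W^\eps\}}\right]^{1/2}.
\]
Choosing the value of $\gamma>1$ supplied by assumption \eqref{eq:BSAssume}, the first factor is exponentially controlled on the scale $\eps$, while the second factor is a Gaussian exponential moment whose $\eps$-log limit is computed explicitly in Lemma~\ref{lem:BSVaradhanCond} (applied with $\gamma$ replaced by $2\gamma$). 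Both contributions are finite, so the moment condition holds.

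Applying Theorem~\ref{thm:Varadhan} then yields
\[
L(h)=\sup_{w\in\Cc_T}\bigl\{\Phi(w)-I_W(w)\bigr\}=\sup_{w\in\HH_T^0}\Bigl\{\Phi(w)-\tfrac{1}{2}\|\dot w\|_T^2\Bigr\},
\]
and unfolding the definition of $\Phi$ reproduces exactly $\sup_{x\in\HH_T^0}\Lf(x;h)$ as defined in Lemma~\ref{lem:BSVaradhanCond}. The only real subtlety is the pathwise identification of $\Phi(W^\eps)$ with the original random variable, a step that hinges on $\sigma$ being of finite variation; absent this regularity one would instead invoke the modified version of Varadhan's lemma (Theorem~\ref{thm:varadhan_modified}) together with an exponential-equivalence argument, but this is not needed here.
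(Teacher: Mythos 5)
Your proposal is correct and follows essentially the same route as the paper: verify Varadhan's moment condition via Cauchy--Schwarz using assumption~\eqref{eq:BSAssume} (which supplies the $4\gamma$ exponent) together with the Gaussian moment computation of Lemma~\ref{lem:BSVaradhanCond}, then invoke Theorem~\ref{thm:Varadhan} with Schilder's rate function. You simply spell out a step the paper leaves implicit — the a.s.\ pathwise identification of the It\^o integrals $\int_0^{\cdot}\sigma\,\D W^\eps$ and $\dot h\cT W^\eps$ with their Riemann--Stieltjes analogues via integration by parts, valid because $\sigma$ and $\dot h$ are of finite variation — which is exactly what lets one treat the integrand as a continuous functional $\Phi$ of the path $W^\eps$.
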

\begin{proof}
Let $\gamma>1$ such that~\eqref{eq:BSAssume} holds. Then by Cauchy-Schwarz and Lemma~\ref{lem:BSVaradhanCond},
$$
\limsup\limits_{\eps\downarrow 0} \eps \log\EE\left[\exp\frac{\gamma}{\eps}\left\{2F\left(-\frac{1}{2}\int_{0}^{\cdot} \sigma^2(t)\D t + \int_{0}^{\cdot} \sigma(t) \D W_t^\eps\right) + \frac{1}{2}\int_0^T \dot{h}_t^2\D t - \int_0^T \dot{h}_t \D W_t^{\eps} \right\}\right] <\infty,
$$
so that the conditions of Theorem~\ref{thm:Varadhan} are verified.
The continuity has already been shown.
\end{proof}
Since the map $\Gg: W\mapsto X$ is continuous by the continuity of the It\^{o} integral, we can freely introduce $\widetilde F = F\circ\Gg$ and the existence of a minimum to the dual version of 
$\inf\limits_{h \in \HH_T^0} \sup\limits_{x \in \HH_T^0} \Lf(x;h)$,
namely $\sup\limits_{x \in \HH_T^0} \inf\limits_{h \in \HH_T^0}\Lf(x;h)$,
can be proved as in~\cite{Guasoni2007} under Assumption \ref{ass:payoffGuasoni} by choosing $M=0$ in \cite[Lemma~7.1]{Guasoni2007}. 
The minimum is then attained  for $h=x$ and equal to 
\begin{equation}\label{pb-1}
    \sup\limits_{x \in \HH_T^0}2F\left(-\frac{1}{2}\int_{0}^{\cdot} \sigma^2(t)\D t + \int_{0}^{\cdot} \sigma(t) \dot{x}_t \D t\right)
 - \int_0^T \dot{x}_t^2 \D t\,.
\end{equation}
Furthermore, it immediately follows from \cite[Theorem~3.6]{Guasoni2007} that, if $h^*\in \HH_T^0$ is of finite variation and is a solution to~\eqref{pb-1}, then it is asymptotically optimal if
$$
L(h^{*}) = 2F\left(-\frac{1}{2}\int_{0}^{\cdot} \sigma^2(t)\D t + \int_{0}^{\cdot} \sigma(t) \dot{h}^{*}_t \D t\right) - \int_0^T |\dot{h}^{*}_t|^2 \D t.
$$

Therefore, in order to derive a change of measure, we search for~$h^{*} \in \HH_T^0$ such that
$$
h^{*} = \argmax\limits_{x \in \HH_T^0}F\left(-\frac{1}{2}\int_{0}^{\cdot} \sigma^2(t)\D t + \int_{0}^{\cdot} \sigma(t) \dot{x}_t \D t\right)
 - \frac{1}{2}\int_0^T \dot{x}_t^2 \D t.
$$

\paragraph{\textit{Simplified deterministic change of drift}}
We consider a simplified version of the problem. 
Since $\eps\int_{0}^{\cdot}\sigma^2(t)\D t$ is exponentially equivalent to $0$, the results of the previous section remain valid 
when  replacing $F(-\half\int_{0}^{\cdot} \sigma^2(t)\D t + \int_{0}^{\cdot} \sigma(t) \dot{x}_t \D t)$ with $F(\int_{0}^{\cdot} \sigma(t) \dot{x}_t \D t)$. 
The problem then becomes
$$
h^{*} = \argmax\limits_{x \in \HH_T^0} F\left(\int_{0}^{\cdot} \sigma(t) \dot{x}_t \D t\right) - \frac{1}{2}\int_0^T \dot{x}_t^2 \D t.
$$

%%%%%%%%%%%%%%%%%%%%%%%%%%%%%%%%%%%%%%%%%%%
\subsubsection{Options with path-dependent payoff.}
Consider the payoff $G(\alpha\cT X)$ with $G:\RR \to \RR$ differentiable and~$\alpha$ 
a positive continuous function of finite variation. 
The optimisation problem reads
$$
h^{*} = \argmax\limits_{x \in \HH_T^0} F\left(\int_0^T \alpha_t \sigma(t) \dot{x}_t \D t\right) - \frac{1}{2}\int_0^T \dot{x}_t^2 \D t,
$$
with F = $\log|f|$ in the price small-noise approximation setting
and $F = \log\left|G(\cdot - \frac{1}{2}\int_0^T\alpha_t \sigma^2(t) \D t)\right|$ in the log-price small-noise approximation one. 
The mapping $\dot{y}_t = \alpha_t\sigma(t)\dot{x}_t$
yields the Euler-Lagrange equation
$
\frac{\D}{\D t}\frac{\dot{y}_t}{(\alpha_t\sigma(t))^2} = 0$,
or in terms of the original problem
$\frac{\D}{\D t}\frac{\dot{x}_t}{\alpha_t\sigma(t)} = 0$.
We thus search for solution of the form $\dot{x}_t = \beta \alpha_t\sigma(t)$ for some $\beta \in\RR$. 
The optimisation problem then reads
\[
\begin{cases}
\beta^{*} &= \displaystyle \argmax\limits_{\beta \in\RR} F\left(\beta \int_0^T (\alpha_t \sigma(t))^2 \D t\right)  - \frac{1}{2}\beta^2 \int_0^T (\alpha_t \sigma(t))^2 \D t,\\
h^{*} &= \displaystyle \beta^{*} \int_{0}^{\cdot} \alpha_t \sigma(t) \D t.
\end{cases}
\]
For example, if $F(x) = \log(\E^x - \E^c)^+$ for $x \in\RR$ and $c \in\RR$, $\widetilde{\beta}$ will be the unique solution on $(1,\infty)$ of
$v\beta + \log(\beta - 1) - \log(\beta) - c = 0$.
\begin{example}\ 
\begin{itemize}
\item \textup{(European Call option):} $f(x) = (\E^{x} - K)^+$ and $\alpha_t=1$.
\item \textup{(Geometric Asian Call option):} $f(x) = (\E^{x} - K)^+$ and $\alpha_t=1 - t/T$. 
\end{itemize}
\end{example}

%%%%%%%%%%%%%%%%%%%%%%%%%%%%%%%%%%%%%%%%%%%
\subsection{Small-time}
The small-time approximation~\eqref{eq:small_time} reads
$\D X_t^{\eps} = -\frac{1}{2}\eps\sigma^2(\eps t)\D t + \sigma(t\eps) \D W_t^{\eps}$,
with $X_0 ^{\eps} = 0$ and $W^{\eps} = \sqrt{\eps}W$.
The couple $(\sigma(t\eps), W_t^\eps)$ is exponentially equivalent to $(\sigma(0),W_t^\eps)$. 
Since~$\IIW$ is the good rate function of the LDP verified by $(W^\eps)$ by Schilder's Theorem~\cite[Theorem 5.2.3]{Dembo2010}, $(\sigma(t\eps), W_t^\eps)$ verifies a LDP with good rate function $\II(s,w) = \IIW(w) + \infty \ind_{\{s\neq \sigma(0)\}}$. 
By Theorem~\ref{thm:robertsonLDP}, $(\int_{0}^{\cdot}\sigma(t\eps)\D W_t^\eps,W^\eps)$ satisfies a LDP with the same good rate function as $(\int_{0}^{\cdot}\sigma(0)\D W_t^\eps,W^\eps)$. 
Noticing that  $\int_{0}^{\cdot} \eps \sigma(t\eps)^2\D t$ is exponentially equivalent to zero, our method then leads to the same solution as the problem for 
$\D X_t^{\eps} = -\frac{1}{2}\eps\sigma^2(0)\D t + \sigma(0) \D W_t^{\eps}$,
which was treated above. 
In this small-time setting, we lose all information on the path of~$\sigma$, except for its initial value.

%%%%%%%%%%%%%%%%%%%%%%%%%%%%%%%%%%%%%%%%%%%
\subsection{Numerical results}
We provide numerical evidence in the Black-Scholes model with $S_0=50$ in the log-price small-noise approximation.
In order to compare estimators, we look at Asian Arithmetic Call options, 
with payoff $(\frac{1}{T}\int_0^T \exp(X_t) \D t - K)^+$. The form of the solution of the optimisation problem studied previously can be found in~\cite{Guasoni2007}. We compare the naive Monte-Carlo estimator, the antithetic Monte-Carlo estimator, 
the control estimator based on the price of Geometric Asian options 
(with payoff $\{\exp[\frac{1}{T}\int_0^T X_t \D t] - K\}^+$), 
that can be computed explicitly, 
and the LDP-based importance sampling estimator above.
%We compute the price of an option of maturity $T$ and payoff $(\frac{1}{T}\int_0^T \exp(X_t)\D t - K)^+$ for different parameters and compare the variance of different estimators. 
Instead of simulating $\int_0^T \exp(X_t)\D t$, we consider a discretised payoff on $n=252$ dates and draw $10^5$ paths.
For the LDP-based estimator, the law of $W$ after the change of measure is given by Girsanov theorem.
In what follows, when we refer to variance reduction we mean the ratio of variance of the classical Monte-Carlo estimator over the variance of estimator in question.
As we can see in Table~\ref{table:BSCompare} and in Figure~\ref{fig:BSLDP}, 
even in non-rare events, the LDP estimator provides good variance reduction. 
However, it is mainly in the context of rare events that it performs best and outperforms the other estimators (Figure~\ref{fig:BSCompare} and Table~\ref{table:BSCompare}), 
revealing the true power of LDP-based importance sampling estimators.

\begin{figure}
    \centering
    \includegraphics[width=0.4\textwidth]{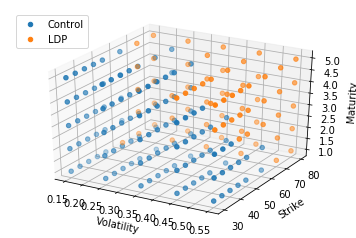}
    \includegraphics[width=0.4\textwidth]{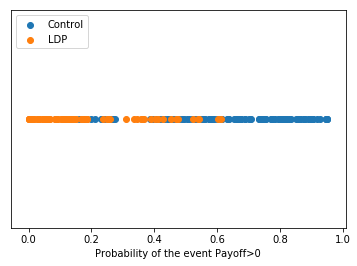}
    \caption{\small{Left: Comparison of the best estimator (in terms of variance) between the control and the LDP one for different values of $\sigma, K, T$.
    Right: Same but plotted againts the probability of a positive payoff (computed using the estimated prices of the options with the values of $\sigma, K, T$ from the left.)
    }}
    \label{fig:BSCompare}
\end{figure}
\begin{figure}
    \centering
    \includegraphics[width=0.4\textwidth]{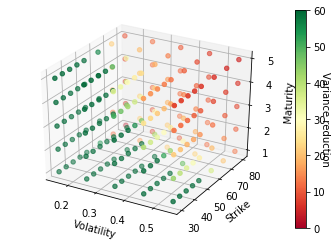}
    \includegraphics[width=0.4\textwidth]{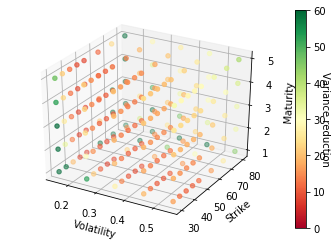}
    \caption{\small{Variance reduction of the control (left) and 
     the LDP (right) estimators}
     }
    \label{fig:BSLDP}
\end{figure}

\begin{table}
\centering
\begin{tabular}{lrrrr}
\toprule
 Strike $K$ &  Antithetic &  Control &  LDP &  Probability of positive Payoff \\
\midrule
     30 &          64 &      769 &   53 &                            0.95 \\
     35 &          59 &      775 &   21 &                            0.95 \\
     40 &          31 &      744 &   10 &                             0.9 \\
     45 &          10 &      575 &  7.9 &                            0.75 \\
     50 &         3.8 &      336 &  8.6 &                             0.5 \\
     60 &         2.2 &       69 &   22 &                            0.11 \\
     70 &         2.0 &       16 &  123 &                           0.013 \\
     80 &         2.3 &      6.9 & 1445 &                          0.001 \\
\bottomrule
\end{tabular}
\caption{\small{Variance reduction for several estimators for Arithmetic Call options in Black-Scholes with $S_0=50$, $r=0.05$, $\sigma = 0.25$ and $T=1$.}}
\label{table:BSCompare}
\end{table}

%Under the new LDP-based law, the price increases (Figure~\ref{fig:BSDrift}). 
%Even though it was expected for high strikes, it is still the case for small strikes.
}
\end{document}